\documentclass[a4paper,fleqn]{cas-sc}

\def\tsc#1{\csdef{#1}{\textsc{\lowercase{#1}}\xspace}}
\tsc{WGM}
\tsc{QE}
\tsc{EP}
\tsc{PMS}
\tsc{BEC}
\tsc{DE}
\usepackage{enumitem}
\usepackage{url}
\usepackage{amsmath}
\usepackage{booktabs}
\usepackage{multirow}
\usepackage{tabularx}
\usepackage{amsthm}
\usepackage{amssymb}
\usepackage{colortbl}
\usepackage{xcolor}
\usepackage{graphicx}
\usepackage[ruled,vlined,linesnumbered]{algorithm2e}
\usepackage{mathtools}
\usepackage{epigraph} 
\usepackage[utf8]{inputenc}
\usepackage{CJKutf8}
\usepackage[encapsulated]{CJK}

\usepackage{caption}

\newtheorem{definition}{Definition}
\newtheorem{reductionrule}{Reduction Rule}
\newtheorem{claim}{Claim}
\newtheorem{openquestion}{Open Question}

\newcommand{\spane}[2]{\partial_{#2}(#1)}
\newcommand{\wi}{\textsf{W[}i\textsf{]}}
\newcommand{\bigo}[1]{O(#1)}
\newcommand{\bigos}[1]{O^*(#1)}
\newcommand{\prob}[1]{{\textsc{#1}}}
\newcommand{\mergemanipulation}{\prob{CD-Mani-Merge}}
\newcommand{\mergemanipulationinf}{\prob{CD-Mani-Merge$^{\infty}$}}
\newcommand{\mergeimprovement}{\prob{CD-Improve-Merge}}
\newcommand{\addimprovement}{\prob{CD-Mani-Add}}
\newcommand{\deleteimprovement}{\prob{CD-Mani-Del}}
\newcommand{\addimprovementinf}{\prob{CD-Mani-Add$^{\infty}$}}
\newcommand{\deleteimprovementinf}{\prob{CD-Mani-Del$^{\infty}$}}
\renewcommand{\ss}{\beta}
\newcommand{\yes}{\mbox{Yes}}
\newcommand{\no}{\mbox{No}}
\newcommand{\yesins}{{\yes}-instance}
\newcommand{\noins}{{\no}-instance}
\newcommand{\poly}{\textsf{P}}

\newcommand{\np}{\textsf{\mbox{NP}}}
\newcommand{\nph}{{\np}-{hard}}

\newcommand{\nphns}{\np-hardness}
\newcommand{\npc}{{\np}-{complete}}
\newcommand{\npcns}{\np-completeness}

\newcommand{\wa}{\textsf{\mbox{W[1]}}}

\newcommand{\wah}{{\textsf{W[1]}}-hard}
\newcommand{\wbh}{{\textsf{W[2]}}-hard}

\newcommand{\wbhns}{{\textsf{W[2]}}-hardness}
\newcommand{\fpt}{\textsf{\mbox{FPT}}}
\newcommand{\paranph}{para{\nph}}
\newcommand{\paranphns}{para{\nphns}}

\newcommand{\thms}[1]{{{(Thm.~\ref{#1})}}}
\newcommand{\cors}[1]{{{(Cor.~\ref{#1})}}}
\newcommand{\xp}{\textsf{XP}}
\newcommand{\abs}[1]{\vert #1\vert}
\newcommand{\edge}[2]{\{#1, #2\}}

\newcommand{\wahns}{\textsf{W[1]}-hardness}

\newcommand{\onlyfull}[1]{}
\newcommand{\onlyconf}[1]{}
\newcommand{\remove}[1]{}

\newcommand{\fnotion}[1]{\textit{#1}}
\newcommand{\mergv}[1]{v_{#1}}
\newcommand{\wrt}{with respect to}
\newcommand{\inneighbor}[2]{\Gamma_{#2}^-(#1)}

\newcommand{\outneighbor}[2]{\Gamma_{#2}^+(#1)}

\newcommand{\neighbor}[2]{\Gamma_{#2}(#1)} 
\newcommand{\neighborc}[2]{\Gamma_{#2}[#1]}

\newcommand{\myuplus}{\uplus}
\newcommand{\citeF}[2]{N_{\text{F}}^{#2}(#1)} 
\newcommand{\nciteF}[2]{n_{\text{F}}^{#2}(#1)} 
\newcommand{\citeB}[2]{N_{\text{B}}^{#2}(#1)} 
\newcommand{\nciteB}[2]{n_{\text{B}}^{#2}(#1)} 
\newcommand{\citeR}[2]{N_{\text{R}}^{#2}(#1)} 
\newcommand{\nciteR}[2]{n_{\text{R}}^{#2}(#1)} 

\newcommand{\vset}[1]{V(#1)} 
\newcommand{\eset}[1]{E(#1)} 
\newcommand{\aset}[1]{A(#1)}  
\newcommand{\cd}{CD}
\newcommand{\nciteX}[2]{n_{\text{X}}^{#2}(#1)}
\newcommand{\citeX}[2]{N_{\text{X}}^{#2}(#1)}

\newcommand{\dindex}[2]{\textsf{CDI}_{#2}(#1)}
\newcommand{\setmid}{:}
\newcommand{\bs}{B}
\newcommand{\rs}{R}
\newcommand{\fno}[1]{{\it{#1}}} 

\newcommand{\arc}[2]{(#1,#2)}
\newtheorem{theorem}{Theorem}
\newtheorem{lemma}{Lemma}
\newtheorem{corollary}{Corollary}

\newcommand{\mergo}[3]{#1^{#2}}

\let\shortcite\cite

\renewcommand{\EP}[3]{
\begin{center}
\renewcommand{\tabcolsep}{0.5mm}
{
\begin{tabular}{|lp{0.88\columnwidth}|}\hline
\multicolumn{2}{|l|}{{#1}} \\ \hline
{\bf Given:}    & #2  \\
{\bf Question:} & #3  \\ \hline
\end{tabular}
}
\end{center}
}

\newcommand{\EPP}[4]{
\noindent\fbox{\begin{minipage}{0.95\textwidth}
\noindent\prob{#1}
\newline\noindent(\prob{#2})\hrule
\begin{description}
\item[Given:]  #3
\item[Question:] #4
\end{description}
\end{minipage}
}
}

\begin{document}
\let\WriteBookmarks\relax
\def\floatpagepagefraction{1}
\def\textpagefraction{.001}
\shorttitle{How Hard Is It to Impact the Impact of Your Paper?}
\shortauthors{Yongjie Yang}

\title{How Hard Is It to Impact the Impact of Your Paper?}                      
\tnotemark[1]

\tnotetext[1]{A preliminary version of the paper appeared in the Proceedings of the 33rd International Joint Conference on Artificial Intelligence (IJCAI 2024)~\protect\cite{YangIJCAI2024}. The current version includes all omitted proofs, and all results for {\mergemanipulationinf} are new.}


\author
{Yongjie Yang}[
                        orcid=0000-0002-7731-6818]
\ead{yyongjiecs@gmail.com}

\credit{Conceptualization of this study, Methodology, Software}

\affiliation
{organization={Department of Economics, 
Saarland University},
                city={Saarbr\"{u}cken},
                postcode={66123}, 
                country={Germany}}



\begin{abstract}
Consolidation-disruption index ({\cd} index) is a metric for qualitatively measuring the contribution of a patent or a research paper. Since its inception, a plethora of experimental studies have been undertaken to explore this index. 
We embark on the study of the complexity of {\cd} index manipulation problems, which model scenarios where a scholar aims to enhance the {\cd} indices of their papers through merging, adding, or removing papers.   
We show that these problems are computationally hard, even when restricted to very realistic special cases. Additionally, we explore how different parameters influence the parameterized complexity of these problems. 
\end{abstract}



\begin{keywords}
citation manipulation \sep paper merging \sep citation digraphs \sep computational social choice \sep parameterized complexity
\end{keywords}

\maketitle

\epigraph{A craftsman must sharpen his tools to do his job\\ \begin{CJK*}{UTF8}{gbsn}\scriptsize{(工欲善其事，必先利其器)}\end{CJK*}}{\textit{Confucius (551--479 BC)}}


\section{Introduction}
Quantitatively evaluating the achievements of researchers plays a significant role in university recruiting, awarding, grant proposal determination, and more (see, e.g., discussions in~\cite{LbrahimLZRScientific-report2025}). Various measures have been proposed, including the h-index, the i$10$-index, and others. These measures consider a researcher's published papers and the number of citations these papers receive, assigning a numerical score where a higher score is deemed more favorable. 

Another relatively recent yet impactful bibliometric is the consolidation-disruption (CD) index~\cite{DBLP:journals/mansci/FunkO17}. Unlike the aforementioned author-level indices, the CD index is calculated at the paper level.  
According to this measure, the contribution of a research paper can fall into the categories of consolidating, disruptive, or somewhere in between. Consolidating implies that a paper adds value to a field by integrating, improving, or expanding upon existing knowledge. Conversely, disruptive indicates that a paper delves into relatively novel directions that may deviate from preceding research, identifies noteworthy flaws in prior studies, or entirely challenges earlier works. 

To compute the CD index of a paper~$p$ in a citation digraph~$D$ (a directed graph whose vertices represent papers and whose arcs represent citations), three sets~$N^{D}_{\text{F}}(p)$,~$N_{\text{B}}^D(p)$, and~$N^D_{\text{R}}(p)$ are identified\footnote{These notations without the superscript~$D$ are commonly used in the literature. The paper whose CD index is analyzed is typically called the focal paper. The subscripts~$\text{F}$, $\text{B}$, and $\text{R}$ stand for ``focal'', ``both'', and ``remaining'', respectively.}:
\begin{itemize}
    \item $N^D_{\text{F}}(p)$ is the set of nonreferences of~$p$ (papers not cited by~$p$), each of which cites~$p$ but none of~$p$'s references. 
    \item $N^D_{\text{B}}(p)$ is the set of nonreferences of~$p$, each of which cites~$p$ and at least one reference of~$p$.
    \item $N^D_{\text{R}}(p)$ is the set of nonreferences of~$p$, each of which does not cite~$p$ but cites at least one reference of~$p$.
\end{itemize}
The CD index of~$p$ is defined as 
$\frac{\abs{N^D_{\text{F}}(p)}-\abs{N^D_{\text{B}}(p)}}{\abs{N^D_{\text{F}}(p)}+\abs{N^D_{\text{B}}(p)}+\abs{N^D_{\text{R}}(p)}}$. See Figure~\ref{fig-cd-index}. 
\begin{figure}
    \centering
    \includegraphics[width=0.65\textwidth]{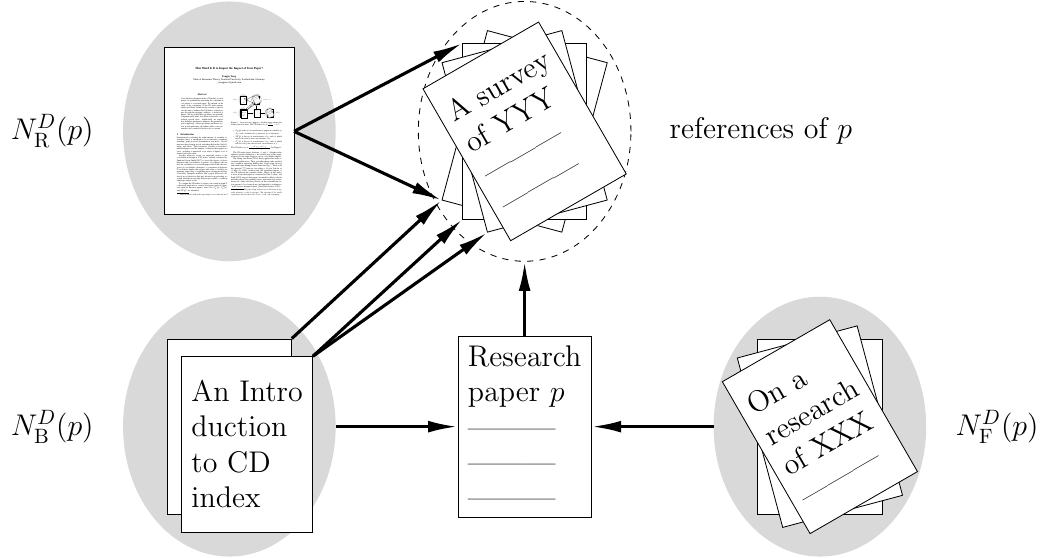}
    \caption{
    An arc from one paper to another indicates that the former cites the latter.
In this example, the CD index of~$p$ equals
$\frac{3-2}{3+2+1}=1/6$}
    \label{fig-cd-index}
\end{figure}
The {\cd} index ranges between~$-1$ and~$1$. A higher value indicates greater disruption to the related area by the paper, whereas a lower value suggests a more consolidating impact~\cite{DBLP:journals/mansci/FunkO17}.
If a paper~$p$ has a {\cd} index of~$1$, then all papers citing~$p$ do not cite any references of~$p$. 
Conversely, if the {\cd} index of~$p$ is~$-1$, then every paper citing~$p$ also cites at least one reference of~$p$.
A paper with a {\cd} index of~$0$ is considered neutral.
In this case, the two sets~$\citeB{p}{D}$ and~$\citeR{p}{D}$ contain the same number of papers. 
See Figure~\ref{fig-cd-index-b} for an illustration.
\begin{figure}
    \centering
    \includegraphics[width=0.95\textwidth]{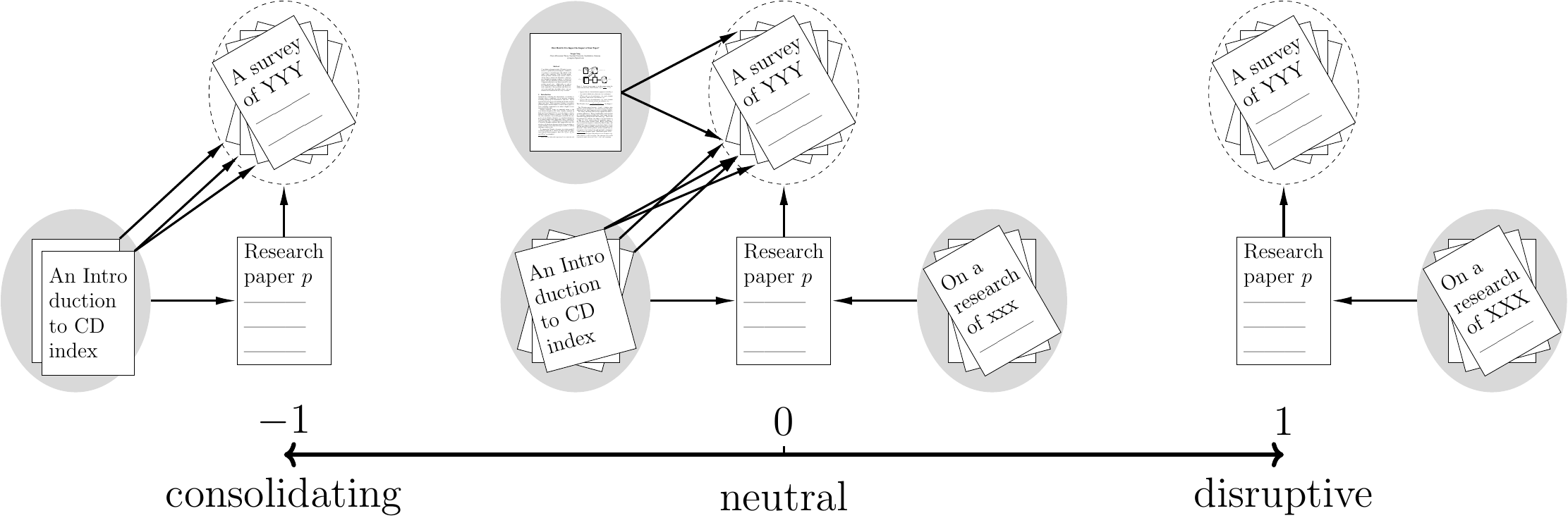}
    \caption{Illustration of citation patterns yielding {\cd} indices of~$-1$,~$0$, and~$1$ for a focal paper~$p$.}
    \label{fig-cd-index-b}
\end{figure}

The {\cd} index was initially examined by Funk and Owen-Smith~\cite{DBLP:journals/mansci/FunkO17} in 2017 to assess the degrees of destabilization and consolidation of patents. 
Subsequently, Wu, Wang, and Evans~\cite{WuWangEvansNature2019} first applied this index to scientific publications in 2019.
Their groundbreaking study revealed the somewhat surprising finding that ``large teams develop and small teams disrupt science and technology''. This influential work has amassed over 1470 citations (according to Google Scholar as of July 15, 2026), underscoring the significant impact of the CD index in the scientific realm. Based on this index, a more recent experimental investigation  conducted by Park, Leahey, and Funk~\cite{ParkLFNature2023} suggests that papers are much less likely to break with the past in ways pushing science and technology in new directions. 
Other scholars pointed out that scientific innovation appears to be slowing down, and appealed to academia to ``Make science disruptive again''~\cite{YanaiLe2023disruptiveagain}.  

However, concerns regarding the reliability of the {\cd} index have prompted public debate
on its application. For example, as noted by Liu, Zhang, and Li~\shortcite{DBLP:journals/joi/LiuZL23},
even minor alterations to a paper's references can significantly affect its {\cd} index.
Such observations have led several researchers to question the validity of the index and to
call for a more rigorous investigation of its theoretical underpinnings before its widespread
adoption~\cite{DBLP:journals/scientometrics/LeibelBornmann2024}.

Motivated by these concerns, we study the manipulability of the {\cd} index from a
computational perspective. Specifically, we investigate the complexity of manipulating
{\cd} indices via operations such as paper merging, addition, and deletion, through the
lens of parameterized complexity. Understanding the computational limits of such
manipulations is essential both for assessing the robustness of the {\cd} index and for
informing the design of more manipulation-resistant bibliometric measures. 
Our problem formulations capture scenarios in which a scholar seeks to improve the
{\cd} indices of a selected set of papers by performing one of the aforementioned
modification operations. The practical relevance of these operations is underscored by
their availability on academic online platforms such as Google Scholar, which allow
scholars to add new manuscripts, delete articles, and merge their own papers. 
Our main contributions are as follows.

\begin{enumerate}
\item[(1)] We initiate the study of {\cd} index manipulation problems, and show that they are computationally hard, even when restricted to very realistic special cases. 

\item[(2)] In the problem of {\cd} index manipulation by merging papers (denoted as {\mergemanipulation}), a scholar seeks to ensure that at least~$\ell$ papers in a citation digraph attain a CD index of at least~$d$ (a threshold deemed satisfactory by the scholar), by performing at most~{$\ss$} merging operations on papers from a given set~$W$, such as papers authored by the scholar. We represent the relationships between papers using a compatibility graph~$H$, where an edge indicates similarity, and only similar papers (forming a clique) are eligible for merging. We show that with respect to the parameters~$\ss$ and~$\ell$,  this problem exhibits {\wahns} or {\paranphns} lower bounds, even when each connected component of~$H$ contains two papers and the citation digraph is acyclic. To complement these results, we also establish {\xp} upper bounds for the problem with respect to the two parameters.

\item[(3)] In the problems of {\cd} index manipulation by adding/deleting papers (denoted as {\addimprovement}/{\deleteimprovement}), a scholar aims to ensure that each paper in a given set~$J$ has a {\cd} index of at least~$d$ by adding/deleting at most~$k$ papers. We explore the parameters~$k$ and~$\abs{J}$. The results for these problems exhibit a diverse range, including both fixed-parameter tractability and fixed-parameter intractability. It is noteworthy that the complexity varies in certain instances when considering cases where $d=1$ and $d<1$.
Moreover, we identify several complexity dichotomies with respect to the parameter~$\abs{J}$. Another interesting finding is that when $d=1$ and the parameter is~$k$, the problem of adding papers is {\wbh}, while the problem of deleting papers is fixed-parameter tractable ({\fpt}).

\item[(4)] We also explore variants of the manipulation problems involving paper merging, addition, and deletion, where the scholar does not concern themselves with the specific number of papers merged, added, or deleted. Our results indicate that, for paper merging and addition, the complexity of the original problem and the variant coincide. In contrast, for paper deletion, the variant proves to be computationally easier than the original formulation.
\end{enumerate}

Our concrete results are summarized in Table~\ref{tab-summary-merge} and Table~\ref{tab-summary-add-deletion}.

The remainder of the paper is organized as follows. Related work is discussed in Section~\ref{sec-related-works}. In Section~\ref{sec-preliminaries}, we introduce preliminary notations, concepts, and background materials that facilitate our study. Section~\ref{sec-problem-formulation} presents the formal definitions of the {\cd} index manipulation problems.
We then present our main results across three dedicated sections, each focusing on one type of modification operation. Specifically, we address manipulation via paper merging in Section~\ref{sec-merging}, paper addition in Section~\ref{sec-addition}, and paper deletion in Section~\ref{sec-deletion}. 
Finally, we conclude the paper in Section~\ref{sec-conclusion}, where we first summarize our main contributions from a retrospective perspective, and then highlight several promising directions for future research.

\begin{table}
 \caption{The parameterized complexity of {\cd} index manipulation by merging papers. The problem is {\fpt} when parameterized by~$\abs{W}$
via brute-force enumeration. 
    ``Acyclic'' (respectively, ``small'') indicates that the results hold even when the citation digraph is acyclic (respectively, when each connected component of the compatibility graph has constant size).}
    \label{tab-summary-merge}
\renewcommand{\tabcolsep}{1.5mm}
    \centering
{
    \begin{tabular}{|l|l|l|l|l|} \hline
\multicolumn{2}{|l|}{} & \multicolumn{2}{l|}{{\mergemanipulation}} & {\mergemanipulationinf}\\ \cline{3-5}
    
        \multicolumn{2}{|l|}{} & $\ss$ & $\ell$ & $\ell$\\ \hline
acyclic
  &   $d\leq 1$
  & $d<1$, $\beta=1$ {\nph} {\thms{thm-merge-d-less-1-ell-1-np-hard}}
  & {$\ell=1$}: {\nph} {\thms{thm-merge-d-1-np-hard}}
  & {$\ell=1$}: {\nph} {\cors{cor-merge-infty-d-1-np-hard}} \\

  &  
  & $d=1$, {$\beta=3$}: {\nph} {\thms{thm-merge-d-1-np-hard}}
  &
  &  \\\hline

acyclic     & $d<1$ &{\wah} (Thm.~\ref{thm-merge-ell-1-hard-small-compatibility}), {\xp} (Thm.~\ref{thm-merge-xp-ss-small-acycli})
        & {$\ell=1$}: {\nph} \thms{thm-merge-ell-1-hard-small-compatibility}
        & {$\ell=2$}: {\nph} ({Thm.~\ref{cor-merge-mani-inf-np-h-j-2-small}})\\ \cline{2-5}

\& small   & $d=1$
&{\wah} (Thm.~\ref{thm-d-1-merge-np-hard-small-comparability}), {\xp} (Thm.~\ref{thm-merge-xp-ss-small-acycli})
&{\wah} (Thm.~\ref{thm-d-1-merge-np-hard-small-comparability}), {\xp} (Thm.~\ref{thm-merge-poly-constant-compatibility-ell-const})
& 
{\xp} {\cors{cor-merge-poly-constant-compatibility-ell-const}} \\ \hline
    \end{tabular}
    }
\end{table}

\begin{table}[ht!]
  \caption{
  The complexity of {\cd} index manipulation by adding or deleting papers. All hardness results except the one for {\deleteimprovementinf} with $d<1$ hold for acyclic citation digraphs. The tractability result for {\deleteimprovementinf} with $d<1$ and $\abs{J}=2$ also holds for acyclic citation digraphs, while all remaining tractability results hold for general citation digraphs.}
    \label{tab-summary-add-deletion}
    \centering
{
    \begin{tabular}{|l|l|l|l|}\hline
         & \multicolumn{2}{c|}{\addimprovement} &  {\addimprovementinf} \\ \cline{2-4}

         & $k$ & $\abs{J}$ & $\abs{J}$ \\ \hline

 $d=1$
 & {\wbh} \thms{thm-add-wbh-k-all-d}
 &{\wah} \thms{thm-add-wah-j-k}
 & {\wah} \thms{thm-add-inf-wah-j-d-1}\\

& {\xp} (trivial)
 &{\xp} \thms{thm-d-1-add-xp-j}
 &{\xp} \thms{thm-d-1-add-xp-j}\\ \hline

 $d<1$
 &{\wbh} \thms{thm-add-wbh-k-all-d}
 & {$\abs{J}=1$}: {\poly} \thms{them-add-p-j-1}
 & {$\abs{J}=1$}: {\poly} \thms{them-add-p-j-1}\\ 

 &{\xp} (trivial)
 & {$\abs{J}=2$}: {\nph} \thms{thm-add-wah-k-j-2-d-smaller}
 & {$\abs{J}=2$}: {\nph} \thms{cor-add-inf-np-hard-j-2-d-smaller}\\ \hline \hline

         & \multicolumn{2}{c|}{\deleteimprovement} &  {\deleteimprovementinf} \\ \cline{2-4}

         &  $k$ & $\abs{J}$ & $\abs{J}$ \\ \hline

 $d=1$
 & {\fpt} \thms{thm-del-d-1-fpt-k}
 & $\abs{J}=1$: {\poly} \thms{thm-del-d-1-p-j-1}
 & {\wah} \thms{thm-d-1-delete-inf-wah-ell}\\

 &
 & $\abs{J}=2$: {\nph} \thms{thm-del-nph-j-2-d-1}
 & {\xp} \thms{thm-del-d-1-xp-j}\\ \hline

 $d<1$
 & {\wah} \thms{thm-delete-improvement-wah}
 & $\abs{J}=1$: {\nph} \thms{thm-delete-improvement-j-paranp-hard}
 & $\abs{J}\leq 2$: {\poly} \thms{thm-inf-poly-j-2}\\ 

 & {\xp} (trivial)
 &
 & $\abs{J}=3$: {\nph} \thms{del-inf-np-hard-j-3}\\ \hline
    \end{tabular}
    }
\end{table}

\section{Related Works}
\label{sec-related-works}
Our study is closely related to and greatly inspired by the works of van Bevern~et~al.~\shortcite{DBLP:journals/qss/BevernKMNSW20,DBLP:journals/ai/BevernKNSW16}, who are the first to explore the complexity of determining if a scholar could improve their h-index by merging, adding, or deleting papers\footnote{The conference versions of the two papers first appeared in 2016 and 2015, respectively.}. In a subsequent paper, Pavlou and Elkind~\shortcite{DBLP:conf/atal/PavlouE16}  studied the complexity of similar problems for the g-index and the i10-index.


From a motivational standpoint, our work also bears relevance to election control problems. These problems model scenarios where an election controller endeavors to influence the election outcome by adding/deleting a limited number of voters/candidates~\cite{Bartholdi92howhard}\footnote{In their seminal paper~\cite{Bartholdi92howhard}, which marked the first systematic exploration of election control problems through a complexity lens, Bartholdi, Tovey, and Trick formally defined the constructive control by adding an unlimited number of candidates. However, they astutely recognized the significance of another variant, wherein the addition of candidates is constrained. Remarkably, their rigorous analysis of the computational complexity of control by adding candidates for the Plurality rule works for both models. 
Their decision to investigate the unlimited candidate addition model stems from its alignment with the investigation of whether a given voting rule adheres to the extensively studied Weak Axiom of Revealed Preference (WARP). 
}%
. We refer to~\cite{BaumeisterR2016,DBLP:conf/ijcai/00010NRSS25,handbookofcomsoc2016Cha7FR} for comprehensive surveys on this line of research.

The {\cd} index can be mathematically regarded as a ``centrality measure'' of vertices in a network. Problems involving improving the influence of a vertex by adding/deleting vertices or edges (or arcs) have been thoroughly explored in the literature. Betzler~et~al.~\shortcite{DBLP:journals/algorithmica/BetzlerBBNU14} investigated the problems of determining whether a given vertex can achieve a maximum/minimum degree by adding/deleting vertices. Variants of the problems restricted to special graph classes have also received attention~\cite{DBLP:journals/fcsc/LiLYY23,DBLP:journals/jda/MishraPD15}. Analogous problems concerning other centrality measures have been studied as well~\cite{DBLP:journals/jea/BergaminiCDMSV18,DBLP:conf/aaai/DAngeloOS19,DBLP:journals/entcs/DAngeloSV16,DBLP:journals/jair/KaczmarekRT25}. 
Furthermore, concepts for quantifying the centrality of groups of vertices and associated centrality-maximizing or -minimizing problems have also been examined~\cite{DBLP:journals/tist/WaniekMWR22,DBLP:journals/tkde/WaniekWZVMR23}. 
For comprehensive summaries of many centrality measures, please refer to~\cite{DBLP:journals/snam/DasSP18,Singh2022,DBLP:journals/access/WanMKMC21}. 

Finally, it is worth noting that  several variants of the CD index have emerged~\cite{LeydesdorffTA2021,DBLP:journals/scientometrics/WangZZ23,DBLP:journals/ipm/YangHZWD23}. Moreover, Jiang and Liu~\shortcite{DBLP:journals/scientometrics/JiangL23} proposed an analogous index for measuring the disruption of journals.

\section{Preliminaries}
\label{sec-preliminaries}
We assume familiarity with basic concepts from graph theory~\cite{DBLP:books/sp/BG2018,Douglas2000}
and (parameterized) complexity theory~\cite{DBLP:books/sp/CyganFKLMPPS15,DBLP:conf/lata/Downey12,DBLP:journals/interfaces/Tovey02}. 
For an integer~$i$, we denote by~$[i]$ the set of all positive integers at most~$i$. 
 An $i$-set is a set of cardinality~$i$; in particular, a $2$-set is called a pair.

\subsection{Citation Digraphs and the {\cd} Index}
A {\fno{citation digraph}}~$D$ is a digraph in which each vertex represents
a scientific paper and an arc~$\arc{v}{u}$ indicates that~$v$ cites~$u$.
We denote by~$\vset{D}$ and~$\aset{D}$ the vertex set and arc set of~$D$, respectively. 
For convenience, we use the terms ``vertex'' and ``paper'' interchangeably. 
For a vertex~$v$ in~$D$, the sets of its {\fno{inneighbors}} and {\fno{outneighbors}} are defined, respectively, as  
$\inneighbor{v}{D}=\{u\in \vset{D}\setminus \{v\} \setmid \arc{u}{v}\in \aset{D}\}$   
and
$\outneighbor{v}{D}=\{u\in \vset{D}\setminus \{v\} \setmid \arc{v}{u}\in \aset{D}\}$. 
The set of {\fno{neighbors}} of~$v$ is the union $\neighbor{v}{D}=\inneighbor{v}{D}\cup \outneighbor{v}{D}$. We refer to the papers in~$\outneighbor{v}{D}$ as the references of~$v$, and those in~$\inneighbor{v}{D}$ as the citations of~$v$. 
Moreover, papers in $\inneighbor{v}{D}\setminus\outneighbor{v}{D}$, i.e., those that cite $v$ but are not cited by $v$, are called non-reciprocal citations of~$v$. 
For $S\subseteq \vset{D}$, let $\outneighbor{S}{D}=(\bigcup_{v\in S}\outneighbor{v}{D})\setminus S$ and let $\inneighbor{S}{D}=(\bigcup_{v\in S}\inneighbor{v}{D})\setminus S$. 
For a paper $v\in \vset{D}$, we define the following three sets:
\begin{itemize}
    \item $\citeF{v}{D}=\{u\in \inneighbor{v}{D}\setminus \outneighbor{v}{D} \setmid \outneighbor{u}{D}\cap \outneighbor{v}{D}=\emptyset\}$, the set of non-reciprocal citations of~$v$ that cite none of its references; 
    \item $\citeB{v}{D}=\{u\in \inneighbor{v}{D}\setminus \outneighbor{v}{D} \setmid \outneighbor{u}{D}\cap \outneighbor{v}{D}\neq\emptyset\}$, the set of non-reciprocal citations of~$v$ that cite at least one of its references;
    \item $\citeR{v}{D}=\{u\in \vset{D}\setminus (\neighbor{v}{D}\cup \{v\}) \setmid \outneighbor{u}{D}\cap \outneighbor{v}{D}\neq \emptyset\}$, the set of papers that cite at least one reference of~$v$ but are neither citations nor references of~$v$.
\end{itemize}
For $X\in \{\text{F}, \text{B}, \text{R}\}$, let $\nciteX{v}{D}=\abs{\citeX{v}{D}}$.


\begin{definition}[{\cd} Index]
    Given a citation digraph~$D$, the {\cd} index of a paper $v \in V(D)$  is defined as
    \[\dindex{v}{D}=\frac{\nciteF{v}{D}-\nciteB{v}{D}}{\nciteF{v}{D}+\nciteB{v}{D}+\nciteR{v}{D}},\]
    if the denominator is nonzero, and is undefined otherwise.
\end{definition}

Notice that, for two papers~$v$ and~$v'$ citing each other,\footnote{This scenario might occur in cases where, for instance, two research groups independently discover the same result almost simultaneously, and the authors mutually agree to acknowledge each other's work for the sake of coordinating publication, academic courtesy, or addressing ethical considerations.} when the {\cd} index of~$v$ is analyzed, the paper~$v'$ is treated as a reference and is therefore excluded from the sets $\citeF{v}{D}$, $\citeB{v}{D}$, and $\citeR{v}{D}$. See Figure~\ref{fig-cd-illustration} for an illustration.

\begin{figure}
    \centering
    \includegraphics[width=0.3\textwidth]{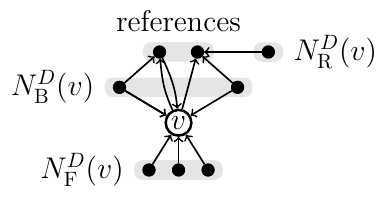}
    \caption{The {\cd} index of $v$ is $\frac{3-2}{3+2+1}=1/6$.}
    \label{fig-cd-illustration}
\end{figure}

A digraph is {\fno{acyclic}} if it contains no directed cycles.
Equivalently, a digraph~$D$ is acyclic if and only if it admits a
{\fno{topological ordering}}; that is, an ordering $(v_1,\dots,v_n)$ of
$\vset{D}$ such that there is no arc $(v_j, v_i)$ with $i < j$ 
(see~\cite[Chapter~3]{DBLP:books/sp/BG2018}).
%

Many of our hardness results hold even when the citation digraph is acyclic. In the corresponding reductions, we explicitly specify topological orderings of the constructed citation digraphs. To this end, for a set $X$, we use $\overrightarrow{X}$ to denote an arbitrary fixed ordering of the elements of $X$. For two orderings $\overrightarrow{X}$ and $\overrightarrow{Y}$,  $(\overrightarrow{X}, \overrightarrow{Y})$ denotes their concatenation. For example, if $\overrightarrow{X} = (x_1, x_2)$, $\overrightarrow{Y} = (y_1, y_2, y_3)$, and $a$ is an element, then $(a, \overrightarrow{X}, \overrightarrow{Y})$ corresponds to the ordering $(a, x_1, x_2, y_1, y_2, y_3)$.


\subsection{The Merging Operation}
Before formally introducing the manipulation problems, we need to clarify a few points.

First, in practice, scholars usually only merge papers with similar titles or topics.
To formulate this assumption, we adopt the notion of {\fnotion{compatibility graph}} used by van Bevern~et~al.~\shortcite{DBLP:journals/ai/BevernKNSW16}.
A compatibility graph is an undirected graph~$H$ where vertices represent papers, and an edge between two papers represents that the two papers share some kind of similarity. A subset of papers can be merged into one paper only if these papers are pairwise adjacent in~$H$.

Second, it is essential to elucidate the {\cd} index of a paper subsequent to certain merging operations. In the study of the h-index manipulation problems, van Bevern~et~al.~\shortcite{DBLP:journals/ai/BevernKNSW16} examined three methodologies to ascertain the citation count a paper accrues post-merging operations. In our framework, not only the quantity but also the categories 
of these citations hold significance.  Hence, the straightforward adoption of the three approaches is not viable. Nevertheless, the fundamental principle of an approach employed by 
van Bevern~et~al.~\cite{DBLP:journals/ai/BevernKNSW16} posits that merging a set of papers entails considering this set as a singular entity. We find this approach most natural and thus incorporate it into our study.

A lingering puzzle remains: how do we handle self-citations? In the majority of prior related research on the {\cd} index, self-citation is typically excluded or encountered  infrequently in experimental works. However, the act of merging papers has the potential to transform a citation digraph that initially lacks self-citation into one where self-citation is present. This transformation is exemplified when a paper is merged with some of its references. Such a scenario is not uncommon, particularly in computer science areas, where merging the journal version and its conference versions occurs, with the journal version often citing the conference versions. A more detailed discussion on self-citations can be found in~\cite{HaunschildBornmann2025}.  
We emphasize that all our hardness reductions are meticulously designed to prevent the occurrence of self-citations, rendering our hardness results irrelevant to this concern. Additionally, our {\xp} algorithms are designed to handle both scenarios.

Having resolved potential confusion, we formalize the aforementioned discussion mathematically.

\begin{definition}[Merging Operation]
\label{def-merging}
A merging operation on a subset of vertices $P \subseteq V(D)$ transforms a citation digraph $D$ into a new digraph $\mergo{D}{P}{M}$ as follows:
\begin{enumerate}
  \item[(1)] Create a new vertex~$\mergv{P}$, set its outneighborhood as~$\outneighbor{P}{D}$, and its inneighborhood as~$\inneighbor{P}{D}$. 
    \item[(2)] Remove all vertices in~$P$ from~$D$.
\end{enumerate}
The vertex $\mergv{P}$ is called the compound vertex of $P$, and each vertex in $P$ is called an atom vertex of $\mergv{P}$.
\end{definition}

\begin{figure}
    \centering
    \includegraphics[width=0.95\textwidth]{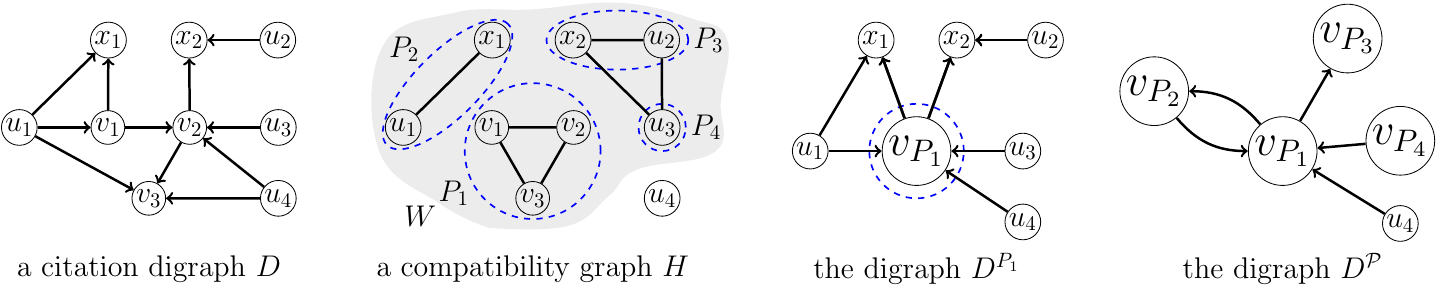}
    \caption{Illustration of Definition~\ref{def-merging}.
Here, $W = V(D)\setminus\{u_4\}$, and
$\mathcal P = \{P_1,P_2,P_3,P_4\}$ is a partition of~$W$ 
that complies with~$H$.
Merging the papers in~$P_1$ yields~$D^{P_1}$,
while merging all papers according to~$\mathcal P$
yields~$D^{\mathcal P}$.
The {\cd} index of the compound vertex~$v_{P_1}$
equals $\frac{2-1}{2+1+1}= \frac14$ in~$D^{P_1}$
and equals~$1$ in~$D^{\mathcal P}$.}
    \label{fig-merge-a}
\end{figure}
For a partition~$\mathcal{P}$ of a subset $W\subseteq V(D)$, let $\mergo{D}{\mathcal{P}}{M}$ denote the citation digraph obtained from~$D$ by performing merging operations on all $P\in \mathcal{P}$ sequentially. Since $\mathcal{P}$ is a partition, the order of these operations has no effect on the final graph. 
For $S\subseteq W$, define $S^{\mathcal{P}}=\{v_{P} \setmid P\in \mathcal{P}, P\cap S\neq\emptyset\}$.
Thus,~$W^{\mathcal{P}}$ is exactly $\{\mergv{P} \setmid P\in \mathcal{P}\}$. 
Let $\mathcal{P}^{\geq 2}=\{P\in \mathcal{P} \setmid \abs{P}\geq 2\}$ be the set of nonsingleton sets in~$\mathcal{P}$. 
We say that~$\mathcal{P}$ {\it{complies with}} a compatibility graph~$H$ if every $P\in \mathcal{P}$ induces a clique in~$H$.
Figure~\ref{fig-merge-a} illustrates the merging operation and related discussion.

\subsection{Parameterized Complexity}
An instance of a {\fnotion{parameterized problem}} is a tuple $(I, \kappa)$, where $I$ is the main part and $\kappa$ is the parameter, typically a nonnegative integer. A parameterized problem is {\fpt} if it can be solved in time $f(\kappa) \cdot \abs{I}^{O(1)}$, and it belongs to  {\xp} if it can be solved in time $\abs{I}^{f(\kappa)}$, where $f$ is a computable function. For a fixed parameter~$\kappa$, both {\fpt} and {\xp} algorithms run in polynomial time. 
A well-established hierarchy of complexity classes is:
\[\fpt\subseteq W[1]\subseteq W[2]\subseteq \cdots\subseteq \xp.\] For $i\geq 1$, a problem is {\textsf{W[$i$]-hard}} if all problems in {\textsf{W[$i$]}} are parameter-reducible to it.
Unless ${\fpt} = {\wi}$, {\textsf{W[$i$]}}-hard problems do not admit {\fpt} algorithms. A problem is {\paranph} if it remains {\nph} even when the parameter is fixed to a constant. Under the assumption that $\text{\poly} \neq \text{\np}$, such problems do not even admit {\xp} algorithms. For further background, we refer the reader to~\cite{DBLP:books/sp/CyganFKLMPPS15,DBLP:conf/lata/Downey12}.

\subsection{Source Problems for Reductions}
This section introduces the problems from which we reduce to establish our hardness results. 

Let $G$ be an undirected graph. The vertex and edge sets of $G$ are denoted by $\vset{G}$ and $\eset{G}$, respectively. An edge between two vertices~$v$ and~$u$ is denoted as $\edge{v}{u}$. The set of {\it{neighbors}} of~$v$ is $\neighbor{v}{G}=\{u\in \vset{G} \setmid \edge{v}{u}\in \eset{G}\}$. 
Every vertex in~$G$ {\it{dominates}} itself and each of its neighbors. A subset~$S\subseteq \vset{G}$ dominates a subset $S'\subseteq \vset{G}$ if every vertex in~$S'$ is dominated by at least one vertex in~$S$. A vertex {\it{covers}} an edge if it is an endpoint of that edge. For $S\subseteq \vset{G}$, let $\spane{S}{G}=\{e\in \eset{G} \setmid e\cap S\neq \emptyset\}$ denote the set of edges covered by $S$. We use~$G[S]$ to denote the {\fno{subgraph}} of~$G$ induced by~$S$, and use $G-S$ to denote the graph obtained from~$G$ by deleting all vertices in~$S$
. 

A subset $S\subseteq \vset{G}$ is a {\it{vertex cover}} of~$G$ if~$S$ covers all edges of~$G$, i.e., $\spane{S}{G}=\eset{G}$. 
An {\fno{independent set}} of~$G$ is a set of pairwise nonadjacent vertices, and a {\fno{clique}} is a set of pairwise adjacent vertices. A $\kappa$-clique is a clique of size~$\kappa$.  
If~$V(G)$ can be partitioned as $(X, Y)$, where both~$X$ and~$Y$ are independent sets, then $G$ is {\fno{bipartite}}, and $(X,Y)$ is called a bipartition of~$G$. Analogously,~$G$ is {\fno{tripartite}} if~$\vset{G}$ can be partitioned into three independent sets.

\EP{\prob{Red-Blue Dominating Set} (\prob{RBDS})}
{A bipartite graph~$G$ with a bipartition $(\rs, \bs)$, and a positive integer~$\kappa\leq \abs{B}$.}
{Is there a subset $\bs'\subseteq \bs$ of cardinality~$\kappa$ that dominates~$R$ in~$G$?}

The vertices in~$\rs$ and~$\bs$ are referred to as {\fno{red vertices}} and {\fno{blue vertices}}, respectively. 
{\prob{RBDS}} is {\nph}~\cite{garey} and is {\wbh} when parameterized by~$\kappa$~\cite{fellows2001}. 

\EP{\prob{Vertex Cover}}
{A graph~$G$ and a positive integer $\kappa\leq \abs{\vset{G}}$.}
{Does $G$ contain a vertex cover of $\kappa$ vertices?}

{\prob{Vertex Cover}} is {\nph} even when restricted to $3$-regular and tripartite graphs~\cite{KartochvilN1990}. In contrast, it is polynomial-time solvable on bipartite graphs~\cite{DBLP:journals/siamcomp/HopcroftK73,Karzanov1973}.

The\onlyfull{{\prob{Independent Set}} and} {\prob{Clique}} problem take the same inputs as {\prob{Vertex Cover}} and asks whether $G$ contains\onlyfull{an independent set or} a $\kappa$-clique\onlyfull{, respectively}. The pproblem is {\nph} and {\wah} when parameterized by $\kappa$, even on regular graphs~\cite{DBLP:journals/cj/Cai08}. 

\EP{\prob{Multicolored Clique}}
{A graph $G$ together with a partition $(V_1,\dots,V_{\kappa})$ of $V(G)$, where $\kappa$ is a positive integer.}
{Does $G$ contain a clique with exactly one vertex from each $V_i$ for $i\in[\kappa]$?}

{\prob{Multicolored Clique}} is {\nph} and {\wah} with respect to~$\kappa$~\cite{DBLP:journals/tcs/FellowsHRV09}. 


\EP{\prob{$3$-Coloring}}
{A graph~$G$.}
{Is $G$ tripartite?
}

{\prob{$3$-Coloring}} is {\nph}~\cite{DBLP:conf/coco/Karp72}.

\remove{
\EP{\prob{Maximum Unsatisfiability} (\prob{MaxUSAT})}
{A boolean formula~$F$ in CNF and an integer~$\kappa$.}
{Does there exist a truth assignment that leaves at least $\kappa$ clauses of $F$ unsatisfied?}

\prob{MaxUSAT} is equivalent to the {\prob{Minimum Satisfiability}} problem (\prob{MinSAT}), which seeks a truth assignment that satisfies at most a given number of clauses. It is known that \prob{MinSAT} remains {\nph} even when each clause has at most two literals, with at most one being positive~\cite{DBLP:journals/siamdm/KohliKM94}. For one of our results, we slightly strengthen this hardness via a straightforward reduction.


\begin{lemma}
\label{lem-max-usat-wah}
     {\prob{MaxUSAT}} is {\nph}, even when restricted to CNF formulas where each clause has at most three literals and exactly one positive literal.
\end{lemma}

\begin{proof}
    Let $(F, \kappa)$ be an instance of {\prob{MaxUSAT}}, where each clause in $F$ contains two literals and at most one positive literal. We introduce a new variable $y$ and add the literal $y$ to each clause of $F$ with only negative literals. Let $F'$ be the resulting formula; each clause of $F'$ contains at most three literals and exactly one positive literal.     
It is easy to verify that $F$ has a truth assignment leaving at least $\kappa$ clauses unsatisfied if and only if $F'$ does. Specifically, any assignment for $F$ can be extended by setting $y=0$, which leaves the same clauses unsatisfied in $F'$. Conversely, in any optimal assignment for $F'$, we may assume $y=0$, as setting $y=1$ can only decrease the number of unsatisfied clauses. 
\end{proof}
}

\section{The Manipulation Problems}
\label{sec-problem-formulation}
We now formally introduce the {\cd} index manipulation problems, starting with the one associated with the paper merging operation. 
This problem models a scenario in which a scholar, allowed to merge a set~$W$ of papers, aims to ensure that at least~$\ell$ of these papers achieve a satisfactory {\cd} index after merging.

\EP{\prob{{\cd} Index Manipulation By Merging Papers} (\mergemanipulation)}
{A citation digraph~$D$, a subset $W\subseteq \vset{D}$, a compatibility graph~$H$ with $\vset{H}=\vset{D}$, two integers~$\ell$,~$\ss$, and a positive rational number~$d\leq 1$.}
{Is there a partition~$\mathcal{P}$ of~$W$ that complies with~$H$ such that $\abs{\mathcal{P}^{\geq 2}}\leq \ss$, and
     at least~$\ell$ papers in~$W^{\mathcal{P}}$ have {\cd} index at least~$d$ in~$\mergo{D}{\mathcal{P}}{\mu}$?
}

Here, the scholar is allowed to perform at most~$\ss$ merging operations ($\abs{\mathcal{P}^{\geq 2}}\leq \ss$). We also study a natural variant,  {\mergemanipulationinf}, where this restriction is lifted: the problem  asks whether there exists a partition~$\mathcal{P}$ of~$W$ that complies with~$H$ such that at least~$\ell$ papers in~$W^{\mathcal{P}}$ have  {\cd} index at least~$d$ in~$\mergo{D}{\mathcal{P}}{\mu}$, without bounding the number of merge operations.

The subsequent manipulation problems model a scenario where a scholar has selected a set~$J$ of papers published in reputable venues, or papers considered robust according to other metrics. This selection aims to strengthen her application for a research grant (or a scientific position, etc.). Some of these papers exhibit relatively low {\cd} indices. The grant committee, prioritizing projects with innovative ideas or those opening new directions, highly values such aspects. Assume that an online platform, endorsed by the granting entity, offers {\cd} index calculation functions, and that scholars are permitted to add new papers (in a set denoted~$U$) or remove existing ones (in a set denoted~$V$) from the system. To increase the chance of success, the scholar seeks to determine if she can elevate the {\cd} indices of all selected papers to a satisfactory level by adding or deleting at most~$k$ papers. 
The two problems are formally defined as follows.

For a citation digraph~$D$ and a subset $S\subseteq \vset{D}$,~$D[S]$ is the subgraph of~$D$ induced by~$S$, and $D-S$ denotes the citation digraph obtained from~$D$ by removing all papers in~$S$ (and arcs incident to them).

\onlyfull{
\EP{\mergeimprovement}
{A citation digraph~$D$, two subsets $W\subseteq \vset{D}$ and $J\subseteq W$, a compatibility graph~$H$ such that $\vset{H}=\vset{D}$.}
{$\exists$ a partition~$\mathcal{P}$ of~$W$ that complies with~$H$ such that for each $P\in \mathcal{P}$ it holds that $\abs{P\cap J}\leq 1$, and for every $P\in \mathcal{P}$ such that $\abs{P\cap J}=1$, it holds that the {\cd} index of the compound paper~$\mergv{P}$ in~$\mergo{D}{\mathcal{P}}{\mu}$ is strictly larger than that of the paper in $P\cap J$ in~$D$?}

In particular, in this setting paper addition, there is a set~$V$ of papers that are already registered in the system (e.g., papers updated in the google scholar profile), and a set~$U$ of papers that have not registered (e.g., newly accepted papers that have not been automatically updated, and but the scholar determine to manually update some of them to the system). As in the above problem, the scholar has a set~$J$ of preferred papers from~$V$, and they aim to improve the {\cd} indices of papers from~$J$ to a threshold value by selecting which papers from~$U$ are updated to the system. In the context of paper deletion, the scholar has no new accepted papers, and thus they strive to achieve their goal by deleting papers from $V\setminus J$. The two problems are formally defined below.
}

\EP{\prob{{\cd} Index Manipulation By Adding Papers} (\addimprovement)}
{A citation digraph~$D$, a tuple $(V, U)$ of two disjoint subsets of papers in~$D$
, a subset $J\subseteq V$, a nonnegative integer~$k$, and a positive rational number~$d\leq 1$.}
{Is there  $U'\subseteq U$ with $\abs{U'}\leq k$ such that every paper in $J$ has {\cd} index at least~$d$ in $D[V\cup U']$?}

\EP{\prob{{\cd} Index Manipulation By Deleting Papers} (\deleteimprovement)}
{A citation digraph~$D$, a subset $J\subseteq \vset{D}$, a nonnegative integer~$k$, and a positive rational number~$d\leq 1$.}
{Is there $V'\subseteq \vset{D}\setminus J$ with $\abs{V'}\leq k$ such that every paper in~$J$ has {\cd} index  at least~$d$ in $D-V'$?}

For clarity, in the context of {\addimprovement}, papers in~$V$ are referred to as {\fno{registered papers}} and those in~$U$ as {\fno{unregistered papers}}. In both {\addimprovement} and {\deleteimprovement}, the papers in~$J$ are called {\fno{distinguished papers}}. 

We also explore natural variants without the addition/deletion budget~$k$. {\addimprovementinf} and {\deleteimprovementinf} take the same input as {\addimprovement} and {\deleteimprovement}, respectively, but without~$k$, and ask whether there exists $U'\subseteq U$ (respectively, $V'\subseteq \vset{D}\setminus J$) such that every paper in~$J$ has {\cd} index at least~$d$ in $D[V\cup U']$ (respectively, $D-V'$).

The following sections establish the complexity of the {\cd} index manipulation problems. 
We remark that all reductions presented in this paper run in polynomial time, and all {\cd} manipulation problems are in {\np}. Consequently, if a problem is shown to be {\wah} or {\wbh} with respect to a given parameter, it is also {\npc}. To avoid repetition, we do not explicitly state {\npcns} results in such cases. 

\section{Manipulation by Paper Merging}
\label{sec-merging}
This section studies the complexity of {\mergemanipulation}. We first show that the problem is computationally hard even in special cases.

\onlyfull{
\begin{theorem}
For $d=0$, {\mergemanipulation} is {\nph} and is {\wbh} when parameterized by~$k$, even when $\ell=1$ and the citation digraph is acyclic\onlyfull{ with each vertex having at most two outneighbors}.
\end{theorem}

\begin{proof}
    We prove the theorem through a reduction from {\prob{RBDS}}. Let $(G, \kappa)$ be an instance of {\prob{RBDS}} where~$G$ is a bipartite graph with the bipartition $(R, B)$. We construct an instance of {\mergemanipulation} as follows. We first construct a citation digraph~$D=(V, A)$. In particular, we let $V=R\myuplus B\myuplus \{v^{\star}, u, w, w'\}\myuplus X$, where~$v^{\star}$ and~$u$ are two vertices, and~$X$ is a set of~$\kappa$ vertices. The citation digraph~$G$ contains following arcs:
    \begin{itemize}
        \item For each $x\in X$, there is an arc from~$x$ to~$v^{\star}$.
        \item There is an arc from~$v^{\star}$ to~$u$.
        \item For each $b\in B$, there is an arc from~$b$ to~$u$.
        \item For each $r\in R$, there is an arc from~$r$ to~$v^{\star}$ and an arc from~$r$ to~$u$.
        \item For each $r\in R$, there is an arc from~$w$ to~$r$.
        \item For each $b\in B$, there is an are from~$w'$ to~$b$.
        \item There is an arc from each of~$\{w,w'\}$ to~$u$.
    \end{itemize}
    We let~$A$ be the set of the above constructed arcs.
    To construct a compatibility graph $H=(V, E)$, we need only to consider its edge set. In particular, initiating with an empty graph $(V, \emptyset)$, we iteratively add the following edges: for each $b\in B$, we add edges among~$\neighborc{b}{G}$ such that~$\neighborc{b}{G}$ is a clique.~$H$ is then the resultant graph. It is clear that~$H$ is acyclic and each vertex has at most two outneighbors. Finally, we let $W=R\cup B\cup \{v^{\star}\}$, let $\ell=1$, let $d=0$, and let $k=\kappa$. The instance of {\mergemanipulation} is $(D, G, W, \ell, d, k)$. In the following, we show the correctness of the reduction.

    $(\Rightarrow)$ Assume that there is a $\kappa$-subset $B'\subseteq B$ that dominate~$R$ in~$G$. We obtain a natural partition $(C(b))_{b\in B'}$ of $R\cup B'$, where each~$C(b)$ is a subset of~$\neighborc{b}{G}$ (Note that there might be multiple of such partitions, and we take just any of them.) By the definition of~$G$, each~$C(b)$ is a clique in~$G$, and hence can be merged together. After merging papers with respect to this partition, there are exactly~$\kappa$ papers (those in $\{\mergv{C(b)} \setmid b\in B'\}$) citing~$v^{\star}$ and the reference~$u$ of~$v^{\star}$. As there are exactly~$k$ papers (those in~$X$) citing~$v^{\star}$ but not~$u$, we know that the {\cd} index of~$v^{\star}$ after the above merges increases to~$0$. So, the constructed instance of {\mergemanipulation} is a {\yesins}.

    $(\Leftarrow)$ Assume that there is a $\mathcal{P}$ partition of~$W$ such that at least one paper gets positive {\cd} index after the corresponding merging. Observe that every paper in $R\cup B$ has negative {\cd} index no matter how to merge (because of the paper~$w$ which cite each paper in $R\cup B$ and their reference~$u$). Therefore, we know that~$v^{\star}$ is the only possible paper which may have positive {\cd} index among all papers in~$W$ by paper merging. As there are exactly~$k$ papers citing~$v^{\star}$ in~$G$, and all these papers are from~$X$ which is disjoint from~$W$, to make~$v^{\star}$ have a nonnegative {\cd} index, we need to decrease the number of papers which cite both~$v^{\star}$ and~$u$ to at most~$k$. This implies that in a desired partition,~$R$ are partitioned into at most~$k$ sets. Let~$R_1$,~$R_2$,~$\dots$,~$R_{k'}$ be such a partition where~$k'\leq k$. It follows that each~$R_i$ induces a clique in the compatibility graph~$H$. By the construction, for each~$R_i$, there is at least one $b\in B$ which dominates all vertices in~$R_i$. It is then easy to see that all these at most~$k'$ vertices from~$B$ dominate~$R$, and hence the {\prob{RBDS}} instance is a {\yesins}.
\end{proof}

The above reduction applies to the case where $d=0$, but does not apply to other cases. However, improving the {\cd} index to a neutral level might not be satisfactory. So far we have not discussed the question of which {\cd} index can be considered a relatively good paper in terms of its disruptiveness. 
Park, Leahey, and Funk~\shortcite{ParkLFNature2023} calculated that pioneering work of Watson and Crick~\shortcite{WatsonCrickNature1953} which  has a {\cd} index 0.62, which may provide a convincing answer to the question. The following two theorems complement the above result by confirming the hardness of the {\mergemanipulation} problem for any positive {\cd} index threshold~$d$.

\begin{theorem}
For 
any positive rational constant $d<1$, {\mergemanipulation} is {\nph} and is {\wbh} with respect to~$k$. This holds even when $\ell=1$, and when the citation digraph is acyclic\onlyfull{ with each paper having at most one outneighbors}.
\end{theorem}

\begin{proof}
Let~$\frac{p}{q}$ be the canonical form of~$d$, i.e., $d=\frac{p}{q}$ such that~$p$ and~$q$ are coprime integers and $0<p<q$. Let $t=q-p$. Let $(G, \kappa)$ be an instance of {\prob{RBDS}}. We create an instance $(D, W, H, \ell, k, d)$ of {\mergemanipulation} as follows.

\begin{description}
\item[Citation digraph~$D$.]    First, we create~$t$ copies of $R\cup B$. For each $v\in R\cup B$ and $i\in [t]$, we use~$v(i)$ to denote the $i$-th copy of~$v$. Let $C(R)=\{r(i) \setmid i\in [t], r\in R\}$ be the set of copies of~$R$, and let~$C(B)$ be the set of copies of~$B$. Second, we create a set~$X$ of~$t\cdot \kappa$ vertices. Third, we create two vertices~$v^{\star}$ and~$u$. This completes the construction of vertices of~$D$. So, we have that $\vset{D}=C(R)\cup C(B)\cup X\cup \{v^{\star}, u\}$. The arcs in~$D$ are constructed so that exactly the following citations exist:
    \begin{itemize}
        \item All papers in $C(R)$ cite~$u$.
        \item All papers in~$X$ cite~$v^{\star}$.
        \item $v^{\star}$ cites~$u$.
    \end{itemize}

\item[$W$: papers allowed to merge.] Let $W=C(R)\cup C(B)\cup v^{\star}$.

\item[Compatibility graph~$H$.] For the compatibility graph~$H$ has the same vertex set as~$D$. The edges of $H$ are added as follows: for each~$i\in [t]$ and each~$b\in B$, add edges so that  $\{r(i) \setmid r\in \outneighbor{b}{G}\}\cup b(i)$ form a clique in~$H$.
\end{description}

Finally, let $k=t\cdot \kappa$ and let $\ell=1$. Note that~$t$ is a constant and hence~$k$ is linear in $\kappa$.
The above construction clearly can be performed in polynomial time. We prove the correctness of the reduction below.

    $(\Rightarrow)$ If there is a $\kappa$-subset $B'\subseteq B$ which dominate~$R$.
\end{proof}

\begin{theorem}
\label{thm-merge-compatibility-small}
For any positive rational number~$d<1$, {\mergemanipulation} is {\wbh} when parameterized by~$\ss$. Moreover, this holds even if each connected component of the compatibility graph has at most two vertices, and the citation digraph is acyclic so that every vertex has outdegree at most two.
\end{theorem}

\begin{proof}
    Let $d$ be a positive rational number, and let $\frac{p}{q}$ be the canonical form of~$d$. We prove the theorem via a reduction from {\prob{RBDS}}. Consider an instance $I=(G, \kappa)$ of {\prob{RBDS}} where~$G$ is a bipartite graph with the bipartition $(R, B)$. Without loss of generality, we assume that~$\abs{B}$ is divisible by $q-p$\footnote{If this is not the case, we can expand~$B$ by introducing isolated vertices until the condition is met.}. We construct an instance $g(I)=(D, W, H, \ell, \ss, d)$ of {\mergemanipulation} as follows (see Figure~\ref{fig-merge-compatibility-small}):
    \begin{description}
        \item[Citation digraph~$D$.] For each $r\in R$, we create one vertex in~$D$ denoted by the same symbol for ease of presentation. For each $b\in B$, we create two vertices denoted by~$b(1)$ and~$b(2)$. For $i\in [2]$, let $B(i)=\{b(i) \setmid b\in B\}$. In addition, we create four vertices:~$u$,~$\overline{u}$,~$x$, and~$y$, along with a set $Z$ of $\frac{2p+q}{q-p} \cdot \abs{B}$ vertices. This completes the construction of the vertices of~$D$. So, $\vset{D}=R\cup B(1)\cup B(2)\cup Z\cup \{u, x, y\}$. Arcs in~$D$ are created so that exactly the following citations exist:
    \begin{itemize}
        \item For each edge $\edge{b}{r}$ in $G$ where $b\in B$ and $r\in R$, the paper ${b(1)}$ cites~${r}$.
        \item All papers from $Z$ cite all papers from $R$.
        \item All papers from $R\cup B(1)\cup B(2)\cup \{\overline{u}\}$ cite~$u$.
        \item All papers from $B(1)\cup B(2)\cup \{y\}$ cite~$x$.
        \item $y$ cites all papers from $B(1)\cup B(2)$.
    \end{itemize}

    \item[Papers allowed to be merged] $W=R\cup B(1) \cup B(2)$.
    \item[Compatibility graph~$H$.] The edge set of $H$ is $\{\edge{b(1)}{b(2)} \setmid b\in B\}$.
    \end{description}
    \begin{figure}
        \centering
        \includegraphics[width=0.25\textwidth]{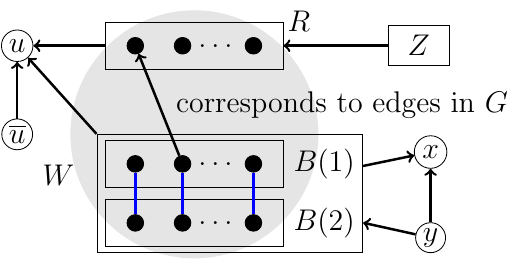}
        \caption{An illustration of the construction of $D$, $W$, and $H$ in the proof of Theorem~\ref{thm-merge-compatibility-small}. Blue edges are edges in the compatibility graph $H$.}
        \label{fig-merge-compatibility-small}
    \end{figure}
     The construction is completed by setting $\ell=\abs{R}$ and $\ss=\kappa$.
     It remains to prove the correctness of the reduction.

    $(\Rightarrow)$ Assume there is a subset $B'\subseteq B$ of cardinality~$\kappa$ that dominate~$R$ in~$G$. Consider the citation digraph, denoted as~$D'$, obtained from~$D$ by merging every pair of papers $\{b(1), b(2)\}$ where $b\in B'$. These merges comply with the compatibility graph~$H$ by the definition of~$H$. We show below that the {\cd} index of every paper from~$R$ in~$D'$ is at least~$d$, implying that the constructed instance~$g(I)$ is a {\yesins}, provided that $\ell=\abs{R}$ and exactly $\ss=\kappa$ merges are made.

    Let $r\in R$ be any paper from~$R$. Define $B'(r)=\{b(2) \setmid b\in B', \edge{b}{r}\in \eset{D}\}$ as the set of papers from~$B(2)$ corresponding to the vertices of~$B$ in~$G$ that dominate~$r$. Since~$B'$ dominates~$R$, it holds that $B'(r)\neq \emptyset$.  Then, the {\cd} index of~$r$ is at least
    \begin{equation}
    \label{eq-a}
       \begin{split}
              \frac{\abs{Z}-\abs{B(1)}}{\abs{Z}+\abs{B(1)}+\abs{B(2)\setminus B'(r)}+\abs{\{\overline{u}\}}}\\
              \geq \frac{\abs{Z}-\abs{B}}{\abs{Z}+\abs{B}+(\abs{B(2)}-1)+1}\\
              =  \frac{((2p+q)/(q-p))-1}{((2p+q)/(q-p))+2}=\frac{p}{q}=d.
     \end{split}
    \end{equation}
    We can conclude now that~$g(I)$ is a {\yesins}.

    $(\Leftarrow)$ Assume that $g(I)$ is a {\yesins}. By the definition of~$D$, none of $B(1)\cup B(2)$ is cited by any papers in~$D$. Consequently, the existence of~$x$ and~$y$ implies that every paper from $B(1)\cup B(2)$ has a negative {\cd} index, regardless of how we merge papers inside~$W$. As $W\setminus (B(1)\cup B(2))=R$ and $\ell=\abs{R}$, we deduce the existence of a partition~$\mathcal{P}$ of~$W$ that complies with the compatibility graph~$H$ such that $\abs{\mathcal{P}^{\geq 2}}\leq k=\kappa$ and every paper from~$R$ has a {\cd} index of at least~$d$ in the resultant citation digraph~$\mergo{D}{\mathcal{P}}{M}$. By the definition of~$H$, every nonsingelton in~$\mathcal{P}$ is a pair of vertices $\{b(1), b(2)\}$ where $b\in B$. Let $B'=\{b\in B\setmid \{b(1), b(2)\}\in \mathcal{P}\}$. For every $r\in R$, let $B'(r)=\{b\in B'\setmid r\in \neighbor{b}{G}\}$ be the set of blue vertices from~$B'$ that dominate~$r$ in the graph~$G$. Then, the {\cd} index of $r\in R$ can be expressed by the first term in Equation~\ref{eq-a}, as in the above proof for the ($\Rightarrow$) direction. Expanding this term gives us
    \begin{equation}
        \label{eq-b}
        \frac{((2p+q)/(q-p))-1}{((2p+q)/(q-p))+2+(1-\abs{B'(r)})/\abs{B}}.
    \end{equation}
    From $d=\frac{p}{q}$, we obtain $\frac{2p+q}{q-p}=\frac{2d+1}{1-d}$. Putting it into Equality~\eqref{eq-b} reduces the term in \eqref{eq-b} as $\frac{d}{1+\epsilon}$, where
    \[\epsilon=\frac{1-\abs{B'(r)}}{\abs{B}} \cdot \frac{1-d}{3}.\]
    Clearly, $\frac{d}{1+\epsilon}\geq d$ if and only if $\epsilon\leq 0$. Thus, we have ${B'(r)}\neq \emptyset$. This implies that~$B'$ dominates $R$.
    As $\abs{B'}=\abs{\mathcal{P}^{\geq 2}}\leq \kappa$, we can conclude now that the instance of {\prob{RBDS}} is a {\yesins}.
\end{proof}

Given the hardness of {\mergemanipulation}, let us begin exploring potential paths toward tractability. First, in the above reduction, the number~$\ell$ of papers whose {\cd} indices a strategic scholar aims to increase to a threshold value is unbounded. What if there is only a constant number of them, say only one such paper. In this scenario, we could guess which paper from~$W$ is the targeted paper and then focus solely on maximizing its {\cd} index by merging papers. Unfortunately, we show that this approach is not effective, as formally proven below.
}

\begin{theorem}
\label{thm-merge-ell-1-hard-small-compatibility}
For any positive rational number $d<1$, {\mergemanipulation} is {\wah} with respect to~$\ss$, even if $\ell=1$, each connected component of the compatibility graph contains at most two papers, and the citation digraph is acyclic.
\end{theorem}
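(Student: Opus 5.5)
The plan is a parameterized reduction from \prob{Clique} restricted to regular graphs, which is {\wah} with respect to~$\kappa$ even on $r$-regular graphs~\cite{DBLP:journals/cj/Cai08}. I set $\ss=\kappa$ and $\ell=1$, so the parameter stays linear in~$\kappa$. The instance has one focal paper~$p^\star$ whose {\cd} index must be lifted to~$d$. Each vertex~$v$ of~$G$ becomes a compatibility pair $\{v(1),v(2)\}$, and these pairs are the only edges of~$H$, so every component of~$H$ has at most two papers. Merging a pair is read as ``selecting~$v$''. To force $\ell=1$ to be witnessed by~$p^\star$, every other paper in~$W$ gets a guard pair analogous to $x,y$ in Theorem~\ref{thm-merge-compatibility-small}: a paper~$y$ citing all atoms and a common reference~$x$. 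Then no atom has a citation, so every atom and every compound of atoms has a negative {\cd} index however we merge.

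For the gadget I would try the following. The paper~$p^\star$ cites a reference~$u$ and cites every $v(2)$. Each $v(1)$ cites both~$p^\star$ and~$u$, so it lies in $\citeB{p^\star}{D}$. Merging $\{v(1),v(2)\}$ creates a compound that both cites and is cited by~$p^\star$. Hence it drops out of all three sets, which is the reciprocal-citation convention fixed in the preliminaries, and $\nciteB{p^\star}{}$ goes down by one.

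For each edge $\edge{v}{w}$ of~$G$ there is an edge paper $e_{vw}\notin W$ that cites~$p^\star$ and cites some non-reference atom of~$v$ and of~$w$. An edge paper moves from $\citeF{p^\star}{}$ to $\citeB{p^\star}{}$ exactly when at least one endpoint has been merged. Thus a selected set~$S$ of size~$\kappa$ moves $\kappa r-\abs{E(G[S])}$ edge papers into~$\citeB{p^\star}{}$. For fixed $\abs{S}=\kappa$ this count is minimized exactly when~$S$ is a clique, since then it equals $\kappa r-\binom{\kappa}{2}$.

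Finally I add padding papers not in~$W$: some citing only~$p^\star$ (in~F), some citing~$p^\star$ and~$u$ (in~B), and some citing only~$u$ (in~R). Their numbers are chosen so that $\dindex{p^\star}{}\geq d$ holds iff the weighted balance ``$\#$pairs merged'' versus ``$\#$edges covered'' reaches the value attained by a $\kappa$-clique. This mirrors how $\abs{Z}$ was chosen from $p,q$ in Theorem~\ref{thm-merge-compatibility-small}. The digraph is acyclic via the topological ordering $(\overrightarrow{\text{padding}}, \overrightarrow{\{e_{vw}\}}, y, \overrightarrow{\{v(1)\}}, p^\star, \overrightarrow{\{v(2)\}}, x, u)$. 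Merging~$v$ destroys acyclicity only through the intended reciprocal pair, which is harmless because self-citations never arise.

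The main obstacle is calibrating the weights so that the optimum uses exactly~$\kappa$ merges. The benefit of a merge, one paper leaving~B and leaving the denominator, must outweigh the roughly~$r$ edge papers it pushes from~F to~B. Otherwise making no merges would be best. To fix this I would first give each vertex gadget weight $M>r$, using $M$ reciprocal-capable partners. Because components must have size two, that cannot be done by enlarging the pair. Instead I would hang $M$ additional~B-papers off each $v(1)$ so that they become reciprocal or lose their link to~$u$ once $v(1)$ is merged with $v(2)$; this needs checking. The objective then becomes $M\abs{S}-\abs{\spane{S}{G}}$. With budget $\ss=\kappa$, using all~$\kappa$ merges is optimal, and among $\kappa$-sets the objective is maximized exactly by cliques. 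If this bookkeeping fails, the fallback is to reduce from \prob{Multicolored Clique}, with one pair per vertex and edge papers counted per color class, which makes the counting argument more rigid.
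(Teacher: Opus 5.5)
\textbf{There is a genuine gap.} It is the one you flag as the main obstacle: selecting vertices by merging cannot encode \prob{Clique} here, and the repair you sketch cannot exist.

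\textbf{Why your gadget fails.} Linearize the goal. When the denominator is positive, $p^\star$ has index at least $d$ if and only if $\Phi:=(1-d)\,n_{\text{F}}-(1+d)\,n_{\text{B}}-d\,n_{\text{R}}\ge 0$. In your gadget, merging the pairs of a vertex set $S$ changes $\Phi$ by $(1+d)\abs{S}-2\abs{\spane{S}{G}}$. Since $\abs{\spane{S}{G}}\ge r\abs{S}/2$, this is at most $\abs{S}(1+d-r)$, which is negative for $r\ge 2$. So the all-singleton partition is strictly best. If you calibrate the padding so that a $\kappa$-clique lands exactly on $d$, every instance becomes a yes-instance.

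\textbf{Why the repair cannot exist.} Hanging extra B-papers off $v(1)$ cannot help. Merging $\{v(1),v(2)\}$ only redirects arcs incident to these two atoms. Every other paper keeps citing $p^\star$ and $u$, and it cannot become a reference of $p^\star$. More generally, while $p^\star$ is unmerged its reference set only grows. Papers outside the merged part can therefore only move from F to B or from nowhere into R. Hence one merge raises $\Phi$ by at most $1+d$, while selecting a vertex of degree $r$ costs $\Theta(r)$. Buying weight $M>r$ with $M$ separate pairs per vertex makes $\ss=M\kappa$ unbounded in $\kappa$. You also cannot bound $r$ by a function of $\kappa$, since \prob{Clique} is {\fpt} on such graphs. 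The \prob{Multicolored Clique} fallback does not change this arithmetic.

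\textbf{The paper's proof has the same flaw.} It uses the same vertex-merging design, except that edge papers enter $N_{\text{R}}$ and cost $d$ each, and it breaks the same way. With $\abs{X}=2(n-\kappa)+m_c$, $\abs{Z}=m_c$ and $m_c=t\kappa-\binom{\kappa}{2}$, the unmerged digraph already gives $v^\star$ the index $\frac{n-2\kappa+m_c}{3n-2\kappa+2m_c}$. This is at least $\frac13$ whenever $m_c\ge 4\kappa$. For example, $G=K_{6,6}$ with $\kappa=5$ gives exactly $22/66$ with no merge at all, yet $G$ is triangle-free. So the paper's claimed equivalence fails for large $t$, and you cannot borrow its calibration.

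\textbf{The missing idea: swap the roles of vertices and edges.} Then each merge damages only $O(1)$ papers.
\begin{itemize}
\item For each edge $e$ of $G$, take a compatibility pair $\{e(1),e(2)\}$. Let $e(1)$ cite $p^\star$ and a fixed reference $u$ of $p^\star$, and let $p^\star$ cite $e(2)$.
\item For each vertex $v$, take one paper $q(v)\notin W$ citing $e(1)$ for every $e\ni v$.
\item Set $\ss=\binom{\kappa}{2}$.
\end{itemize}
Merging the pairs of an edge set $F$ makes each compound reciprocal with $p^\star$. It moves exactly the papers $q(v)$ with $v\in V(F)$ into $N_{\text{R}}$. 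So $\Phi$ changes by $(1+d)\abs{F}-d\abs{V(F)}$, and every merge is now profitable.

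For $\kappa\ge 2$, the maximum of this change over $\abs{F}\le\binom{\kappa}{2}$ is $(1+d)\binom{\kappa}{2}-d\kappa$. It is attained exactly by the edge sets of $\kappa$-cliques:
\begin{itemize}
\item If $\abs{V(F)}\ge\kappa$, this is immediate.
\item Otherwise, write $\nu=\abs{V(F)}$ and use $\abs{F}\le\binom{\nu}{2}$. The function $(1+d)\binom{\nu}{2}-d\nu$ is increasing for $\nu\ge 1$ and positive at $\nu=\kappa$.
\end{itemize}

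To finish the reduction, write $d=p/q$ in lowest terms. Padding papers citing only $p^\star$ or only $u$ place the threshold at this value. This is solvable in nonnegative integers because $\gcd(p,q-p)=1$, once $m\ge\binom{\kappa}{2}$. A guard keeps every other part of $W$ below $d$: let all atoms cite a common paper $y$ that many padding papers also cite, as the paper does. Without it, $e(2)$ would have index $1$. The resulting digraph is acyclic, and $G$ need not be regular.
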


\begin{proof}
We prove Theorem~\ref{thm-merge-ell-1-hard-small-compatibility} by a reduction from {\prob{Clique}} on regular graphs to {\mergemanipulation}. We first present the reduction for $d=1/3$ and then explain how to adapt it for any positive rational $d<1$. 

Let $I=(G, \kappa)$ be an instance of {\prob{Clique}}, where $G=(V, E)$ is a $t$-regular graph with~$n$ vertices and $m={n\cdot t}/{2}$ edges. Without loss of generality, we assume $t>\kappa>0$. 
We construct an instance $g(I)=(D, W, H, \ell, \ss, d)$ of {\mergemanipulation} with $\ell=1$, $\ss=\kappa$, and $d=1/3$ as follows.
\begin{description}
    \item[Citation digraph~$D$.] For each vertex $v\in V$, we create two papers~$v(1)$ and~$v(2)$. For each $i\in [2]$,
    let $V(i)=\{v(i) \setmid v\in V\}$.
    For each edge~$e\in E$, we create a paper~$p(e)$. Let~$P(E)=\{p(e) \setmid e\in E\}$.  
    In addition, we create two papers~$v^{\star}$ and~$y$, a set~$X$ of $2(n-\kappa)+t\cdot \kappa-\frac{\kappa\cdot (\kappa-1)}{2}$, a set~$Z$ of $t\cdot \kappa-\frac{\kappa\cdot (\kappa-1)}{2}$ papers, and a set~$S$ of~$4(m+2n+1)$ papers. Since $t>\kappa>0$,~$X$ and~$Z$ are nonempty.  
    The arcs of $D$ are defined by the following citations:
 \begin{itemize}
     \item All papers in~$X$ cite~$v^{\star}$.
     \item The paper~$v^{\star}$ cites all papers in~$V(2)$.
     \item All papers in~$V(1)$ cite all papers in~$V(2)\cup \{v^{\star}\}$.
     \item For each edge $e=\edge{u}{v}$ in~$G$, the paper~$p(e)$ cites both~$u(1)$ and~$v(1)$.
     \item All papers in~$V(1)\cup V(2)\cup S$ cite the paper~$y$.
     \item All papers in~$Z$ cite all papers in $V(1)\cup V(2)\cup \{y\}$.
 \end{itemize}
 The citation digraph~$D$ is acyclic with the topological ordering $(\overrightarrow{X}, \overrightarrow{S}, \overrightarrow{Z}, \overrightarrow{P(E)}, \overrightarrow{V(1)}, v^{\star}, \overrightarrow{V(2)}, y)$. 
 \item[Papers allowed to be merged.] $W=V(1)\cup V(2) \cup \{v^{\star}\}$.
 \item[Compatibility graph~$H$.] The edge set of $H$ is $\{\edge{v(1)}{v(2)} \setmid v\in V\}$.
\end{description}
\begin{figure}
    \centering
    \includegraphics[width=0.5\textwidth]{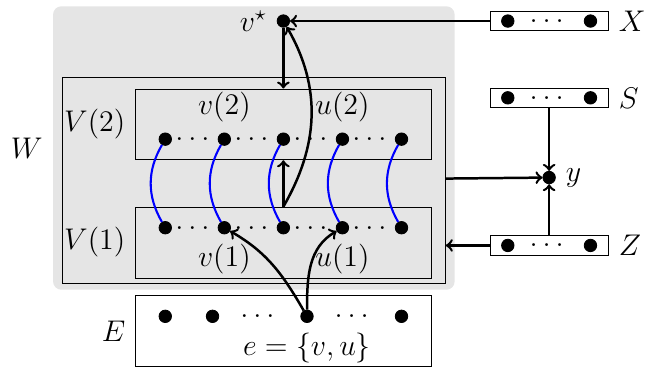}
    \caption{Illustration of the reduction in the proof of Theorem~\ref{thm-merge-ell-1-hard-small-compatibility}. Blue edges represent the compatibility graph $H$, and black arcs represent citations in the digraph $D$.}
    \label{fig-merge-ell-1-hard-small-compatibility}
\end{figure}
The instance $g(I)$ can be constructed in polynomial time. Figure~\ref{fig-merge-ell-1-hard-small-compatibility} illustrates the reduction. We prove its  correctness below.

$(\Rightarrow)$ Assume that~$G$ has a $\kappa$~clique~$K$. Recall that $\spane{K}{G}=\{e\in E \setmid e\cap K\neq \emptyset\}$ denotes the set of edges in~$G$ covered by~$K$. Since~$G$ is $t$-regular, we have $\abs{\spane{K}{G}}=t\cdot \kappa-\frac{\kappa\cdot (\kappa-1)}{2}$. 
 Let~$\mathcal{P}$ be the partition of~$W$ defined by $\mathcal{P}^{\geq 2}=\{\{v(1), v(2)\} \setmid v\in K\}$, which complies with~$H$ and has $\abs{\mathcal{P}^{\geq 2}}=\kappa=\ss$. Consider the citation digraph~$D^{\mathcal{P}}$, obtained from~$D$ by merging~$v(1)$ and~$v(2)$ for each~$v\in K$. The {\cd} index of~$v^{\star}$ in~$D^{\mathcal{P}}$ is
 \[\frac{\abs{X}-(\abs{V(1)}-\abs{K})}{\abs{X}+(\abs{V(1)}-\abs{K})+(\abs{Z}+\abs{\spane{K}{G}})}.\]
 Substituting the values of~$\abs{X}$,~$\abs{V(1)}$,~$\abs{K}$,~$\abs{Z}$, and~$\abs{\spane{K}{G}}$ shows that the CD index reaches the threshold $1/3$.

$(\Leftarrow)$ Assume that there is a partition~$\mathcal{P}$ of~$W$ that complies with the compatibility graph~$H$ such that $\abs{\mathcal{P}^{\geq 2}}\leq \ss=\kappa$ and at least one paper in~$W^{\mathcal{P}}$ has {\cd} index at least~$1/3$ in the citation digraph~$D^{\mathcal{P}}$. The presence of~$y$,~$S$, and~$Z$ prevents any paper in $(V(1)\cup V(2))^{\mathcal{P}}$ from achieving a {\cd} index of at least $1/3$ in $D^{\mathcal{P}}$. Therefore, the only possibility is that~$v^{\star}$ achieves a {\cd} index of at least~$1/3$ in~$D^{\mathcal{P}}$. Let $K=\{v\in V \setmid \{v(1),v(2)\}\in \mathcal{P}^{\geq 2}\}$ be the set of vertices of~$G$ corresponding to merged pairs of papers in~$\mathcal{P}$. Merging each pair~$\{v(1), v(2)\}$ for any $v \in V$ alters the role of~$v(1)$---which originally cites both $v^{\star}$ and some of its references---by removing it as a direct reference of~$v^{\star}$. At the same time, this operation transforms all edge-papers that cite~$v(1)$ into papers that does not cite~$v^{\star}$, but cite at least one reference of~$v^{\star}$. Consequently, the {\cd} index of~$v^{\star}$ in~$D^{\mathcal{P}}$ is
 \begin{equation}
 \label{eq-xx}
     \frac{\abs{X}-(\abs{V(1)}-\abs{\mathcal{P}^{\geq 2}})}{\abs{X}+(\abs{V(1)}-\abs{\mathcal{P}^{\geq 2}})+(\abs{Z}+\abs{\spane{K}{G}})}.
 \end{equation}
Let~$m'$ be the number of edges within~$G[K]$. We have $\abs{\spane{K}{G}}=t\cdot \abs{K}-m'$. The fraction in~\eqref{eq-xx} is at least $1/3$ if and only if $\abs{K}=\abs{\mathcal{P}^{\geq 2}}=\kappa$ and $m'={\kappa \cdot (\kappa-1)}/{2}$. That is,~$K$ contains exactly~$\kappa$ vertices, and there are ${\kappa\cdot (\kappa-1)}/{2}$ edges in~$G[K]$. This implies that~$K$ forms a clique in~$G$, and therefore~$I$ is a {\yesins}.


We now consider the reduction for any positive rational value of~$d$. Let ${p}/{q}$ be the canonical form of~$d$. When $q \geq p + 2$, we adapt the previous construction by extending the three sets~$X$,~$Z$, and~$S$, ensuring that the resulting {\cd} index equals the target $d$. Specifically, we define:
\begin{itemize}
    \item $X$ as a set of $(p+1)\cdot (n-\kappa) + p\cdot t\cdot \kappa - \frac{p\cdot \kappa \cdot (\kappa-1)}{2}$ papers,
    \item $Z$ as a set of  $(q-p-2)\cdot (n-\kappa) + (q-p-1)\cdot t\cdot \kappa - \frac{(q-p-1)\cdot \kappa \cdot (\kappa-1)}{2}$ papers, and
    \item $S$ as a set of $(p + q)\cdot (m + 2n + 1)$ papers.
\end{itemize}
For the case $q = p+1$, the above definition would give $Z$ negative size. Instead, we define: 

\begin{itemize}
    \item $X$ as a set of $(3p+1)\cdot (n-\kappa) + p\cdot t\cdot \kappa - \frac{p\cdot \kappa \cdot (\kappa-1)}{2}$ papers,
    \item $Z$ as a set of $n - \kappa$ papers, and 
    \item $S$ as a set of $(p + 1)\cdot (m + 2n + 1)$ papers.
\end{itemize}

In both cases, correctness follows from the same algebraic arguments as in the case $d=1/3$.
\end{proof}

Theorem~\ref{thm-merge-ell-1-hard-small-compatibility} implies that {\mergemanipulation} is {\paranph} with respect to the parameter~$\ell$. By introducing dummy papers, this hardness result can be extended to any constant integer $\ell > 1$.\footnote{Specifically, one can add $\ell-1$ dummy papers to $J$ and let they be cited by all papers in $X$.}  
This reduction can be further adapted to establish the {\nphns} of {\mergemanipulationinf}, the variant in which the strategic scholar subjectively does not care about the number of merging operations. 

\begin{theorem}
\label{cor-merge-mani-inf-np-h-j-2-small}
For any positive rational number $d<1$, {\mergemanipulationinf} is {\nph}, even if $\ell=2$, each connected component of the compatibility graph contains at most two papers, and the citation digraph is acyclic. 
\end{theorem}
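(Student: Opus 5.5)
The plan is to reuse the reduction from \prob{Clique} on regular graphs in the proof of Theorem~\ref{thm-merge-ell-1-hard-small-compatibility}, and to add a second distinguished paper~$w^{\star}$ whose {\cd} index stands in for the lost budget constraint $\abs{\mathcal{P}^{\geq 2}}\leq \kappa$. In that proof, the $(\Leftarrow)$ direction gets that $K$ is a $\kappa$-clique from the fraction~\eqref{eq-xx} being at least~$d$ together with $\abs{K}=\abs{\mathcal{P}^{\geq 2}}\leq\kappa$. In {\mergemanipulationinf} the budget is gone, so $w^{\star}$ must have index at least~$d$ exactly when at most~$\kappa$ pairs $\{v(1),v(2)\}$ are merged. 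Requiring both $v^{\star}$ and $w^{\star}$ to reach~$d$ explains why we set $\ell=2$.

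\textbf{Construction.} Keep $D$, $H$, $X$, $Z$, $S$, $y$, $P(E)$, $V(1)$, $V(2)$, $v^{\star}$ as before. Add two new papers $w^{\star}$ and $y'$, a padding set $X'$ citing only~$w^{\star}$, and the following arcs: $w^{\star}$ cites~$y'$; every paper in~$V(1)$ cites~$y'$; every paper in~$V(2)$ cites~$w^{\star}$. Put $w^{\star}$ into~$W$ as an isolated vertex of~$H$, so each component of~$H$ still has at most two papers. A topological ordering is $(\overrightarrow{X}, \overrightarrow{X'}, \overrightarrow{S}, \overrightarrow{Z}, \overrightarrow{P(E)}, \overrightarrow{V(1)}, v^{\star}, \overrightarrow{V(2)}, w^{\star}, y', y)$.

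\textbf{Effect on $w^{\star}$.} With $K$ the set of merged pairs, the categories for $w^{\star}$ are as follows:
\begin{itemize}
\item an unmerged $v(2)$ cites $w^{\star}$ but not $y'$, so it lies in $\citeF{w^{\star}}{}$;
\item an unmerged $v(1)$ cites $y'$ but not $w^{\star}$, so it lies in $\citeR{w^{\star}}{}$;
\item a compound $v_{\{v(1),v(2)\}}$ cites both, so it lies in $\citeB{w^{\star}}{}$.
\end{itemize}
No other paper touches $w^{\star}$ or $y'$. Hence the index of $w^{\star}$ is
\[\frac{\abs{X'}+n-2\abs{K}}{\abs{X'}+2n-\abs{K}},\]
which is strictly decreasing in~$\abs{K}$. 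Choose $\abs{X'}$ as a polynomial function of $n,\kappa,p,q$ so that this quantity is at least~$d$ exactly when $\abs{K}\leq\kappa$. Equality at $\abs{K}=\kappa$ gives $\abs{X'}=\bigl(d(2n-\kappa)-(n-2\kappa)\bigr)/(1-d)$. If this is not a nonnegative integer, I would pad $X'$ and the remaining sets by a common scaling factor, or add neutral papers to $\citeR{w^{\star}}{}$, so that the threshold falls between $\kappa$ and $\kappa+1$.

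\textbf{Effect on $v^{\star}$.} The new arcs leave the analysis of $v^{\star}$ unchanged. The paper $y'$ is not a reference of $v^{\star}$, so $V(1)$ citing it changes nothing. Arcs leaving $V(2)$ (the references of $v^{\star}$) are irrelevant to the sets of $v^{\star}$. Nothing new cites $v^{\star}$ or $V(2)$.

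\textbf{Correctness.} For $(\Rightarrow)$, merging the pairs of a $\kappa$-clique gives index exactly~$d$ for $v^{\star}$, as before, and index at least~$d$ for $w^{\star}$ since $\abs{K}=\kappa$. For $(\Leftarrow)$, I first argue that every compound or atom paper from $V(1)\cup V(2)$ still has index below~$d$. The papers $y$, $S$, $Z$ still dominate, since $S$ is large compared to everything incident to these papers; the extra reference $y'$ and the new citation target $w^{\star}$ only add a few non-focal citers. I may need to enlarge~$S$ by roughly $\abs{X'}$ to keep this. So the two good papers must be $v^{\star}$ and $w^{\star}$. From $w^{\star}$ we get $\abs{K}\leq\kappa$, and then the algebra of Theorem~\ref{thm-merge-ell-1-hard-small-compatibility} applied to~\eqref{eq-xx} forces $K$ to be a $\kappa$-clique. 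The main obstacle is the bookkeeping: choosing integral sizes for $X'$ (and a rescaling of $X,Z,S$ for general $p/q$, including the case $q=p+1$) so that $w^{\star}$'s threshold sits exactly at~$\kappa$, while the domination argument for $V(1)\cup V(2)$ survives the added arcs.
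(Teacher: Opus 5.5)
Your soft-budget gadget is the same idea as the paper's proof. The paper adds $u^{\star}\in W$, cited by all of $V(1)\cup V(2)$ and citing a new $y'$ that is also cited by $4n-2\kappa$ fresh papers, so the index of $u^{\star}$ is $(2n-\abs{K})/(6n-2\kappa-\abs{K})$, which is at least $1/3$ iff $\abs{K}\le\kappa$. Your $w^{\star}$ variant does the same job, and its monotonicity, the choice of $X'$ (an integer threshold position always exists), and the acyclicity are fine.

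The genuine gap is the step you import from Theorem~\ref{thm-merge-ell-1-hard-small-compatibility}: that \eqref{eq-xx} $\ge d$ together with $\abs{K}\le\kappa$ forces $K$ to be a $\kappa$-clique. This is false. Put $m_c=t\kappa-\kappa(\kappa-1)/2$, so for $d=1/3$ we have $\abs{X}=2(n-\kappa)+m_c$ and $\abs{Z}=m_c$. For $K=\emptyset$ the fraction \eqref{eq-xx} equals
\[\frac{n-2\kappa+m_c}{3n-2\kappa+2m_c},\]
which is at least $1/3$ iff $m_c\ge 4\kappa$, i.e., iff $t\ge 4+(\kappa-1)/2$. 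Under the standing assumption $t>\kappa$, this holds whenever $\kappa\ge 5$. Since the index of $w^{\star}$ is decreasing in $\abs{K}$, the all-singletons partition then gives both $v^{\star}$ and $w^{\star}$ index at least $d$. So the constructed instance is a yes-instance regardless of $G$. For example, $G=K_{10,10}$ with $\kappa=3$ has no triangle, yet with no merge at all $v^{\star}$ has index $41/108>1/3$. The general-$d$ sizes fail the same way: for $q\ge p+2$, $K=\emptyset$ already works once $p\,m_c\ge (p+q)\kappa$.

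The underlying reason is structural. Merging $\{v(1),v(2)\}$ adds one to the numerator of \eqref{eq-xx}, but it moves every not-yet-covered edge-paper at $v$ (up to $t$ of them) into $N_{\text{R}}(v^{\star})$. So once $t$ is moderately large compared to $\kappa$, merging lowers the index of $v^{\star}$. In other words, \eqref{eq-xx} separates cliques from non-cliques only among sets of size exactly $\kappa$.

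An upper-bound soft budget therefore cannot close the $(\Leftarrow)$ direction. This applies to yours and to the paper's, whose proof of this statement rests on the same claim, so citing it does not fill the gap. What is missing is a constraint forcing at least $\kappa$ merges, e.g.\ a distinguished paper whose index increases with $\abs{K}$ and crosses $d$ at $\abs{K}=\kappa$. The condition on $v^{\star}$ would then have to be re-analysed as an upper bound on $\abs{\spane{K}{G}}$ over sets with $\abs{K}\ge\kappa$. That needs its own argument, since larger non-clique sets then become candidates.
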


\begin{proof}
We modify the reduction from Theorem~\ref{thm-merge-ell-1-hard-small-compatibility}, first considering $d=1/3$. We introduce an auxiliary paper $u^\star \in W$, which is cited by every paper in $V(1) \cup V(2)$. Additionally, we add a reference $y'$ for $u^\star$ and a set $S'$ consisting of $4n-2\kappa$ papers that cite $y'$. If~$G$ has a $\kappa$-clique, merging the corresponding vertex pairs results in both~$v^\star$ and~$u^\star$ reaching a CD index of $1/3$. Conversely, $v^\star$ and $u^\star$ are the only papers that can potentially reach the threshold by valid merging of papers in $W$. Although the merging budget is omitted, the CD index of $u^\star$ serves as a ``soft budget'': Each merger of papers from $V(1)\cup V(2)$ reduces the number of papers that cite $u^\star$ but do not cite its reference $y'$. If more than $\kappa$ pairs are merged, the denominator in the CD index formula for $u^\star$ becomes too small relative to the remaining citations, causing the index to fall below $1/3$. Generalizing to any rational $d < 1$ follows the scaling logic established in the proof of Theorem~\ref{thm-merge-ell-1-hard-small-compatibility}.
\end{proof}

The hardness result in Theorem~\ref{thm-merge-ell-1-hard-small-compatibility} does not extend to the case $d = 1$. Recall that a paper has {\cd} index~$1$ if and only if, among the three sets defining the {\cd} index, only~$\citeF{v}{D}$ is nonempty. While this condition significantly restricts the merging search space, determining which papers can attain an index of $1$ remains a combinatorial challenge, as established in the following theorem.

\begin{theorem}
\label{thm-d-1-merge-np-hard-small-comparability}
For $d=1$, {\mergemanipulation} is {\wah} with respect to~$\ss+\ell$, even if each connected component of the compatibility graph has at most two papers and the citation digraph is acyclic
.
\end{theorem}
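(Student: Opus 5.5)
The plan is a parameterized reduction from \prob{Multicolored Clique} parameterized by~$\kappa$. The target is an instance of {\mergemanipulation} with $d=1$, $\ss=\kappa$ and $\ell$ bounded by a function of~$\kappa$; I aim for $\ell=\kappa+\binom{\kappa}{2}$, or simply $\ell=\kappa$. I will reuse the architecture of the proof of Theorem~\ref{thm-merge-ell-1-hard-small-compatibility}. For each vertex $v\in V_i$ there are two papers $v(1),v(2)$, and the only edges of~$H$ are $\edge{v(1)}{v(2)}$, so every connected component of~$H$ has at most two papers and a merge amounts to selecting a vertex. Edge papers $p(e)$ lie outside~$W$. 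Every other paper of~$W$ receives a private reference together with a private paper citing that reference; this puts it permanently in $\citeR{\cdot}{}$, so its {\cd} index is always below~$1$. Acyclicity will be witnessed, as before, by an explicit topological ordering (edge papers, then $V(1)$, then $V(2)$, then the references).

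The key structural fact for $d=1$ is the following: a compound paper $\mergv{P}$ has {\cd} index~$1$ if and only if it has at least one non-reciprocal citation, every such citation avoids all references of~$\mergv{P}$, and no other paper cites any of those references. Merging forms unions of in- and out-neighbourhoods and never deletes arcs between distinct surviving papers. Hence a third party that cites a reference of an \emph{unmerged} paper can never be removed. The gadget must therefore create ``bad'' citation patterns only through merges. I will arrange that the references of the compound $\mergv{\{v(1),v(2)\}}$ come from $v(1)$, while its citations come from $v(2)$ and from an anchor paper $x_v$. A conflict with another vertex~$w$ should appear only when $w(1)$ and $w(2)$ are merged as well. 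The intended encoding is as follows: for $v\in V_i$, $w\in V_j$ with $i\neq j$ and $\edge{v}{w}\notin E$, the paper $w(2)$ cites $v(2)$ and the paper $w(1)$ cites a reference of $v(1)$. Then, if~$w$ is merged, the compound of~$w$ becomes a $\citeB{}{}$-paper for~$v$ and destroys index~$1$.

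Correctness would then run as follows. ($\Rightarrow$) Merge the pairs of a multicolored clique. Every selected compound has only $\citeF{}{}$-citations, because its only potential spoilers are non-neighbours, and those are unmerged. This gives $\ell=\kappa$ papers with index~$1$ using $\kappa$ merges. ($\Leftarrow$) Only compounds of merged pairs can reach index~$1$. Since $\ss=\kappa$ and $\ell=\kappa$, exactly $\kappa$ pairs are merged, and all of them are pairwise non-conflicting. A colour gadget, consisting of an extra conflict between two vertices of the same colour class, forces one vertex per class, and therefore the chosen vertices form a multicolored clique.

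The main obstacle is exactly the monotonicity issue noted above. In the naive encoding, an \emph{unmerged} $w(1)$ citing a reference of~$v$ already lies in $\citeR{v}{}$ and spoils~$v$ regardless of whether~$w$ is merged. The delicate step is therefore to route each conflict so that it materialises only after the merge. Two options to try: let $w(1)$ cite a paper that becomes a reference of~$v$'s compound only through $v(1)$, with $w(2)$ supplying the citation of~$v$; or push the conflict into dedicated per-pair edge papers whose index-$1$ status is the quantity counted by~$\ell$, with $\ell=\binom{\kappa}{2}$ or $\ell=\kappa+\binom{\kappa}{2}$. I would first try the edge-paper variant and check carefully that every unmerged configuration leaves the relevant $\citeB{}{}$ and $\citeR{}{}$ sets empty.
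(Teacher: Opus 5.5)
Your proposal does not yet contain a reduction. The whole difficulty is the conflict gadget. The only concrete encoding you give ($w(2)$ cites $v(2)$ and $w(1)$ cites a reference of $v(1)$ for non-adjacent $v,w$) fails for the reason you note yourself, and your two alternatives are only named.

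The failure is structural, not a matter of routing. Suppose $\mergv{\{w(1),w(2)\}}$ cites a reference of $\mergv{\{v(1),v(2)\}}$ without being cited by it, and that reference is not itself altered by $w$'s merge. Then the atom of $w$ carrying that arc already lies in $N_{\text{B}}\cup N_{\text{R}}$ of $v$'s compound when $w$ stays unmerged, because merging only takes unions of arcs. So a conflict that appears only when both $v$ and $w$ are merged cannot come from letting $w$'s papers cite $v$'s references. Your $(\Leftarrow)$ direction with $\ell=\kappa$ vertex targets therefore has nothing forcing the chosen vertices to be adjacent.

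Your remark that such a third party ``can never be removed'' is also false. A paper citing a reference of the target stops being harmful once it is merged \emph{into} that reference. This is the lever the paper uses: merges help targets rather than hurt them.

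The paper reduces from plain \prob{Clique} using exactly this monotone effect:
\begin{itemize}
\item The targets are the unmergeable edge papers $p(e)$ for $e=\edge{u}{v}$; each cites $u(1)$ and $v(1)$, and all are cited by a common paper $x$.
\item $v(2)$ cites $v(1)$, so it spoils every edge paper at $v$ until $\{v(1),v(2)\}$ is merged.
\item The papers $y$ and $z$ disqualify all vertex papers.
\item With $\ss=\kappa$ and $\ell=\binom{\kappa}{2}$, a counting argument forces a clique: at most $\kappa$ merged vertices must support $\binom{\kappa}{2}$ edge papers, each needing both endpoints merged.
\end{itemize}
This is essentially your second option. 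However, the check you planned (that the relevant $N_{\text{B}}$ and $N_{\text{R}}$ sets are empty) is exactly where that construction, as written, breaks. Two edge papers with a common endpoint $u$ both cite $\mergv{\{u(1),u(2)\}}$, so each lies in the other's $N_{\text{R}}$ set. Hence for $\kappa\ge 3$ no edge paper of the clique reaches index~$1$. For example, $G=K_3$ with $\kappa=3$ is mapped to a {\noins}.

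More generally, take $D$ acyclic and two papers that remain unmerged and both attain index~$1$ in $D^{\mathcal{P}}$. They cannot have a common reference, since each would have to cite the other. So an idea is still missing, both from your sketch and from the edge-paper construction as written: a gadget in which one vertex merge is required by several targets without becoming a common reference of them. Only once such a gadget exists does the counting argument go through.
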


\begin{proof}
    We prove the theorem by a reduction from {\prob{Clique}}. Let $I=(G,\kappa)$ be an instance of {\prob{Clique}} where $G=(V, E)$. We create an instance $g(I)=(D, W, H, \ell, \ss, d)$ of {\mergemanipulation} with $d=1$ as follows. An illustration is provided in Figure~\ref{fig-d-1-merge-np-hard}.  
    \begin{description}
        \item[Citation digraph~$D$.] For each $v\in V$, we create two papers~$v(1)$ and~$v(2)$ in~$D$. For each $i\in [2]$, let $V(i)=\{v(i) \setmid v\in \vset{G}\}$. For each edge $e\in E$, we create one paper~$p(e)$. Let~$P(E)=\{p(e) \setmid e\in E\}$. Additionally, we create three papers~$x$,~$y$, and~$z$. 
        The arcs of $D$ are defined by the following citations:
    \begin{itemize}
        \item The paper~$x$ cites all papers from~$P(E)$.
        \item Each edge-paper~$p(e)$, where $e=\edge{v}{u}$ is an edge in~$G$, cites~$v(1)$ and~$u(1)$.
        \item For each $v\in V$,~$v(2)$ cites~$v(1)$.
        \item All papers from $V(1)\cup V(2)\cup \{z\}$ cite~$y$.
    \end{itemize}
The citation digraph~$D$ is acyclic, admitting the topological ordering
 $(x, z, \overrightarrow{P(E)}, \overrightarrow{V(2)}, \overrightarrow{V(1)}, y)$. 
    \item[Papers allowed to be merged.] $W=V(D)\setminus \{x, y, z\}$.
    \item[Compatibility graph~$H$.] The edge set of $H$ is~$\{\edge{v(1)}{v(2)}\setmid v\in V\}$.
    \end{description}
Finally, we set $\ell={\kappa\cdot (\kappa-1)}/{2}$ and $\ss=\kappa$. 
\begin{figure}
    \centering
    \includegraphics[width=0.4\textwidth]{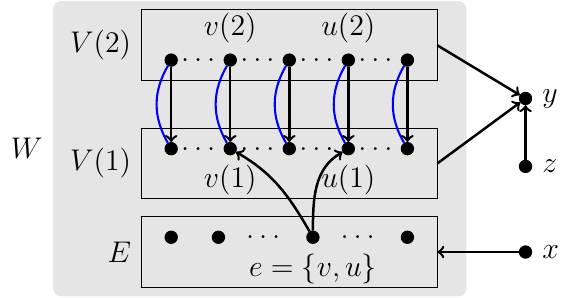}
    \caption{Illustration of the reduction in the proof of Theorem~\ref{thm-d-1-merge-np-hard-small-comparability}. Blue edges represent the compatibility graph $H$, and black arcs represent citations in the digraph $D$.}
    \label{fig-d-1-merge-np-hard}
\end{figure}
We prove the correctness of the reduction as follows.

    $(\Rightarrow)$ If~$G$ contains a $\kappa$-clique~$K$, we merge~$v(1)$ and~$v(2)$ for each $v\in K$. In the resulting citation digraph, all the ${\kappa\cdot (\kappa-1)}/{2}$ edge-papers corresponding to edges of~$G[K]$ attain a {\cd} index of~$1$. Since $\ell={\kappa\cdot (\kappa-1)}/{2}$ and $\ss=\kappa$,~$g(I)$ is a {\yesins}. 

    $(\Leftarrow)$ Assume that~$g(I)$ is a {\yesins}. That is, there exists a partition~$\mathcal{P}$ of~$W$ complying with~$H$ such that $\abs{\mathcal{P}^{\geq 2}}\leq \ss=\kappa$, and at least~$\ell$ papers in~$W^{\mathcal{P}}$ have a {\cd} index of~$1$ in~$D^{\mathcal{P}}$. The presence of~$y$ and~$z$ ensures that no vertex-paper in $(V(1)\cup V(2))^{\mathcal{P}}$ can have a {\cd} index of~$1$ in~$D^{\mathcal{P}}$. Therefore, there exists $E'\subseteq E$ with $\abs{E'}\leq \ell={\kappa \cdot (\kappa-1)}/{2}$ such that each edge-paper~$p(e)$ for $e\in E'$ has a {\cd} index of~$1$ in~$D^{\mathcal{P}}$. 
    For an edge-paper $p(e)$ where $e = \edge{v}{u}$ to attain a {\cd} index of~$1$, we need to merge $v(2)$ with $v(1)$, and merge $u(2)$ with $u(1)$. 
    Let $K=\{v\in V \setmid \{v(1), v(2)\}\in \mathcal{P}^{\geq 2}\}$. By the above discussion,~$E'$ is a subset of edges in the induced subgraph~$G[K]$. Moreover, $\abs{K}= \abs{\mathcal{P}^{\geq 2}}\leq \kappa$ and $\abs{E'}\geq  {\kappa\cdot (\kappa-1)}/{2}$. This scenario is achievable only if $\abs{K}= \kappa$ and $\abs{E'}= {\kappa \cdot (\kappa-1)}/{2}$. Thus,~$K$ is a $\kappa$-clique in~$G$, and~$I$ is a {\yesins}.
\end{proof}



What if the objective is to ensure that a single paper attains a {\cd}-index of~$1$? Does focusing on a single target reduce the problem’s complexity? As Theorem~\ref{thm-merge-d-1-np-hard} shows, this is unlikely---not only for $d=1$, but for all fixed positive values of $d$. 

\begin{theorem}
\label{thm-merge-d-1-np-hard}
For any positive rational number $d \leq 1$, {\mergemanipulation}  is {\nph}, even when $\ell = 1$, $\ss = 3$, and the citation digraph is acyclic with maximum outdegree four.
\end{theorem}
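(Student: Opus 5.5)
The plan is a reduction from \prob{$3$-Coloring}. The budget $\ss=3$ will correspond to the three colour classes, and a single target paper~$v^{\star}$ will reach {\cd} index~$1$, and hence at least~$d$ for every $d\le 1$, exactly when every vertex-paper can be placed into one of three mergeable groups. So one construction should handle all $d\le 1$ at once. If \prob{$3$-Coloring} needs a degree bound anywhere, I would use its {\nph}ness on graphs of bounded degree; the outdegree bound of four, however, should follow directly from the construction.

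\textbf{Construction.} Let $G=(V,E)$ be an instance of \prob{$3$-Coloring}. Create the following papers: a target paper~$v^{\star}$, three ``anchor'' papers $r_1,r_2,r_3$, a paper~$y$, a paper~$x$, a paper~$z$, and a vertex-paper~$a_v$ for each $v\in V$. The citations are:
\begin{itemize}
\item $v^{\star}$ cites $r_1,r_2,r_3,y$ (outdegree four);
\item each $a_v$ cites $v^{\star}$ and~$y$;
\item $x$ cites~$v^{\star}$;
\item $z$ cites~$y$.
\end{itemize}
The digraph has the topological ordering $(x,z,\overrightarrow{A},v^{\star},r_1,r_2,r_3,y)$, where $A=\{a_v\setmid v\in V\}$. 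Set $W=A\cup\{r_1,r_2,r_3,v^{\star}\}$, $\ell=1$ and $\ss=3$. The compatibility graph~$H$ makes $v^{\star}$ isolated, makes each $r_i$ adjacent to every $a_v$, leaves the $r_i$ pairwise nonadjacent, and joins $a_u,a_v$ exactly when $\edge{u}{v}\notin E$. Thus the cliques of~$H$ containing some~$r_i$ are exactly $\{r_i\}\cup\{a_v\setmid v\in I\}$ for an independent set~$I$ of~$G$.

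\textbf{Correctness.} In~$D$, every $a_v$ lies in $\citeB{v^{\star}}{D}$, since it cites $v^{\star}$ and its reference~$y$. The key observation is that merging $a_v$ into a group with some~$r_i$ turns the compound into a reference of~$v^{\star}$ that also cites~$v^{\star}$. The compound is then a reciprocal neighbour and drops out of all three sets, and no self-loop is created because $a_v$ does not cite~$r_i$.

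For $(\Rightarrow)$, take a $3$-colouring with classes $I_1,I_2,I_3$ and merge $\{r_i\}\cup\{a_v\setmid v\in I_i\}$ for each~$i$. Afterwards $x$ is the only non-reciprocal citer of~$v^{\star}$ and lies in $\citeF{v^{\star}}{}$. No remaining paper cites a reference of~$v^{\star}$ without being a neighbour of it: $z$ cites~$y$, so I must double-check this. Indeed $z$ would land in $\citeR{v^{\star}}{}$. I will therefore remove the arc from $z$ to~$y$ and instead let $z$ cite the compounds' other reference only through a separate fresh paper~$y'$ that every $a_v$ also cites and that $v^{\star}$ does not cite. With that fix the {\cd} index of~$v^{\star}$ is $1\ge d$.

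For $(\Leftarrow)$, if any $a_v$ stays outside an anchor group, then it is still a non-reciprocal citer of~$v^{\star}$ citing~$y$, so $\nciteB{v^{\star}}{}\ge 1$. Its being merged with other $a_u$'s does not help, because the compound still cites both $v^{\star}$ and~$y$. Hence for $d=1$ every $a_v$ must be in a group with an anchor. For $d<1$ I would pad~$x$ into a set~$X$ of citers of size about $n/(1-d)$, so that even a single surviving B-citer compound pushes the index below~$d$. I still need to calibrate this padding, since unmerged compounds can be numerous. Since the $r_i$ are pairwise nonadjacent in~$H$, at most three groups contain anchors, and these groups are independent sets covering~$V$, which yields a $3$-colouring.

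It remains to rule out that some other compound in $W^{\mathcal P}$ attains index~$\ge d$. Papers $a_v$ or compounds of them have no citers, so their index is undefined or nonpositive. Anchor compounds are cited only by~$v^{\star}$, which is reciprocal for them. They reference~$y'$, which $z$ cites, so $z$ puts them in the R-set and their index is~$0$; a bare $r_i$ has no citers and no references in~$W$ besides this and likewise fails.

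\textbf{Main obstacle.} The delicate step is exactly this bookkeeping of the R-sets: keeping $\citeR{v^{\star}}{}$ empty while making every non-target compound fail, all within outdegree four. The second difficulty is the $d<1$ case, where surviving singleton vertex-papers must be shown to drop the index below~$d$ regardless of how many of them remain. I would settle both by explicit case analysis on which references each compound inherits.
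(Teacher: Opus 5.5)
Your reverse direction fails on the anchors themselves. Each $r_i$ is cited only by $v^{\star}$ and has no references. So if $r_i$ stays a singleton, then $N_{\text{F}}(r_i)=\{v^{\star}\}$ (the reference sets trivially do not intersect) and $N_{\text{B}}(r_i)=N_{\text{R}}(r_i)=\emptyset$, giving $r_i$ a {\cd} index of exactly $1$. The all-singletons partition, with zero merges, therefore already produces a paper of $W^{\mathcal{P}}$ with index at least~$d$. With $\ell=1$, every graph is mapped to a {\yesins}. Your sentence ``a bare $r_i$ \dots likewise fails'' is false, and the $z$/$y'$ gadget never touches $r_i$.

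The paper blocks exactly this: its anchors $c_i$ cite $y$, and $q(q-p+1)$ padding papers $Z$ also cite $y$. This caps the index of every part containing some $c_i$ at $1/(q+1)<d$. In your setting the cheapest repair is to let each $r_i$ cite $y$. A bare anchor then has $v^{\star}\in N_{\text{B}}(r_i)$ and a negative index. Anchor groups containing vertex-papers still have $N_{\text{F}}=N_{\text{B}}=\emptyset$. $N_{\text{R}}(v^{\star})$ is unchanged, because the anchors are references of $v^{\star}$. Acyclicity, the outdegree bound, and the absence of self-loops are all preserved.

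Your plan for $d<1$ is also inverted. Padding the $N_{\text{F}}$-side dilutes any surviving $B$-citer. For example, with $\abs{X}\approx n/(1-d)$, a single unmerged vertex-paper gives $(\abs{X}-1)/(\abs{X}+1)\ge d$ once $n\ge 1+d$. The index would then no longer force every $a_v$ into an anchor group. No calibration is needed: keep the single citer $x$ and keep $N_{\text{R}}(v^{\star})=\emptyset$ (as after your $y'$ fix, or by simply dropping $z$). The index of $v^{\star}$ is then $(1-b)/(1+b)$, where $b$ is the number of anchor-free parts containing vertex-papers. This is at least $d>0$ if and only if $b=0$.

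With both corrections your route gives a valid reduction for all $d\le 1$ simultaneously. In your route the vertex-papers are $B$-citers of $v^{\star}$ that merging turns into reciprocal references. The paper instead uses $R$-members absorbed into the $c_i$, with $\abs{X}=q+p$ and $\abs{X'}=q-p$ tuned to hit $p/q$ exactly.
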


\begin{proof}
    Let~$d\leq 1$ be a positive rational number in canonical form $p/q$. 
We prove the theorem by a reduction from \prob{$3$-Coloring}. Let $G$ be the input graph with vertex set $V$. Without loss of generality, we assume that $G$ is not bipartite. We construct an instance $I=(D, W, H, \ell, \ss, d)$ of {\mergemanipulation} with $\ell=1$ and $\ss=3$ as follows. An illustration of the reduction is shown in Figure~\ref{fig-cd-merge-np-hard-d-1}. 
    Let $\vset{D}=V \cup X\cup X'\cup \{v^{\star}, c_1, c_2, c_3, y\}\cup Z$, where~$\abs{X}=q+p$, $\abs{X'}=q-p$, and $\abs{Z}=q\cdot (q-p+1)$. 
    The arcs are defined as follows:
        \begin{itemize}
        \item All papers from $X$ cite~$v^{\star}$.
        \item All papers from $X'$ cite $v^{\star}$, $c_1$, $c_2$, and $c_3$.
        \item The paper~$v^{\star}$ cites~$c_1$,~$c_2$, and~$c_3$.
        \item All papers from $V$ cite~$c_1$,~$c_2$,~$c_3$, and $y$.
        \item All papers from $\{c_1, c_2, c_3\}\cup Z$ cite~$y$.
    \end{itemize}
  By construction, each paper cites at most four others, and some cite exactly four. Hence, the maximum outdegree of the citation digraph~$D$ is four. Moreover, the graph~$D$ is acyclic, as it admits the following topological ordering: $(\overrightarrow{Z}, \overrightarrow{X}, \overrightarrow{X'}, v^{\star}, \overrightarrow{V}, c_1, c_2, c_3, y)$. 
    Let $W=V\cup\{v^{\star}, c_1, c_2, c_3\}$.
    \begin{figure}[ht!]
    \centering{
       \includegraphics[width=0.35\textwidth]{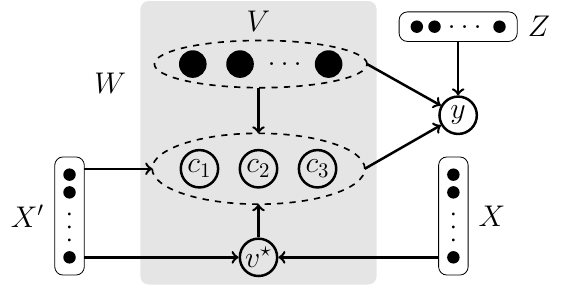}
       }
        \caption{Illustration of the reduction in the proof of Theorem~\ref{thm-merge-d-1-np-hard}.}
         \label{fig-cd-merge-np-hard-d-1}
    \end{figure}
    The edges in the compatibility graph~$H$ include exactly all missing edges in~$G$, and all edges between~$V$ and $\{c_1, c_2, c_3\}$. 
    
    We prove the correctness of the reduction as follows.

    $(\Rightarrow)$ Assume there exists a partition $(U_1, U_2, U_3)$ of~$V$ such that each~$U_i$, $i\in [3]$, is an independent set in~$G$. By the definition of~$H$,~$U_i$ together with any of~$c_1$,~$c_2$, and~$c_3$ forms a clique in~$H$. After merging all papers in $U_i\cup \{c_i\}$ for each $i\in [3]$, the {\cd} index of~$v^{\star}$ is $\frac{\abs{X}-\abs{X'}}{\abs{X}+\abs{X'}}=\frac{p}{q}$. Thus, the partition $\{U_1\cup \{c_1\}, U_2\cup \{c_2\}, U_3\cup \{c_3\}, \{v^{\star}\}\}$ of~$W$ is a {\yes}-witness for~$I$.

    $(\Leftarrow)$ Assume that~$I$ is a {\yesins}. That is, there is a partition~$\mathcal{P}$ of~$W$ that complies with~$H$ and at least one paper from~$W^{\mathcal{P}}$ has a {\cd} index of at least~${p}/{q}$ in~$D^{\mathcal{P}}$. For each $c_i$, $i\in [3]$, let $P_i\in \mathcal{P}$ be the part containing~$c_i$. By the definition of~$H$, only papers in $V\cup \{c_1, c_2, c_3\}$ can be merged. By construction, the {\cd} index of any paper from $\{c_1, c_2, c_3\}^{\mathcal{P}}$ in~$D^{\mathcal{P}}$ can be at most
    $\frac{\abs{X'\cup \{v^{\star}\}}}{\abs{X'\cup \{v^{\star}\}}+\abs{Z}}=\frac{1}{q+1}<d$. In addition, papers from~$\vset{G}$ have no citations in~$D$. Therefore,~$v^{\star}$ is the only paper with a {\cd} index of at least~$d$ in~$D^{\mathcal{P}}$. 
    Note that $\cramped{\citeF{v^{\star}}{D^{\mathcal{P}}}}=X$ and $\citeB{v^{\star}}{D^{\mathcal{P}}}=X'$. It follows that the {\cd} index of~$v^{\star}$ in~$D^{\cramped{\mathcal{P}}}$ is at least ${{p}/{q}}$ if and only if $\citeR{v^{\star}}{D^{\mathcal{P}}}=\emptyset$. As $\citeR{v^{\star}}{D}=V$, this implies that every $v\in V$ is contained in some~$P_i$, $i\in [3]$. Because $\mathcal{P}$ complies with~$H$, and $H$ contains all missing edges of~$G$, each $P_i\setminus \{c_1, c_2, c_3\}$ is an independent set in~$G$. Hence, the original {\prob{$3$-Coloring}} instance is a {\yesins}.
\end{proof}

For $d$ strictly smaller than one, hardness holds even when $\ell=\beta=1$.

\begin{theorem}
\label{thm-merge-d-less-1-ell-1-np-hard}
For any positive rational number $d < 1$, {\mergemanipulation}  is {\nph}, even when $\ell = \ss = 1$, and the citation digraph is acyclic with maximum outdegree two.
\end{theorem}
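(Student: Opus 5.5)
The plan is a polynomial-time reduction from \prob{Clique}. The single allowed merge is used to collapse a large clique of ``vertex-papers'' into one compound paper, which shrinks~$\citeB{v^{\star}}{\cdot}$ for a fixed target paper~$v^{\star}$. Let $(G,\kappa)$ be an instance with $G=(V,E)$ and $n=\abs{V}$. Let $d=p/q$ in canonical form, so $q-p\geq 1$.

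\textbf{Construction.} Create the following papers.
\begin{itemize}
\item A target paper $v^{\star}$ and one reference $r$, with $v^{\star}$ citing $r$.
\item For each $v\in V$, a vertex-paper (also called $v$) citing both $v^{\star}$ and $r$.
\item A set $Y$ of $w$ dummy papers, outside $W$, each citing $v^{\star}$ and $r$.
\item A set $X$ of papers citing only $v^{\star}$.
\item A set $Z$ of papers citing only $r$.
\end{itemize}
Every paper has outdegree at most two. The digraph is acyclic, with topological ordering $(\overrightarrow{X},\overrightarrow{Y},\overrightarrow{Z},\overrightarrow{V},v^{\star},r)$. Set $W=V\cup\{v^{\star}\}$. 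Let $H$ restricted to $V$ be exactly $G$, with no other edges. Set $\ell=\ss=1$.

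\textbf{Counting.} With one merge of a set $P\subseteq V$, $\abs{P}=s$, the compound paper cites $v^{\star}$ and $r$ and creates no self-citation. Put $a_s=n-s+1+w$. In $\mergo{D}{\mathcal P}{M}$ the sets are $\citeF{v^{\star}}{}=X$, $\abs{\citeB{v^{\star}}{}}=a_s$, and $\citeR{v^{\star}}{}=Z$. Hence
\[\dindex{v^{\star}}{}=\frac{\abs{X}-a_s}{\abs{X}+a_s+\abs{Z}},\]
which is strictly increasing in $s$. So I will choose $\abs{X}$ and $\abs{Z}$ so that the index equals $d$ exactly when $s=\kappa$. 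Write $a=a_\kappa=n-\kappa+1+w$. The requirement is $q(\abs{X}-a)=p(\abs{X}+a+\abs{Z})$. Taking $\abs{X}=(q+p)a$ reduces this to $p\abs{Z}=(q+p)(q-p-1)a$. I pick $w\in\{0,\dots,p-1\}$ so that $p\mid a$, and then $\abs{Z}=(q+p)(q-p-1)a/p$ is a nonnegative integer. All sizes are polynomial, since $p$ and $q$ are constants.

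\textbf{Correctness.} If $G$ has a $\kappa$-clique $K$, merging $K$ (a clique in $H$, one merge) gives $v^{\star}$ index exactly $d$. Conversely, suppose some paper of $W^{\mathcal P}$ reaches index $\geq d>0$. No vertex-paper or compound of vertex-papers is cited by anyone, so it has $\nciteF{}{}=\nciteB{}{}=0$ and index $\leq 0$ or undefined. Hence the successful paper must be $v^{\star}$, which is unmerged because it is isolated in $H$. By monotonicity the merged set $P$ has $s\geq\kappa$, and $P$ is a clique in $H[V]=G$, so $G$ contains a $\kappa$-clique.

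The main care point is the arithmetic: keeping $\abs{Z}$ a nonnegative integer for every $p/q$. This is handled by the padding $Y$, which lets $a$ be a multiple of $p$, and it works uniformly, including the case $q=p+1$ where $Z=\emptyset$.
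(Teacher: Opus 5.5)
Your reduction is correct, and it takes a genuinely different route from the paper's.

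\textbf{The paper's mechanism.} The paper also reduces from \prob{Clique}. Its vertex-papers cite the reference $c$ but not $v^{\star}$, so they lie in $\citeR{v^{\star}}{D}$. The single merge absorbs a clique of them \emph{into the reference} $c$, which removes them from $\citeR{v^{\star}}{D^{\mathcal{P}}}$. To calibrate for $d=p/q$, the paper replicates each vertex into $q-p$ copies $P(v)$ and makes $c$ adjacent in $H$ to every vertex-paper. It then needs the extra papers $y$ and $Z$ to stop the compound containing $c$ from itself reaching $d$.

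\textbf{Your mechanism.} You put the vertex-papers in $\citeB{v^{\star}}{D}$ and merge them among themselves. A merged clique of size $s$ shrinks $\abs{\citeB{v^{\star}}{D^{\mathcal{P}}}}$ by $s-1$. You use one copy per vertex, and the padding set $Y$ handles divisibility.

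\textbf{What your route buys.} First, $H[V]$ is literally $G$. Second, compounds of vertex-papers are uncited, so no blocking gadget is needed. Third, strict monotonicity of the index in $s$ disposes of every admissible single merge at once, including the no-merge case. The paper's converse argument, by contrast, only treats the case where the merged part contains $c$.

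\textbf{A flaw in the paper's construction.} For $q-p\ge 2$, the paper's replication needs $H$ to join \emph{all} copies of adjacent vertices. With only the same-index edges it actually specifies, $\bigcup_{v\in K}P(v)\cup\{c\}$ is not a clique in $H$, so its forward direction fails as written. Your single-copy construction sidesteps this bookkeeping entirely.
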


\begin{proof}
    Let~$d< 1$ be a positive rational number in canonical form $p/q$. 
We prove the theorem by a reduction from \prob{Clique}. Let $(G, \kappa)$ be an instance of {\prob{Clique}}, where $G=(V, E)$ and $n=\abs{V}$. We construct an instance $I=(D, W, H, \ell, \ss, d)$ of {\mergemanipulation} with $\ell=\ss=1$ as follows. 
    \begin{description}
        \item[Citation digraph~$D$.] For each vertex $v$ in $G$, we create a set $P(v)=(v(1), v(2), \dots, v({q-p}))$ of $q-p$ papers. Let $U$ denote the set of all these $n\cdot (q-p)$ papers. 
       Additionally, we create two papers $c$ and $y$, a set~$X$ of $p\cdot (n-\kappa)$ papers, and a set $Z$ of $p+q$ paper. 
    
    The arcs are defined as follows:
        \begin{itemize}
        \item All papers in $X$ cite~$v^{\star}$.
        \item The paper~$v^{\star}$ cites~$c$.
        \item All papers in $U$ cite both~$c$ and $y$.
        \item All papers in $\{c\}\cup Z$ cite~$y$.
    \end{itemize}
  By construction, each paper cites at most two others. Moreover,~$D$ is acyclic with topological ordering $(\overrightarrow{Z}, \overrightarrow{X},  v^{\star}, \overrightarrow{U}, c, y)$. 

   \item[Papers allowed to be merged.] Let $W=U\cup\{v^{\star}, c\}$.
   
    \item[Compatibility graph~$H$.] The compatibility graph~$H$ contains the following edges: 
    \begin{itemize}
        \item For each $v\in V$, all edges between papers in $P(v)$.
        \item For each $v, u\in V$ and each $i\in [q-p]$, there is an edge between $v(i)$ and $u(i)$ if and only if $v$ and $u$ are adjacent in $G$.
        \item All edges between $U$ and $c$.
    \end{itemize}
    \end{description}
   
    We now prove the correctness.

    $(\Rightarrow)$ Assume that $G$ admits a $\kappa$-clique $K$. Let $A=\bigcup_{v\in K}P(v)$. By construction, $A\cup \{c\}$ forms a clique in $H$. After merging  papers in $A\cup \{c\}$ into a single paper, the {\cd} index of~$v^{\star}$ becomes \[\frac{\abs{X}}{\abs{X}+\abs{U\setminus A}}=\frac{p\cdot (n-\kappa)}{p\cdot (n-\kappa)+(n\cdot (q-p)-\kappa\cdot (q-p))}=\frac{p}{q}.\] Thus,~$I$ is a {\yesins} of {\mergemanipulation}.

    $(\Leftarrow)$ Assume that~$I$ is a {\yesins}. Then, there is a partition~$\mathcal{P}$ of~$W$ that complies with~$H$ and at least one paper from~$W^{\mathcal{P}}$ has a {\cd} index of at least~${p}/{q}$ in~$D^{\mathcal{P}}$. Let $P\in \mathcal{P}$ be the part containing~$c$. Since $c$ is adjacent to all papers in $U$ in the compatibility graph $H$, we may assume that $\abs{P}\ge 2$ whenever merging is beneficial.  Moreover, since $\beta=1$,  every other part of $\mathcal{P}$ is a singleton. 
    The presence of $y$, which is cited by all papers in $U\cup \{c\}$, together with the papers in $Z$ citing $y$, ensures that no paper in $(U\cup \{c\})^{\mathcal{P}}$ can attain a {\cd}-index of at least $d$. Consequently,~$v^{\star}$ is the only paper that can have a {\cd} index of at least~$d$ in~$D^{\mathcal{P}}$. 
    Note that $\cramped{\citeF{v^{\star}}{D^{\mathcal{P}}}}=X$ and $\citeB{v^{\star}}{D^{\mathcal{P}}}=\emptyset'$. 
    Let $P'=P\setminus \{c\}$. Then, the {\cd} index of~$v^{\star}$ in~$D^{\cramped{\mathcal{P}}}$ is 
    $[\frac{\abs{X}}{\abs{X}+(\abs{U}- \abs{P'}}$. 
    This value is at least $d=p/q$ only if $\abs{P'}\geq \kappa\cdot (q-p)$. This implies that there exists some $i\in [q-p]$ such that the set $\{v(i)\setmid v\in V\}$ contains at least $\kappa$ papers. These papers can belong to the same part $P$ only if they form a clique in $H$. By construction, the corresponding vertices in $G$ forms a clique. Hence, $G$ admits a $\kappa$-clique.   
\end{proof}

This reduction indicates that a core challenge for the single-target case lies within the structure of the compatibility graph. This prompts us to explore restricted compatibility graphs where each connected component contains only a constant number of vertices. Remarkably, for $d=1$, the problem becomes tractable even when multiple papers are targeted, provided that their number is bounded by a constant under this restriction.  
The core idea is to first guess (enumerate) which connected components (at most $\ell$) will contain the papers that eventually attain a CD index of $1$. We then enumerate all possible partitions of the papers within these components and solve the remaining instances greedily.

\begin{theorem}
\label{thm-merge-poly-constant-compatibility-ell-const}
    For $d=1$, {\mergemanipulation} is in {\xp} {\wrt}~$\ell$ when each connected component of $H[W]$ contains a constant number of papers, where $H$ is the compatibility graph and $W$ is the set of papers eligible for merging.
\end{theorem}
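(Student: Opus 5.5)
The plan is to guess the ``interesting'' part of a solution, which by assumption has bounded size, and to handle the rest by a local, component-by-component argument. Let $C_1,\dots,C_s$ be the connected components of $H[W]$, each of size at most a constant $c$. Every part of a partition complying with $H$ lies inside a single component, so a partition $\mathcal{P}$ of $W$ is just a choice of a partition $\mathcal{P}_j$ of each $C_j$. The number of such choices for one component is at most the Bell number $B_c$, which is a constant.

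\textbf{Step 1: guess the targets.} Enumerate a set $\mathcal{T}$ of at most $\ell$ components that will contain the $\ell$ compound papers reaching {\cd} index~$1$; there are $O(s^{\ell})$ choices. For each $C_j\in\mathcal{T}$, enumerate a partition $\mathcal{P}_j$ and a designated target part in it. This gives $O(s^{\ell}\cdot (B_c\cdot c)^{\ell})$ branches, which is polynomial for constant~$\ell$. In each branch the set $T$ of targeted compounds is fixed, and so is the number $\ss_0$ of nonsingleton parts already used. It remains to decide whether the other components can be partitioned with at most $\ss-\ss_0$ nonsingleton parts so that every $\mergv{P}$ with $P\in T$ has {\cd} index~$1$.

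\textbf{Step 2: rewrite index~$1$ as local constraints.} A compound $\mergv{P}$ has index~$1$ if and only if two conditions hold. First, at least one non-reference cites it. Second, no compound other than $\mergv{P}$ that is a non-reference of $\mergv{P}$ cites a reference of $\mergv{P}$. A compound is a reference of $\mergv{P}$ exactly when it contains an atom in $\outneighbor{P}{D}$. Consequently, for a fixed target the constraints are of two kinds. One kind concerns whether a given part $Q$ of another component is a reference, and this is decided by $Q$ alone. The other kind is the ``pairwise'' condition that a non-reference compound $\mergv{Q'}$ has an atom citing an atom of a reference compound $\mergv{Q}$. This pairwise condition couples the partition of the component of $Q'$ with that of the component of $Q$. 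The existence of a non-reference citer is a disjunctive condition; I would simply add it to the guess in Step 1, by also guessing one citing atom per target. That adds only a factor of $n^{\ell}$.

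\textbf{Step 3: remove the coupling (the main obstacle).} I would argue that in a non-target component merging only ever helps in one way: it absorbs a would-be violator (a non-reference atom citing a reference atom) into a compound that is itself a reference. Merging never removes a citation arc; it only unions neighbourhoods. Moreover, the set $R^\ast$ of atoms that are out-neighbours of targets is fixed once the target parts are fixed. So the first thing to try is a closure computation. Start with each non-target component in the all-singletons partition. Mark every atom that cites an atom lying in a reference compound as a ``violator''. In the violator's own component, brute-force the cheapest partition (fewest nonsingleton parts) in which every violator lies in a part containing an atom of $R^\ast$. If some violator cannot be absorbed this way, discard the branch. Then recompute which compounds are references and repeat. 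Since new reference compounds only arise by merging with atoms of $R^\ast$, the set of reference atoms only grows, so the process is monotone and stops after at most $\abs{W}$ rounds.

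The delicate point is showing that this greedy fixpoint uses the minimum number of merges among all valid completions. The tool I would try first is an exchange argument. Take any valid completion and compare it with the greedy choices component by component. Replacing the completion's partition of a component by the greedy one never enlarges the set of reference atoms seen by other components, because the greedy merges only with atoms already forced to lie in reference compounds. It also uses no more nonsingleton parts. Hence validity is preserved and the cost does not increase.

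\textbf{Step 4: conclude.} Accept the branch if the total number of nonsingleton parts is at most $\ss$. Each branch runs in polynomial time, and the number of branches is $n^{O(\ell)}$, which gives the claimed {\xp} algorithm.
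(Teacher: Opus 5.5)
Your guessing in Steps 1--2 and your count in Step 4 are fine. (In Step 1, allow several designated target parts in one component.) You are also right that completing the non-target components is coupled through citations between components. The paper's proof has the same outline: guess at most $\ell$ components, their partitions and the $\ell$ targets; give every other component an arbitrary cheapest partition satisfying an absorption condition; then verify. However, its condition only forces atoms that cite an out-neighbour of a target lying in the \emph{same} component. For example, an atom citing an out-neighbour of the target outside $W$ is never forced to merge. So the paper does not contain the argument you are after, and your closure is needed already for $\ell=1$. For $\ell=1$ your closure is correct, provided ties are broken so that every nonsingleton part consists only of current violators and atoms of $R^\ast$. Otherwise a cheapest partition may drag an unforced atom into a reference compound and create spurious violators.

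The gap is Step~3 for $\ell\ge 2$. There is no single $R^\ast$. The $j$-th target, with atom set $P_j$, has its own set $R^\ast_j=\outneighbor{P_j}{D}$, its own reference compounds and its own violators, and a violator of that target is repaired only by a part meeting $R^\ast_j$. Once this is made precise, the fact that the greedy only absorbs forced atoms no longer implies that it creates no new reference atoms for the other targets. An atom forced to become a reference of the first target may be absorbed by an atom of $R^\ast_1\cap R^\ast_2$ or by one of $R^\ast_1\setminus R^\ast_2$. Only the former also makes it a reference of the second target, which is not forced and can be fatal. So your exchange claim is false.

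Here is a concrete instance; write $a\to b$ for ``$a$ cites $b$''.
\begin{itemize}
\item Let $v_1,v_2\in W$ be isolated in $H$ with $v_1\to v_2\to v_1$.
\item Let $x_1,x_2,s,w\notin W$ with $x_1\to v_1$, $x_2\to v_2$, $v_2\to x_1$, $v_1\to x_2$, $v_1\to s$ and $v_1\to w$.
\item Let $\{u_1,u_2,r,r_1\}$ be a component of $H[W]$ with edges $u_1u_2$, $u_1r$, $u_2r$ and $u_1r_1$.
\item Add the arcs $v_1\to r$, $v_1\to r_1$, $v_2\to r$, $u_1\to s$, $u_2\to s$ and $w\to u_1$.
\end{itemize}
Consider the branch with targets $v_1,v_2$ and citers $x_1,x_2$. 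The only violators are $u_1,u_2$, both for $v_1$. The unique cheapest way to absorb them is the single part $\{u_1,u_2,r\}$. This part is also cited by $v_2$, so $w\in\citeR{v_2}{D^{\mathcal{P}}}$ and $v_2$ gets index $1/2$.

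Read per target, your procedure then discards the branch, since $w\notin W$. Yet the parts $\{u_1,r_1\}$ and $\{u_2,r\}$ give both $v_1$ and $v_2$ {\cd} index~$1$ using two merges, so the branch is feasible for $\ss=2$. Read literally with one $R^\ast$, your procedure does not flag $w$, because $w$ is an out-neighbour of $v_1$, and it accepts with one merge.

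So different absorptions of the same forced violators give incomparable reference sets for different targets. Here this happens because the two targets share a reference, which forces them to cite each other, and the statement allows this. The cheapest choice may be the only infeasible one, and the choice propagates to other components through citations. To close the gap you would have to branch over such choices in every non-target component and control how they interact. Nothing in your proposal bounds this by $n^{f(\ell)}$.
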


\begin{proof}
      Let $I=(D, W, H, \ell, \ss, d)$ be an instance of  {\mergemanipulation} with $d=1$. Let $\textsf{CC}$ denote the set of connected components of~$H[W]$.
      . We enumerate all subsets~$S$ of $\textsf{CC}$ with $\abs{S}\leq \ell$, i.e., $S$ contains at most~$\ell$ connected components of~$H$. For each such~$S$, let $\vset{S}=\bigcup_{H'\in S}\vset{H'}$ be the set of vertices of~$H$ contained in at least one element of~$S$. We then enumerate all partitions~$\mathcal{P}$ of~$\vset{S}$ that comply with~$H$ and satisfy $\abs{S^{\mathcal{P}}}\geq \ell$. For each such~$\mathcal{P}$, we enumerate all~$\ell$-subsets $\hat{S}\subseteq S^{\mathcal{P}}$. Papers in~$\hat{S}$ are required to have a {\cd} index of~$1$ in the final citation digraph. 
    Next, the algorithm handles the connected components of~$H[W]$ not in~$S$ one by one in arbitrary order. Let~$H'$ denote the currently considered  component. For every compound paper $v\in \hat{S}$, let~$R(v)$ be the set of papers in~$H'$ cited by at least one atom paper of~$v$, and let $C(v)$ be the set of papers in~$\vset{H'}\setminus R(v)$ citing at least one paper in~$R(v)$. 
    For every $v\in \hat{S}$ to attain a {\cd} index of~$1$ through merging papers in~$W$ in a manner that complies with the compatibility graph~$H$, it is necessary that each paper in~$C(v)$ is merged with at least one paper from~$R(v)$. 
    Thus the algorithm enumerates all partitions~$\mathcal{P'}$ of~$\vset{H'}$ that comply with~$H$ and satisfy the following condition:
    \begin{itemize}
        \item for every $v\in \hat{S}$ with $C(v)\neq\emptyset$, each paper~$u\in C(v)$ is contained in a part together with at least one paper from~$R(v)$, i.e., there exists $P\in \mathcal{P'}$ such that $u\in P$ and $P\cap R(v)\neq\emptyset$. 
    \end{itemize}
It may happen that no such partition exists. In this case, we discard the current choice of~$\hat{S}$. Otherwise, among all such partitions of $V(H')$, we arbitrarily select one $\mathcal{P}'$ that minimizes $|\mathcal{P}'^{\geq 2}|$ and set $\mathcal{P} \coloneqq \mathcal{P} \cup \mathcal{P}'$. 
If, after processing all connected components in~$\textsf{CC}\setminus S$, we have $\abs{\mathcal{P}^{\geq 2}}\leq \ss$ and all papers in~$\hat{S}$ have a {\cd} index of~$1$ in~$D^{\mathcal{P}}$, the algorithm returns ``\yes''. 
%
If the algorithm exhausts all enumerations without returning ``\yes'', we conclude that~$I$ is a {\noins}. 
The pseudocode is given in Algorithm~\ref{alg}. 

To see that the algorithm runs in polynomial time, let $n$ denote the number of papers in $D$, which is an upper bound on $\lvert \textsf{CC} \rvert$. Let $c$ denote the maximum number of papers in any connected component of $H[W]$. The loop in Line~2 enumerates $\bigo{n^{\ell}}$ subsets~$S$, the loop in Line~4 enumerates $\bigo{(c\cdot \ell)^{c\cdot \ell}}$ partitions~$\mathcal{P}$, the loop in Line~5 considers at most $2^{c\cdot \ell}$ choices for~$\hat{S}$, and the loop in Line~6 considers at most $n$ graphs~$H'$. Moreover, computing a partition~$\mathcal{P}'$ in Line~11 can be done in time $\bigo{c^c}$ by enumerating all partitions of~$\vset{H'}$ in Line~10. Therefore, the algorithm runs in polynomial time when $\ell$ and $c$ are constants.
\end{proof}

\begin{algorithm*}
\SetKwData{Left}{left}\SetKwData{This}{this}\SetKwData{Up}{up}
\SetKwFunction{Union}{Union}\SetKwFunction{FindCompress}{FindCompress}
\SetKwInOut{Input}{Input}\SetKwInOut{Output}{Output}
\Input{An instance $I=(D, W, H, \ell, \ss, d)$ of {\mergemanipulation} with $d=1$.}
\Output{``{\yes}'' if $I$ is a {\yesins}, and``\no'' otherwise.}
let $\textsf{CC}$ be the set of connected components of $H[W]$\;
\For{all subsets $S$ of $\textsf{CC}$ with $\abs{S}\leq \ell$}
{
let $\vset{S}=\bigcup_{H'\in S}\vset{H'}$\;
  \For{all partitions $\mathcal{P}$ of $\vset{S}$ that comply with~$H$ and satisfy $\abs{S^{\mathcal{P}}}\geq \ell$}
  {
     \For{all $\ell$-subsets $\hat{S}$ of $S^{\mathcal{P}}$}
     {
        \For{all $H'\in \textsf{CC}\setminus S$}
        {
           \For{all $v\in \hat{S}$}{let $R(v)=\outneighbor{v}{D}\cap \vset{H'}$\; let $C(v)=\{u\in \vset{H'}\setminus R(v)\setmid \outneighbor{u}{D}\cap R(v)\neq\emptyset\}$\;}
           \eIf{there exists a partition $\mathcal{P}'$ of $\vset{H'}$ that complies with $H$ such that for every $v\in \hat{S}$ with $C(v)\neq\emptyset$, it holds that for every~$u\in C(v)$ there exists $P\in \mathcal{P}'$ containing~$u$ with $P\cap R(v)\neq \emptyset$}{let $\mathcal{P'}$ be such a partition with the minimum $\abs{\mathcal{P}'^{\geq 2}}$\;
           $\mathcal{P}\coloneqq \mathcal{P}\cup \mathcal{P}'$\;
           }{break\;}
        }
        \If{$\abs{\mathcal{P}^{\geq 2}}\leq \ss$ and every paper in $\hat{S}$ has {\cd} index~$1$ in~$D^{\mathcal{P}}$}{return ``\yes''\;}
     }
  }
}
  return ``\no''\;
\caption{An algorithm for solving {\mergemanipulation} with $d=1$}
\label{alg}
\end{algorithm*}

Treating~$\ss$ as the parameter, we obtain an {\xp} algorithm for all possible~$d$. The main idea is to guess which connected components of the compatibility graph have multiple papers being merged, and then use a brute-force approach to solve the problem. 

\begin{theorem}
    \label{thm-merge-xp-ss-small-acycli}
    {\mergemanipulation} is in {\xp} with respect to~$\ss$, when each connected component of $H[W]$ contains a constant number of papers, where $H$ is the compatibility graph and $W$ is the set of papers eligible for merging.
\end{theorem}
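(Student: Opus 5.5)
We guess which connected components of $H[W]$ contain nonsingleton parts and then enumerate their partitions exhaustively; every other component stays entirely unmerged. Let $\textsf{CC}$ be the set of connected components of $H[W]$, let $n=\abs{\vset{D}}$, and let $c$ be the constant bounding the size of every component.

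The key observation is that every part of a partition $\mathcal{P}$ of $W$ complying with $H$ is a clique in $H$, so it lies inside a single component. Hence $\mathcal{P}$ is the union of partitions of the individual components. A component contributes to $\mathcal{P}^{\geq 2}$ only if it is split non-trivially, i.e., not into singletons only. Since $\abs{\mathcal{P}^{\geq 2}}\leq \ss$, at most $\ss$ components are not partitioned entirely into singletons.

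The algorithm enumerates all subsets $S\subseteq \textsf{CC}$ with $\abs{S}\leq \ss$; there are $\bigo{n^{\ss}}$ of them. For each $S$ it enumerates every tuple of partitions, one partition of $\vset{H'}$ complying with $H$ for each $H'\in S$. There are at most $(c^c)^{\ss}$ such tuples, a factor that depends only on $\ss$ because $c$ is constant. Each tuple is completed to a partition $\mathcal{P}$ of $W$ by taking singletons on all components outside $S$. For each resulting $\mathcal{P}$, the algorithm checks $\abs{\mathcal{P}^{\geq 2}}\leq \ss$, builds $D^{\mathcal{P}}$ by Definition~\ref{def-merging}, and computes $\dindex{v}{D^{\mathcal{P}}}$ for every $v\in W^{\mathcal{P}}$ directly from the three sets $\citeF{v}{D^{\mathcal{P}}}$, $\citeB{v}{D^{\mathcal{P}}}$, and $\citeR{v}{D^{\mathcal{P}}}$. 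It answers ``\yes'' iff, for some $\mathcal{P}$, at least $\ell$ of these papers have {\cd} index at least $d$. Self-citations created by merging cause no trouble, because the merging operation itself excludes $P$ from the in- and out-neighbourhoods of $\mergv{P}$, and the index is simply evaluated on the resulting digraph. The check for a single $\mathcal{P}$ takes polynomial time.

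Correctness follows from the observation above: every feasible partition arises for the choice of $S$ as the set of components it splits non-trivially, together with its restrictions to those components. Conversely, every enumerated $\mathcal{P}$ that passes the checks is a genuine witness. The total running time is $\bigo{n^{\ss}}\cdot c^{c\cdot\ss}\cdot n^{O(1)}$, which is {\xp} in $\ss$.

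There is no real obstacle. The only step that needs care is justifying that the partitions outside $S$ can be fixed to singletons without loss of generality. This holds because the enumeration ranges over all candidate partitions rather than fixing them greedily: a partition that splits a component non-trivially is covered by including that component in $S$. The budget $\ss$ guarantees that $\abs{S}\leq\ss$ suffices.
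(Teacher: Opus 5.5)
Your proposal is correct and follows the paper's proof. Both guess a set $S$ of at most $\beta$ components of $H[W]$ that receive nonsingleton parts, enumerate their partitions by brute force, leave all other papers as singletons, and check the resulting {\cd} indices directly in $D^{\mathcal{P}}$. Enumerating partitions component by component (giving the factor $c^{c\beta}$ instead of the paper's $(\beta c)^{\beta c}$) is only a cosmetic refinement.
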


\begin{proof}
    Let $I=(D, W, H, \ell, \ss, d)$ be an instance of  {\mergemanipulation}. Let $\textsf{CC}$ be the set of connected components of~$H[W]$. We enumerate all subsets~$S$ of $\textsf{CC}$ with $\abs{S}\leq \ss$. Here,~$S$ is supposed to be exactly the set of connected components in~$H[W]$ where at least one nonsingleton set of papers is merged into a single entity. Hence, we impose the condition $\abs{S} \leq \ss$. Papers outside of~$S$'s connected components are not intended for merging. For each enumerated~$S$, let $\vset{S}=\bigcup_{H'\in S}\vset{H'}$. 
    Next, we enumerate all partitions~$\mathcal{P}$ of~$\vset{S}$. If a partition~$\mathcal{P}$ complies with~$H$, satisfies $\abs{\mathcal{P}^{\geq 2}}\leq \ss$, and ensures that at least~$\ell$ papers in ${\vset{S}^{\mathcal{P}}}\cup (W\setminus \vset{S})$ have {\cd} indices of at least~$d$ in~$D^{\mathcal{P}}$, we return ``\yes''. 
    If no such partitions exist after enumerating all possibilities, we return~``\no''.

     Let $c$ denote the maximum number of papers in any connected component of $H[W]$. Since there are at most $n^{\ss}$ subsets~$S$ to enumerate, and at most $(\ss\cdot c)^{\ss\cdot c}$ partitions  for each~$S$, where $n=\abs{W}$, the algorithm runs in polynomial time, provided~$\ss$ and~$c$ are constant.
\end{proof}

We now consider the variant {\mergemanipulationinf}, where the number of paper merges is unrestricted. The reduction used in the proof of Theorem~\ref{thm-merge-d-1-np-hard} also applies here, yielding the following corollary.

\begin{corollary}
    \label{cor-merge-infty-d-1-np-hard}
    For any positive rational number $d\leq 1$, {\mergemanipulationinf} is {\nph}, even if $\ell=1$ and the citation digraph is acyclic.
\end{corollary}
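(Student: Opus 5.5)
The plan is to reuse, essentially verbatim, the reduction from \prob{$3$-Coloring} in the proof of Theorem~\ref{thm-merge-d-1-np-hard}, and to check that neither direction of its correctness argument depends on the budget $\ss=3$. Given a non-bipartite graph $G$, we build the same citation digraph $D$, the same set $W=V\cup\{v^{\star},c_1,c_2,c_3\}$, and the same compatibility graph $H$. That is, $H$ consists of the complement edges of $G$ together with all edges between $V$ and $\{c_1,c_2,c_3\}$. We keep $\ell=1$ and output this as an instance of {\mergemanipulationinf}, simply dropping $\ss$. Acyclicity, the topological ordering, and polynomial-time constructibility carry over unchanged.

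For the forward direction, a proper $3$-coloring $(U_1,U_2,U_3)$ gives the partition $\{U_1\cup\{c_1\},U_2\cup\{c_2\},U_3\cup\{c_3\},\{v^{\star}\}\}$, which complies with $H$. Exactly as before, $v^{\star}$ then has {\cd} index $\frac{\abs{X}-\abs{X'}}{\abs{X}+\abs{X'}}=p/q=d$. This partition uses three merges, but that count is now irrelevant.

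The backward direction is where I must confirm that the original argument never used $\abs{\mathcal{P}^{\geq 2}}\leq 3$. It rests on three facts.

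\textbf{(i)} Only papers in $V\cup\{c_1,c_2,c_3\}$ can be merged, since $v^{\star}$ is isolated in $H$.

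\textbf{(ii)} Every compound paper containing some $c_i$ has {\cd} index at most $\frac{1}{q+1}<d$. Its only possible citers are $X'\cup\{v^{\star}\}$, while the papers in $Z$ cite its reference $y$. Papers consisting solely of vertices of $V$ have no citations at all.

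\textbf{(iii)} For $v^{\star}$ we have $\citeF{v^{\star}}{}=X$ and $\citeB{v^{\star}}{}=X'$. Hence its {\cd} index reaches $p/q$ only if $\citeR{v^{\star}}{}=\emptyset$, which forces every $v\in V$ to lie in a part together with some $c_i$.

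The parts containing distinct $c_i$ are distinct, because $c_1,c_2,c_3$ are pairwise nonadjacent in $H$. Each such part minus its $c_i$ is a clique in $H$ on $V$, that is, an independent set in $G$, so these parts yield a $3$-coloring. None of these steps counts merges.

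The only point needing care is (ii) under arbitrary merges. The concern is that a large merged part might change the sets around some compound paper $v_{P_i}$. Its citers are still only $X'$ and $v^{\star}$. Its references remain exactly $\{y\}$, unless $y$ is absorbed, which cannot happen since $y\notin W$. And every paper of $Z$ still cites $y$ without citing $v_{P_i}$. So the bound holds regardless of how many merges are made, and the corollary follows.
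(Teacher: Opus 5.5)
Your proposal is the paper's argument: the corollary is obtained by reusing the \prob{$3$-Coloring} reduction of Theorem~\ref{thm-merge-d-1-np-hard} unchanged, because neither direction of its correctness proof uses the budget $\ss$. The overall structure, facts (i) and (iii), and the extraction of the $3$-coloring are all fine.

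There is one factual error, in step (ii), which is the step you singled out as delicate. You claim that the citers of $v_{P_i}$ are only $X'\cup\{v^{\star}\}$ and that its references are exactly $\{y\}$. Both claims are false, because every paper of $V$ cites all of $c_1,c_2,c_3$.

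\begin{itemize}
\item Every part $Q\neq P_i$ containing a vertex of $V$ gives a citer $v_Q$ of $v_{P_i}$.
\item If $P_i\cap V\neq\emptyset$, then $v_{P_i}$ also cites $v_{P_j}$ for $j\neq i$.
\end{itemize}

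The bound itself still holds. Each such $v_Q$ also cites $y$, which is a reference of $v_{P_i}$. So $v_Q$ either lies in $\citeB{v_{P_i}}{D^{\mathcal{P}}}$ or is a reciprocal neighbor, and in the latter case it is excluded from all three sets. Hence $\citeF{v_{P_i}}{D^{\mathcal{P}}}\subseteq X'\cup\{v^{\star}\}$ and $Z\subseteq\citeR{v_{P_i}}{D^{\mathcal{P}}}$. This gives $\dindex{v_{P_i}}{D^{\mathcal{P}}}\le \frac{q-p+1}{(q-p+1)+q(q-p+1)}=\frac{1}{q+1}<d$, independently of how many merges are made.
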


\remove{Next, for the case where the compatibility graph contains only small connected components, we have the following intractable result
.  

\begin{theorem}
    \label{thm-merg-inf-np-h-small-compatibility-but-cyclic-citation}
For $d=1$, {\mergemanipulationinf} is {\nph}, even when each connected component of the compatibility graph contains at most two papers, and every paper has at most three references. 
\end{theorem}

\begin{proof}
   We prove the theorem via a reduction from {\prob{MaxUSAT}}. Let $I=(F, \kappa)$ be an instance of {\prob{MaxUSAT}}, where each clause in $F$ consists of at most three literals, exactly one of which is positive. Let $\textsf{Var}(F)$ denote the set of variables in~$F$. We construct an instance $g(I)=(D, W, H, \ell, d)$ of {\mergemanipulationinf} with $d=1$ as follows. See Figure~\ref{fig-merge-inf-np-h-small-but-cyclic} for an illustration.  

\begin{description}
    \item[Citation digraph~$D$.] For each variable $v$, we create two papers, denoted as $v(1)$ and $v(2)$. For $i \in [2]$, let $V(i) = \{v(i) \setmid v \in \textsf{Var}(F)\}$.  
    For each clause $cl$ in $F$, we create a paper $p(cl)$.  
    Additionally, we introduce two papers, $y$ and $z$. The arcs in~$D$ are defined as follows:
    \begin{itemize}
        \item For every variable $v$ in $F$, $v(2)$ cites $v(1)$.  
        \item For every clause $cl$ in $F$, we have the following citations. For the positive literal $v$ in $cl$, $p(cl)$ cites $v(2)$, and $v(1)$ cites $p(cl)$. For each negative literal $\neg{v}$ in $cl$, $p(cl)$ cites $v(1)$.  
        \item All papers in $V(1) \cup V(2) \cup \{z\}$ cite $y$.  
    \end{itemize}
Notice that every paper has at most three references in~$D$.

    \item[Compatibility graph~$H$.] The edge set of $H$ is $\{\edge{v(1)}{v(2)} \setmid v \in \textsf{Var}(F)\}$. Clearly, each connected component of $H$ contains at most two papers.
\end{description}

Finally, we define $W$ as the set of all the above-created papers except $y$ and $z$, and set $\ell = \kappa$.   
We prove the correctness of the reduction as follows.

\begin{figure}
    \centering
    \includegraphics[width=0.65\linewidth]{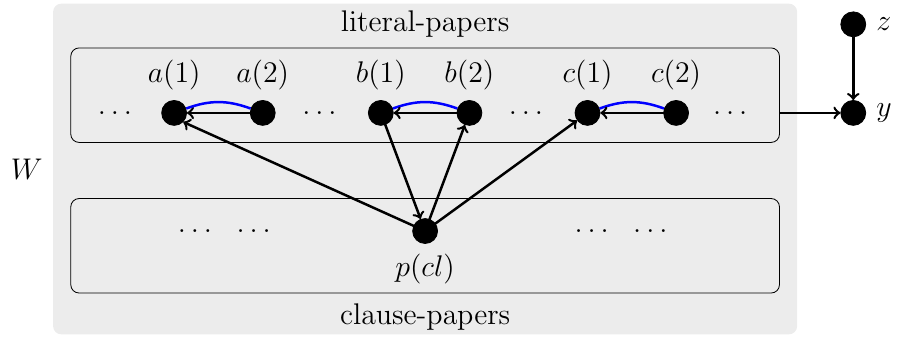}
    \caption{Illustration of the construction in the proof of Theorem~\ref{thm-merg-inf-np-h-small-compatibility-but-cyclic-citation}. Here, $cl=(b \vee \neg a \vee \neg c)$ is a clause. Blue edges belong to the compatibility graph $H$, while dark arcs belong to the citation digraph $D$.}
    \label{fig-merge-inf-np-h-small-but-cyclic}
\end{figure}

$(\Rightarrow)$ Assume there exists a truth assignment $f: \textsf{Var}(F) \to \{0,1\}$ that fails to satisfy at least~$\kappa$ clauses in~$F$. Consider the partition~$\mathcal{P}$ of~$W$ such that $\mathcal{P}^{\geq 2} = \{\{v(1), v(2)\} \setmid v\in \textsf{Var}(F), f(v) = 1\}$. It complies with the compatibility graph~$H$.  
Let~$D'$ denote the citation digraph~$D^{\mathcal{P}}$ obtained from~$D$ by merging, for each variable~$v$ with $f(v)=1$, the two papers~$v(1)$ and~$v(2)$. We show that every paper corresponding to a clause in~$F$ not satisfied by~$f$ has a {\cd} index of~$1$ in~$D'$.  
Let $cl = (b \vee \neg a \vee \neg c)$ be a clause not satisfied by~$f$. 
Thus, $f(b) = 0$ and $f(a) = f(c) = 1$. Moreover,~$a(1)$ and~$a(2)$ are merged into a single paper, denoted~$a^{\star}$, and~$c(1)$ and~$c(2)$ are merged into a single paper, denote~ $c^{\star}$.  
Consequently, $b(1) \in \citeF{p(cl)}{D'}$. By construction, the only references of~$p(cl)$ are~$b(2)$,~$a^{\star}$, and~$c^{\star}$, none of which are cited by other papers in~$D'$. Hence, the {\cd} index of~$p(cl)$ in~$D'$ is~$1$. 
Since at least $\kappa$ clauses are unsatisfied by~$f$, $g(I)$ is a {\yesins} of {\mergemanipulationinf}. 

$(\Leftarrow)$ Assume that there exists a partition~$\mathcal{P}$ of~$W$ that complies with~$H$ and satisfies that at least~$\kappa$ papers have a {\cd} index of~$1$ in $D^{\mathcal{P}}$.  
By the definition of~$H$, every component of $\mathcal{P}^{\geq 2}$ consists of two papers corresponding to the same variable. The presence of the two papers~$y$ and $z$ precludes any literal-paper from attaining a {\cd} index of $1$, regardless of $\mathcal{P}$. Therefore, among papers from~$W$, only those corresponding to clauses of $F$ can have a {\cd} index of $1$ in $D^{\mathcal{P}}$. 
Let~$p(cl)$ be a paper corresponding to a clause~$cl$ in~$F$ that has a {\cd} index of $1$ in $D^{\mathcal{P}}$. Let $\neg a$ be a negative literal in~$cl$ (if any), and let~$b$ be the positive literal. 
Since~$p(cl)$ has a {\cd} index of~$1$ in~$D^{\mathcal{P}}$, we have that $\{a(1), a(2)\} \in \mathcal{P}^{\geq 2}$ and $\{b(1), b(2)\}\not\in \mathcal{P}^{\geq 2}$. 
Now, define a truth assignment~$f$ by setting $f(v)=1$ if $\{v(1), v(2)\} \in \mathcal{P}^{\geq 2}$, and $0$ otherwise. By the above analysis,~$f$ does not satisfy any clause whose corresponding paper has a {\cd} index of~$1$ in $D^{\mathcal{P}}$. Therefore, $I$ is a {\yesins} of {\prob{MaxUSAT}}. 
 \end{proof}

It is important to note that in the proof above, the constructed citation digraph is not acyclic, as demonstrated by the directed cycle formed by $p(cl)$, $b(1)$, and $b(2)$ in Figure~\ref{fig-merge-inf-np-h-small-but-cyclic}. Whether the hardness result holds for acyclic citation digraphs remains an open question.
}

Unfortunately, for $d=1$, the complexity of the previously studied special cases remains unresolved.

\begin{openquestion}
What is the computational complexity of {\mergemanipulationinf} for $d=1$ under the following restrictions: (i) the citation digraph is acyclic, (ii) each connected component of the compatibility graph contains only a constant number of papers, or (iii) both restrictions hold?
\end{openquestion}

Nevertheless, since {\mergemanipulationinf} is polynomial-time Turing reducible to {\mergemanipulation}, Theorem~\ref{thm-merge-poly-constant-compatibility-ell-const} immediately yields the following {\xp} upper bound.

\begin{corollary}
\label{cor-merge-poly-constant-compatibility-ell-const}
For $d=1$, {\mergemanipulationinf} is in {\xp} {\wrt}~$\ell$ when each connected component of the compatibility graph induced by the set $W$ of mergeable papers contains a constant number of papers.
\end{corollary}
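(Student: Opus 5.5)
The plan is a polynomial-time Turing reduction from {\mergemanipulationinf} to {\mergemanipulation}, followed by an appeal to Theorem~\ref{thm-merge-poly-constant-compatibility-ell-const}. Given an instance $(D, W, H, \ell, d)$ of {\mergemanipulationinf} with $d=1$, we observe that any partition $\mathcal{P}$ of $W$ complying with $H$ satisfies $\abs{\mathcal{P}^{\geq 2}}\leq \lfloor \abs{W}/2\rfloor$, because nonsingleton parts are pairwise disjoint and each contains at least two papers of $W$.

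Hence the unbounded instance is a {\yesins} if and only if the bounded instance $(D, W, H, \ell, \ss, d)$ with $\ss=\lfloor \abs{W}/2\rfloor$ is a {\yesins} of {\mergemanipulation}. In the forward direction, a witness partition for the unbounded problem automatically respects this budget. In the backward direction, a witness partition for the bounded problem is in particular a witness for the unbounded one.

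The reduction does not modify $D$, $W$, $H$, $\ell$, or $d$. Consequently, the structural hypothesis is preserved: every connected component of $H[W]$ still contains a constant number of papers, and the parameter $\ell$ is unchanged. We then run the algorithm of Theorem~\ref{thm-merge-poly-constant-compatibility-ell-const} on this single instance. That algorithm runs in time $n^{O(\ell)}\cdot f(c,\ell)$, where $c$ is the constant bound on component size, and this is an {\xp} running time with respect to $\ell$. (Alternatively, one could query all values $\ss\in\{0,\dots,\lfloor\abs{W}/2\rfloor\}$, giving a genuine Turing reduction, but a single query already suffices.)

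There is no real obstacle here. The only point to check is that the running time of the algorithm behind Theorem~\ref{thm-merge-poly-constant-compatibility-ell-const} does not depend on $\ss$ in a way that breaks the {\xp} bound. Inspecting Algorithm~\ref{alg} confirms this: $\ss$ appears only in the final comparison $\abs{\mathcal{P}^{\geq 2}}\leq \ss$ and plays no role in the enumeration. Therefore setting $\ss$ as large as $\abs{W}$ leaves the running time unaffected, and the corollary follows.
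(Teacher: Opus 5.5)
Your proposal is correct and follows the paper's own argument: the paper also derives the corollary by reducing {\mergemanipulationinf} to {\mergemanipulation} and then invoking Theorem~\ref{thm-merge-poly-constant-compatibility-ell-const}. Your single query with budget $\beta=\lfloor |W|/2\rfloor$ is a concrete instance of the paper's Turing reduction. Your observation that $\beta$ enters Algorithm~\ref{alg} only in the final budget test supplies the details the paper leaves implicit: the hypothesis on $H[W]$, the parameter $\ell$, and the running time are all unaffected.
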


\onlyfull{
\begin{proof}
We prove the theorem via a reduction from {\prob{Multicolored Clique}}. Let $(G, \{V_i\}_{i\in [\kappa]})$ be an instance of {\prob{Multicolored Clique}}. We assume that each $G[V_i]$ contains no edge and every vertex in $G$ is of degree $t$ for some positive $t$. This assumption does not change the {\nphns} (also {\wahns}) of {\prob{Multicolored Clique}}. We create an instance $I=(D, W, H, \ss, d)$ of {\mergemanipulationinf} with $d=1$ as follows. Let $n$ and $m$ denote the number of vertices and the number of edges of~$G$, respectively.

\begin{description}
    \item[Citation digraph~$D$.] For each vertex $v\in \vset{G}$, we create two papers denoted as $v(1)$ and $v(2)$. For each $i\in [\kappa]$ and $j\in [2]$, let $V(i, j)=\{v(j) \setmid v\in V_i\}$. For each edge $e\in \eset{G}$, we create one paper $p(e)$. For each $i\in [\kappa]$, we create a set $A(i)$ of $m+1$ papers. The arcs in~$D$ so that exactly the following citations exist:
    \begin{itemize}
        \item For every $v\in \vset{G}$, $v(2)$ cites $v(1)$.
        \item For every edge $e=\edge{u}{v}$ in $G$, $p(e)$ cites $u(1)$ and $v(1)$. 
        \item For every $i\in [\kappa]$, all papers from $V(i,1)$ cites all papers from $A(i)$, and all papers from $A(i)$ cites all papers from $V(i,2)$. 
    \end{itemize}
    Notice that the citation digraph $D$ is not acyclic.

    \item[Compatibility graph~$H$.] The edge set of $H$ is $\{\edge{v(1)}{v(2)} \setmid v\in \vset{G}\}$. 
\end{description}
To complete the reduction, we let $W=\vset{D}$ and let $\ss=n\cdot (m+1)+(m-(t\cdot \kappa-\frac{\kappa\cdot (\kappa-1)}{2}))$. It is not difficult to see that our reduction can be carried out in polynomial time. We prove the correctness of the reduction as follows. 

$(\Rightarrow)$ Assume that $G$ has a $\kappa$-colored clique $K$. Let~$D'$ denotes the citation digraph obtained from $D$ by merging, for every $v\in \vset{D}\setminus K$, the two papers $v(1)$ and $v(2)$. We prove that $I$ is a {\yesins} by showing that at least $\ss$ papers in $D'$ have a {\cd} index of $1$. 

\begin{claim}
    For all $i\in [\kappa]$, all papers from $A(i)$ have a {\cd} index of $1$ in $D'$.
\end{claim}

\begin{proof}
Let $a$ be a paper in some $A(i)$, where $i\in [\kappa]$. 
    By the construction of arcs in $D$, we know that~$a$ has vertices in $V(i,2)$ as its references in $D$, and has vertices in $V(i, 1)$ 
\end{proof}
\end{proof}

\begin{theorem}
For $d=\ell=1$, {\mergemanipulation} is {\fpt} when parameterized by the maximum number vertices in a connected component of the compatibility graph~$H$.
\end{theorem}

\begin{proof}
    We may assume that we know the focal paper~$v^{\star}$. Then, to make the {\cd} index of~$v^{\star}$ to be one, we need to merge papers so that after the merges no papers from~$N_B$ and~$N_R$ exists. Consider the following algorithm. We consider connected components of~$H$ one after another in any arbitrary order. For a currently considered connected component~$CC$, we check if there is a clique partition of~$CC$ so that every paper from $N_B\cup N_R$ restricted to~$CC$ is partitioned into a clique with at least one paper from the references of~$v^{\star}$. If all connected components admit such partitions, we conclude that the given instance is a {\yesins}; otherwise, it is a {\noins}.
\end{proof}
}

To conclude this section, we observe that {\mergemanipulation} can be solved by enumerating all possible partitions of~$W$, implying that it is {\fpt} with respect to~$\abs{W}$, the number of papers eligible for merging. Furthermore, since {\mergemanipulationinf} is polynomial-time Turing-reducible to {\mergemanipulation}, we obtain the following result.

\begin{corollary}
   {\mergemanipulationinf} is {\fpt} with respect to ${\abs{W}}$, when each connected component of the compatibility graph contains a constant number of papers.
\end{corollary}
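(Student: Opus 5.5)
The plan is to reduce {\mergemanipulationinf} to {\mergemanipulation} by a polynomial-time Turing reduction, and then combine this with the brute-force {\fpt} algorithm for {\mergemanipulation} with respect to~$\abs{W}$ noted just above. Given an instance $(D, W, H, \ell, d)$ of {\mergemanipulationinf}, I observe that any partition~$\mathcal{P}$ of~$W$ satisfies $\abs{\mathcal{P}^{\geq 2}}\leq \abs{W}/2$, since each nonsingleton part has at least two elements. Hence the instance is a {\yesins} if and only if the single instance $(D, W, H, \ell, \ss, d)$ of {\mergemanipulation} with $\ss=\lfloor \abs{W}/2\rfloor$ is a {\yesins}. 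This is in fact a many-one reduction that leaves~$W$ unchanged, so the parameter is preserved.

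Next I solve the resulting {\mergemanipulation} instance by enumerating all partitions~$\mathcal{P}$ of~$W$. There are at most $\abs{W}^{\abs{W}}$ of them (Bell number $B_{\abs{W}}$). For each~$\mathcal{P}$, I check three things in polynomial time: whether every part is a clique in~$H$, whether $\abs{\mathcal{P}^{\geq 2}}\leq \ss$, and whether at least~$\ell$ compound papers in~$W^{\mathcal{P}}$ have {\cd} index at least~$d$ in~$D^{\mathcal{P}}$. The last check is done by building~$D^{\mathcal{P}}$ according to Definition~\ref{def-merging} and computing $\citeF{v}{D^{\mathcal{P}}}$, $\citeB{v}{D^{\mathcal{P}}}$ and $\citeR{v}{D^{\mathcal{P}}}$ for each $v\in W^{\mathcal{P}}$ directly from their definitions. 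The total running time is $\abs{W}^{O(\abs{W})}\cdot \abs{I}^{O(1)}$, which is {\fpt} with respect to~$\abs{W}$.

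The assumption that each connected component of~$H$ has constant size is not needed for this argument. It can only help, for example by restricting the enumeration to partitions that refine the components of~$H[W]$. So the statement holds a fortiori. There is no real obstacle here; the only point requiring care is correctly evaluating the {\cd} index after merging, including possible self-citations and mutual citations. This is handled by applying Definition~\ref{def-merging} literally, since compound vertices exclude~$P$ from their own in- and out-neighborhoods.
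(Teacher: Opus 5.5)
Your proof is correct and follows the paper's own argument. Brute-force enumeration of all partitions of $W$ shows that {\mergemanipulation} is {\fpt} in $\abs{W}$, and {\mergemanipulationinf} reduces to it; your choice $\ss=\lfloor\abs{W}/2\rfloor$ makes this reduction a single call that preserves the parameter. Your remark that the constant-component-size hypothesis is never used is also right.
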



\onlyfull
{
\section{{\cd} Index Improving by Merging}
\begin{theorem}
\label{thm-merge-improvement-wah-ell-1}
    {\mergeimprovement} is {\wah} with respect to~$\ss$. This holds even if $\ell=1$ and the citation digraph is acyclic.
\end{theorem}

\begin{proof}
    We prove the theorem via a reduction from {\prob{Clique}}. Let~$I=(G, \kappa)$ be an instance of {\prob{Clique}}. Without loss of generality, we assume that~$\kappa\geq 1$ is odd. We create an instance $g(I)=(D, W, J, H)$ of {\mergeimprovement} as follows.
\begin{description}
    \item[Citation digraph~$D$.] For each $v\in \vset{G}$, we create two papers $v(1)$ and $v(2)$. For $i\in [2]$, let $V(i)=\{v(i) \setmid v\in \vset{G}\}$. In addition, we create six papers $v^{\star}$,~$u$,~$x_1$,~$x_2$,~$x_3$,~$x_4$. Besides, we create a set~$Y$  of~$\frac{\kappa-1}{2}$ vertices. This completes the construction of vertices of $D$. So, it holds that $\vset{D}=V(1)\cup V(2)\cup \{v^{\star}, u, x_1, x_2, x_3, x_4\}$. The arcs in~$D$ are created so that
    \begin{itemize}
        \item Both~$x_1$ and~$x_2$ cite~$v^{\star}$.
        \item $x_3$ cites both~$v^{\star}$ and~$u$.
        \item $v^{\star}$ cites~$u$.
        \item All papers from~$V(1)$ cite~$x_4$.
        \item All papers from~$Y$ cite both~$x_4$ and all papers of~$V(1)$.
        \item All papers from~$V(2)$ cite~$v^{\star}$.
    \end{itemize}

    \item[Compatibility graph~$H$.] The edge set of $H$ is
    \noindent\[\{\edge{v(1)}{u(1)} \setmid \edge{v}{u}\in \eset{G}\} \cup \{\edge{v^{\star}}{v(1)} \setmid v\in \vset{G}\}.\]
\end{description}

Finally, let $W=$ and let $J=\{v^{\star}\}$. It remains to prove the correctness of the reduction. The {\cd} index of~$v^{\star}$ in~$D$ is~$1/3$.

$(\Rightarrow)$ Assume that there is a $\kappa$-clique~$K$ in~$G$. Then, by the definition of~$H$, $K\cup \{v^{\star}\}$ is a clique in~$H$. Consider the operation of merging all papers in $K\cup \{v^{\star}\}$. After the merge, the {\cd} index of the compound paper is $\frac{2+\kappa-1-\abs{Y}}{2+\kappa+1+\abs{Y}}$ which is strictly larger than~$1/3$.

$(\Leftarrow)$ similar.
\end{proof}

\begin{theorem}
    {\mergeimprovement} is {\wbh} with respect to~$\ss$, even if $\ell=1$ and the citation digraph is acyclic.
\end{theorem}

\begin{proof}
    We prove the theorem via a reduction from {\prob{RBDS}}. Let $(G, \kappa)$ be an instance of {\prob{RBDS}} where~$G$ is a bipartite graph with the vertex partition $(R, B)$. We create a citation digraph~$D$ as follows. First, for each $v\in R\cup B$, we create one paper in~$D$ denoted by the same symbol. Then, we create two vertices~$v^{\star}$ and~$x$ in~$D$. Let $J=\{v^{\star}\}$, and let $d=-\frac{1}{1+\abs{R}}$. The arcs  of~$D$ are created so that exactly the following citations hold.
    \begin{itemize}
        \item The paper~$v^{\star}$ cites all papers from~$B$.
        \item $x$ cites both~$v^{\star}$ and all papers from~$B$.
        \item For every $r\in R$ and $b\in B$,~$r$ cites~$b$ if and only if they are adjacent in~$G$.
    \end{itemize}
    The correctness is easy to see.
\end{proof}

Next, we show that strictly improving the {\cd} index of a single paper to a positive {\cd} index can be done via an efficient algorithm.

%
%


\begin{theorem}
    {\mergeimprovement} is {\emph\nph}. Moreover, it is {\emph\wah} when the parameter is $\abs{\mathcal{P}}+\abs{U}$, and this holds even if the citation digraph is acyclic.
\end{theorem}

\begin{proof}
    We prove the theorem via a reduction from {\prob{Multicolored Clique}}. Let $(G, (V_i)_{i\in [\kappa]})$ be an instance of {\prob{Multicolored-Clique}}. Below we construct an instance of {\mergeimprovement}. We first construct a citation digraph~$D$ as follows. The vertex set of~$D$ is $V(G)\cup U\cup X(U)\cup\{a, b\}$ where $U=\{u_1, u_2, \dots, u_{\kappa}\}$ is a set of~$\kappa$ vertices, and $X(U)=\{x(u_1), x(u_2), \dots, x(u_{\kappa})\}$ is a another set of~$\kappa$ vertices.
We create arcs in~$D$ so that for every $u_i\in U$, $i\in [\kappa]$, it holds that
\begin{itemize}
\item $\citeF{u_i}{D}=a$,
\item $\citeB{u_i}{D}=\emptyset$.
\item $\citeR{u_i}{D}=V_i\cup \{b\}$, and
\end{itemize}
More precisely, we create exactly the following arcs in~$D$: for each $i\in [\kappa]$, we construct the arcs $\arc{u_i}{x(u_i)}$, $\arc{a}{u_i}$, and $\arc{v}{x(u_i)}$ for all $v\in V_i$, $\arc{b}{x(u_i)}$.
The compatibility graph~$H$ contains exactly the following edges: all edges in~$G$ crossing the partition $(V_i)_{i\in [\kappa]}$, and all edges between~$b$ and all vertices from~$V(G)$, i.e.,
\begin{equation}
    \begin{split}
        E(H)=\{\edge{v}{v'}\in E(G) \setmid v\in V_i, v'\in V_j, i\neq j, i,j\in [\kappa]\}\\
        \cup \{\edge{b}{v} \setmid v\in V(G)\}.
    \end{split}
\end{equation}
Finally, we let $k=\kappa$. The instance of {\mergeimprovement} is $(D, H, U, k)$ which can be constructed in polynomial time by the above procedure.
Now we prove the correctness of the reduction.

$(\Rightarrow)$ Assume that~$G$ contains a $\kappa$-clique $K\subseteq V(G)$. By the definition of the compatibility graph~$H$, $\{x\}\cup K$ is clique in~$H$. After merging these~$\kappa+1$ papers, the {\cd} index of each $u_i\in U$ increases from $\frac{1}{1+\abs{V_1\cup \{y\}}}$ to $\frac{1}{1+\abs{V_1}}$.

$(\Leftarrow)$ To increase the {\cd} index of a $u_i\in U$, we need to merge at least two papers citing only~$x(u_i)$. By the definition of the compatibility graph~$H$, this means that we need to merge~$b$ with one paper in~$V_i$. Then, it is easy to see that all vertices in~$G$ that are merged with~$b$ form a clique.
\end{proof}
}

\onlyfull{
\begin{theorem}
 When~$U$ is a singleton~$\{v^{\star}\}$ and~$v^{\star}$ is an isolated vertex in the compatibility graph,   {\mergeimprovement} is  polynomial-time solvable.
\end{theorem}

\begin{proof}
We distinguish three cases.

\begin{description}
    \item[Case~1: $\nciteF{v^{\star}}{D}>\nciteB{v^{\star}}{D}$]  \hfill

    If there exists at least one edge in subgraph of the compatibility graph induced by $\citeB{v^{\star}}{D}\cup \citeR{v^{\star}}{D}$, or at least one edge between $\citeB{v^{\star}}{D}\cup \citeR{v^{\star}}{D}$ and~$\outneighbor{v^{\star}}{D}$, we can merge two papers linked by such an edge to strictly improve the {\cd} index of~$v^{\star}$. Therefore, if this is the case, we return {\yes}. Otherwise, merging papers does not change the {\cd} index of~$v^{\star}$, and hence in this case we return {\no}.

    \item[Case~2: $\nciteF{v^{\star}}{D}=\nciteB{v^{\star}}{D}$] \hfill

\end{description}
\end{proof}
}

\section{Manipulation by Paper Addition}
\label{sec-addition}
Now we examine the operation of paper addition.
We observe that {\addimprovement} is in {\xp} when parameterized by~$k$: it can be solved by enumerating all possible sets of at most~$k$ papers to be added.
This raises the question of whether we can enhance the result to fixed-parameter tractability. The next theorem answers this question in the negative.

\begin{theorem}
\label{thm-add-wbh-k-all-d}
For any positive rational number $d\leq 1$, {\addimprovement} is {\wbh} when parameterized by~$k$, 
even if the citation digraph is acyclic and bipartite.
\end{theorem}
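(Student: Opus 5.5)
The plan is a parameterized reduction from \prob{RBDS}, which is \wbh{} with respect to~$\kappa$. The distinguished papers will be the red vertices, the unregistered papers the blue vertices, and the budget $k=\kappa$. The key observation is that a paper with no references has empty $\citeB{\cdot}{}$ and $\citeR{\cdot}{}$ sets, because both require citing a reference of the focal paper. So such a paper has {\cd} index exactly~$1$ as soon as it receives at least one citation, and an undefined index before that. Since $1\ge d$ for every admissible~$d$, one reduction handles all positive rational $d\le 1$ simultaneously.

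Concretely, let $(G,\kappa)$ be an \prob{RBDS} instance with bipartition $(\rs,\bs)$. The construction is as follows.
\begin{itemize}
\item Create one paper for each vertex of~$G$ and keep the same names.
\item Let the registered papers be $V=\rs$ and the distinguished papers be $J=\rs$.
\item Let the unregistered papers be $U=\bs$, and set $k=\kappa$.
\item The only arcs are $\arc{b}{r}$ for every edge $\edge{b}{r}\in\eset{G}$.
\end{itemize}
All arcs go from~$\bs$ to~$\rs$. Hence $D$ is bipartite with bipartition $(\rs,\bs)$, and it is acyclic with topological ordering $(\overrightarrow{\bs},\overrightarrow{\rs})$. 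The construction is clearly polynomial and the parameter is preserved.

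For correctness, take any $U'\subseteq\bs$ and consider a red paper~$r$ in $D[\rs\cup U']$. It has no references, so $\citeB{r}{}=\citeR{r}{}=\emptyset$. Its non-reciprocal citations are exactly $\Gamma_G(r)\cap U'$, and all of them lie in $\citeF{r}{}$. Therefore $\dindex{r}{}=1$ if $U'$ contains a neighbor of~$r$, and the index is undefined otherwise. An undefined index does not meet the threshold~$d$. It follows that every paper in~$J$ reaches index at least~$d$ if and only if $U'$ dominates~$\rs$ in~$G$. A dominating set of size at most~$\kappa$ can be padded to size exactly~$\kappa$ since $\kappa\le\abs{\bs}$, so both directions are immediate.

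The main obstacle I expect is the convention on undefined indices. If one prefers the distinguished papers to have a \emph{defined} index below~$d$ before any addition, I would attach gadgets. For each red paper~$r$, give it a registered reference~$x_r$, together with a set~$Q_r$ of $t$ registered papers citing only~$x_r$. These papers lie in $\citeR{r}{}$. The blue papers still cite only red papers, so they land in $\citeF{r}{}$, and the index of~$r$ becomes $f/(f+t)$, where $f$ is the number of added blue neighbours of~$r$. This stays bipartite with parts $\bs\cup\{x_r\}$ and $\rs\cup\bigcup_r Q_r$. Choosing $t\ge 1$ with $1/(1+t)\ge d$ works whenever $d\le 1/2$. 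For $d>1/2$ this forces $t=0$, which brings back the undefined case, so there I would simply rely on the undefined-index convention.
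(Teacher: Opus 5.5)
Your proposal is correct and is essentially the paper's proof. The paper uses the same reduction from \prob{RBDS}: it orients every edge $\{b,r\}$ as the arc $(b,r)$, sets $V=J=R$, $U=B$, $k=\kappa$, and argues that each red paper reaches {\cd} index~$1$ exactly when it receives a citation from an added blue paper. It tacitly relies on the same convention you make explicit, that an undefined index does not meet the threshold, so your optional gadget for $d\le 1/2$ is not needed.
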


\begin{proof}
    We prove the theorem by a reduction from {\prob{RBDS}}. Let $I=(G,\kappa)$ be an instance of {\prob{RBDS}}, where~$G$ is a bipartite graph with the bipartition $(B, R)$. Construct a citation digraph $D$ by orienting each edge $\edge{b}{r}$ with $b\in B$ and $r\in R$ as an arc $(b,r)$. Clearly, $D$ is acyclic and bipartite. Let $V=J=R$, $U=B$, and $k=\kappa$. The instance of {\addimprovement} is $g(I)=(D, (V\cup U), J, k, d)$. 

If there is a subset $B'\subseteq B$ of at most $\kappa$ vertices that dominate~$R$ in~$G$, then in the citation digraph $D[V\cup B']$, every paper $r\in R$ receives at least one citation from~$B'$ and has {\cd} index~$1$. Since $k=\kappa$ and $d\leq 1$,~$g(I)$ is a {\yesins}.

For the converse, assume that there exists $B'\subseteq B$ with $\abs{B'}\leq k$ such that the {\cd} index of every paper of~$R$ in $D[V\cup B']$ is at least~$d$. This implies that every paper $r\in R$ receives at least one citation from~$B'$ in~$D[V\cup B']$. By construction, this indicates that~$B'$ dominates~$R$ in~$G$. Since $k=\kappa$,~{$I$} is a {\yesins}.
\end{proof}


Furthermore, we show that restricting the value of~$d$ to~$1$ does not render the complexity of the problem {\fpt}, even when associated with a larger parameter.

\begin{theorem}
\label{thm-add-wah-j-k}
For $d=1$, {\addimprovement} is {\wah} when parameterized by~$\abs{J}+k$, even if the citation digraph is acyclic.
\end{theorem}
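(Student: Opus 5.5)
We reduce from \prob{Multicolored Clique}, parameterized by~$\kappa$. The goal is a construction in which $\abs{J}$ and~$k$ are both $O(\kappa^2)$ and the citation digraph is acyclic. Since $d=1$, a distinguished paper~$j$ is satisfied exactly when $\citeB{j}{}\cup\citeR{j}{}=\emptyset$ and $\citeF{j}{}\neq\emptyset$ in $D[V\cup U']$. Adding papers can only enlarge these three sets. However, adding a paper cited by~$j$ enlarges the set of references of~$j$. An added reference can then turn other added papers into members of $\citeR{j}{}$. This is the mechanism we exploit for consistency.

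Let $(G,(V_1,\dots,V_\kappa))$ be the instance, and assume without loss of generality that each $V_i$ is independent. \textbf{Registered papers~$V$:} for each $i\in[\kappa]$, papers $j_i$, $g_i$, $c_i$; for each pair $i<i'$, a paper $j_{ii'}$. \textbf{Unregistered papers~$U$:} for every vertex $w\in V_i$, a vertex-paper $x(w)$; for every edge $e=\edge{u}{v}$ with $u\in V_i$, $v\in V_{i'}$, an edge-paper $p(e)$. \textbf{Arcs:}
\begin{itemize}
\item $c_i$ cites $j_i$;
\item $j_i$ cites every $x(w)$ with $w\in V_i$;
\item $x(w)$ cites $g_i$ for $w\in V_i$;
\item $p(e)$ cites $j_{ii'}$ and every $x(w)$ with $w\in (V_i\setminus\{u\})\cup(V_{i'}\setminus\{v\})$.
\end{itemize}
Set $J=\{j_i,g_i\setmid i\in[\kappa]\}\cup\{j_{ii'}\setmid i<i'\}$ and $k=\kappa+\binom{\kappa}{2}$. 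The ordering $(\overrightarrow{\{c_i\}}, \overrightarrow{P(E)}, \overrightarrow{\{j_i\}}, \overrightarrow{\{j_{ii'}\}}, \overrightarrow{X}, \overrightarrow{\{g_i\}})$ is topological, where $P(E)$ is the set of edge-papers and $X$ the set of vertex-papers.

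($\Rightarrow$) Given a multicolored clique $\{w_1,\dots,w_\kappa\}$, add $x(w_i)$ for every~$i$ and $p(\edge{w_i}{w_{i'}})$ for every pair. We then check each distinguished paper.
\begin{itemize}
\item $g_i$ has the single citer $x(w_i)$ and no references, so its index is~$1$.
\item $j_{ii'}$ has no references and is cited by the added edge-paper, so its index is~$1$.
\item $j_i$ is cited only by $c_i$, and $c_i$ cites nothing else, so $c_i\in\citeF{j_i}{}$. The only added reference of $j_i$ is $x(w_i)$. The added papers citing $x(w_i)$ would be edge-papers $p(e)$ with $e$ incident to $V_i$ but not to $w_i$; none of these was added. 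Hence $\citeR{j_i}{}=\citeB{j_i}{}=\emptyset$.
\end{itemize}

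($\Leftarrow$) Let $U'$ be a solution with $\abs{U'}\le k$. Each $g_i$ needs a citer, and its only possible citers are the papers $x(w)$ with $w\in V_i$, so $U'$ contains some such $x(w)$. Each $j_{ii'}$ needs a citer, which must be an added edge-paper for the pair $\{i,i'\}$. The budget then forces exactly one vertex-paper $x(w_i)$ per class and exactly one edge-paper per pair. Now suppose the chosen edge-paper for a pair involving $i$ has endpoint $u\neq w_i$ in $V_i$. Then it cites $x(w_i)$, which is a reference of $j_i$ after the addition. That edge-paper does not cite $j_i$ and is not cited by it, so it lies in $\citeR{j_i}{}$, which is a contradiction. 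Therefore every chosen edge joins the chosen vertices, and $\{w_1,\dots,w_\kappa\}$ is a multicolored clique.

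The main obstacle is the consistency gadget. Naive designs place edge-papers in $\citeR{}$ or $\citeB{}$ of some distinguished paper unconditionally, because references among registered papers are fixed. The remedy is to make the vertex-papers themselves the conditional references of~$j_i$, and to have edge-papers cite the \emph{non}-chosen vertices of each class. The remaining checks are routine: $j_i$, $g_i$, $j_{ii'}$ are never cited by or citing one another in a way that introduces $\citeB{}$ members, and the denominators are nonzero.
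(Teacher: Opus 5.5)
Your reduction is correct and follows essentially the same route as the paper. Both reduce from \prob{Multicolored Clique} using unregistered vertex- and edge-papers, where each edge-paper cites the vertex-papers of all \emph{non}-endpoints in its two color classes, and both make the budget $k=\kappa+\kappa(\kappa-1)/2$ tight by giving each distinguished paper exactly one pool of possible added citers. The only difference is where the consistency check sits. The paper lets $y_{\{i,j\}}$ both select the edge and cite all of $V_i\cup V_j$, so a mismatched edge-paper lands in $\citeB{y_{\{i,j\}}}{}$. You split this into a selector $j_{ii'}$ and a per-class checker $j_i$ with the auxiliary citer $c_i$, which catches a mismatched edge-paper in $\citeR{j_i}{}$.
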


\begin{proof}
    We prove the theorem via a reduction from {\prob{Multicolored Clique}}. Let $I=(G, (V_1, V_2, \dots, V_{\kappa}))$ be an instance of {\prob{Multicolored Clique}} with $\kappa \geq 2$. We create an instance $g(I)=(D, (V, U), J, k, d)$ with $d=1$ of {\addimprovement} as follows (see Figure~\ref{fig-add-wah-j-k} for an illustration). 
\begin{description}
    \item[Citation digraph~$D$.] For each $v\in \vset{G}$, we create one paper denoted by the same symbol for simplicity. For each edge $e=\edge{v}{u}$ in~$G$, we create an edge-paper~$p(e)$. For each pair $\{i, j\}\subseteq [\kappa]$, let $P_{\{i,j\}}=\{p(e) \setmid e\in \eset{G}, e\cap V_i\neq \emptyset, e\cap V_j\neq\emptyset\}$. 
    In addition, we create a set $X=\{x_1, x_2, \dots, x_{\kappa}\}$ of~$\kappa$ papers, and a set $Y=\{y_{\{i,j\}}\}_{\{i,j\}\subseteq [\kappa]}$ of ${\kappa \cdot (\kappa-1)}/{2}$ papers. The arcs of~$D$ are defined as follows: 
    \begin{itemize}
        \item For every $i\in [\kappa]$, all papers from~$V_i$ cite~$x_i$.
        \item For every pair $\{i, j\}\subseteq [\kappa]$, all papers from~$P_{\{i,j\}}$ cite~$y_{\{i,j\}}$.
        \item For every pair $\{i,j\}\subseteq [\kappa]$ and every edge $e=\edge{v}{u}$ between~$V_i$ and~$V_j$ in~$G$, the paper~$p(e)$ cites all papers from $(V_i\cup V_j)\setminus \{v, u\}$.
        \item For every pair $\{i,j\}\subseteq [\kappa]$, the paper~$y_{\{i,j\}}$ cites all papers from~$V_i\cup V_j$.
    \end{itemize}
\end{description}
Let $V=J=X\cup Y$, $P(E(G)) = \bigcup_{\{i,j\} \subseteq [\kappa]} P_{\{i,j\}}$, $k=\kappa+\frac{\kappa \cdot (\kappa-1)}{2}$, and define $U = P(E(G)) \cup \vset{G}$. 
We have $\abs{J}+k=2\kappa+\kappa\cdot(\kappa-1)$. 
    The citation digraph~$D$ is acyclic with the topological ordering $(\overrightarrow{P(E(G))}, Y, \overrightarrow{\vset{G}}, \overrightarrow{X})$.
The construction of~$g(I)$ runs in polynomial time. We now prove the correctness of the reduction.
\begin{figure}
    \centering
    \includegraphics[width=0.5\textwidth]{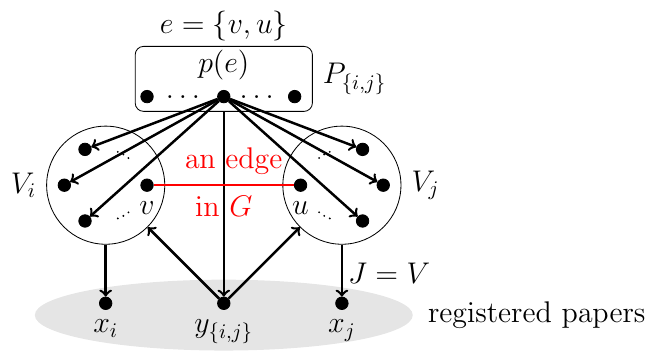}
    \caption{Illustration of the reduction in the proof of Theorem~\ref{thm-add-wah-j-k}.}
    \label{fig-add-wah-j-k}
\end{figure}

    $(\Rightarrow)$ Assume that there is a~$\kappa$-colored clique~$K$ in~$G$. Let $P(K)=\{p(e) \setmid e\in \eset{G[K]}\}$ be the set of the ${\kappa\cdot (\kappa-1)}/{2}$ papers corresponding to the edges within the clique~$K$. Consider the citation digraph~$D'=D[V\cup K\cup P(K)]$. For every $x_i\in X$, as $\abs{K\cap V_i}=1$, the construction of the citation digraph~$D$ ensures that the {\cd} index of~$x_i$ in~$D'$ is~$1$. Now consider a paper~$y_{\{i,j\}}$, where $\{i,j\}\subseteq [\kappa]$. 
    We have $\abs{P(K)\cap P_{\{i,j\}}}=1$. 
    Let $P(K)\cap P_{\{i,j\}}=\{p(e)\}$, where $e=\edge{v}{u}$ is an edge in ${G}$ with $v\in V_i$ and $u\in V_j$. Consequently, $K\cap V_i=\{v\}$ and $K\cap V_j=\{u\}$. According to the construction of~$D$, in the citation digraph~$D'$,~$v$ and~$u$ are the only references of~$y_{\{i,j\}}$,~$p(e)$ is the only paper citing~$y_{\{i,j\}}$, and~$p(e)$ cites neither~$v$ nor~$u$. As a result, the {\cd} index of~$y_{\{i,j\}}$ in~$D'$ is~$1$. Therefore, all papers in $J=X\cup Y$ have a {\cd} index of~$1$ in~$D'$. Since $\abs{K\cup P(K)}=\kappa+\frac{\kappa\cdot(\kappa-1)}{2}=k$,~$g(I)$ is a {\yesins}.

    $(\Leftarrow)$ Assume that there exists $U'\subseteq U$ with $\abs{U'}\leq k$ such that every paper in~$J$ has a {\cd} index of~$1$ in $D'=D[V\cup U']$. A necessary condition for a paper $x_i\in X$ to have a {\cd} index of~$1$ in~$D'$ is that~$U'$ contains at least one paper from~$V_i$. A necessary condition for a paper $y_{\{i,j\}}\in Y$ to have a {\cd} index of~$1$ in~$D'$ is that~$U'$ contains at least one paper from~$P_{\{i,j\}}$. Since $k=\kappa+\frac{\kappa \cdot (\kappa-1)}{2}$, we know that $\abs{U'\cap V_i}=1$ for all $i\in[\kappa]$, and $\abs{U'\cap P_{\{i,j\}}}=1$ for all pairs $\{i,j\}\subseteq [\kappa]$. 
    Let~$K=U'\cap \vset{G}$. 
    We prove that~$K$ is a clique in~$G$. Assume for contradiction that~$K$ contains two vertices~$v$ and~$u$ that are nonadjacent in~$G$. Let $v\in V_i$ and $u\in V_j$ for some pair $\{i,j\}\subseteq [\kappa]$. Let $\{p(e)\}=U'\cap P_{\{i,j\}}$ where~$e$ is an edge in~$G$ crossing~$V_i$ and~$V_j$. Clearly, $e\neq \{v,u\}$. Then, by construction,~$p(e)$ cites at least one of~$v$ and~$u$. However, as $y_{\{i,j\}}$ cites both~$v$ and~$u$,  the {\cd} index of~$y_{\{i,j\}}$ in~$D'$ cannot be~$1$, a contradiction. This completes the proof that~$K$ is a clique in~$G$. Then, from $\abs{U'\cap V_i}=1$ for all $i\in [\kappa]$, it follows that~$K$ is a $\kappa$-colored clique in~$G$.
\end{proof}


\onlyfull{
\begin{theorem}
\label{thm-d-smaller-1-add-wah-k-j}
    For any positive rational number $d<1$, {\addimprovement} is {\wah} when parameterized by~$\abs{J}+k$\onlyfull{, the number of papers allowed to be added plus the number of focal papers}. This holds even if the citation digraph is acyclic.
\end{theorem}

\begin{proof}
Let $\frac{p}{q}$ be the canonical form of $d$. We make $p$. We construct a citation digraph $\mathcal{D}$ which contains $p$ disjoint copies of the citation digraph $D$ constructed in the proof of Theorem~\ref{thm-add-wah-j-k}, together with a new paper $z$, and a set $Z$ of $(q-p)\cdot (\kappa+\frac{\kappa \cdot (\kappa-1)}{2})$ new papers. So, the citation digraph $\mathcal{D}$ contains $p\cdot \abs{\vset{D}}+1+(q-p)\cdot (\kappa+\frac{\kappa \cdot (\kappa-1)}{2})$ papers.
\end{proof}
}

The above reduction without~$k$ directly applies to {\addimprovementinf}. 

\begin{theorem}
\label{thm-add-inf-wah-j-d-1}
For $d=1$, {\addimprovementinf} is {\wah} when parameterized by~$\abs{J}$\onlyfull{, the number of papers allowed to be added plus the number of focal papers}. This holds even if the citation digraph is acyclic.
\end{theorem}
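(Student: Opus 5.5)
We reuse the reduction from the proof of Theorem~\ref{thm-add-wah-j-k} unchanged, except that we drop the budget~$k$. Given an instance $(G,(V_1,\dots,V_{\kappa}))$ of {\prob{Multicolored Clique}}, we build the same acyclic citation digraph~$D$ and set $V=J=X\cup Y$, $U=P(E(G))\cup \vset{G}$ and $d=1$. The parameter is then $\abs{J}=\kappa+\kappa(\kappa-1)/2$, which depends only on~$\kappa$. So this is a parameterized reduction, and it runs in polynomial time. The forward direction is exactly as before: for a $\kappa$-colored clique~$K$, add $K\cup P(K)$. The only budget used there was an upper bound, which is now irrelevant.

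The real work is the backward direction. In Theorem~\ref{thm-add-wah-j-k} it relied on~$k$ to force exactly one added vertex per color class and exactly one added edge-paper per pair. We now have to derive this structure from the condition that every paper has {\cd} index~$1$.

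Let $U'\subseteq U$ be arbitrary such that every paper in~$J$ has {\cd} index~$1$ in $D'=D[V\cup U']$.
\begin{enumerate}
\item For~$x_i$: the index is defined and equals~$1$ only if~$x_i$ has a citer. This forces $U'\cap V_i\neq\emptyset$.
\item For~$y_{\{i,j\}}$: there must be some $p(e)\in U'\cap P_{\{i,j\}}$ with $e=\edge{v}{u}$. Moreover, no such $p(e)$ may cite any reference of~$y_{\{i,j\}}$, and the references of~$y_{\{i,j\}}$ are exactly $U'\cap(V_i\cup V_j)$.
\end{enumerate}
Since $p(e)$ cites every paper of $(V_i\cup V_j)\setminus\{v,u\}$, we get $U'\cap(V_i\cup V_j)\subseteq\{v,u\}$. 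Combined with the nonemptiness from the first item, this gives $U'\cap V_i=\{v\}$ and $U'\cap V_j=\{u\}$. Doing this for every pair $\{i,j\}$ shows that $K=U'\cap\vset{G}$ contains exactly one vertex from each color class, and that every two of these vertices are joined by an edge of~$G$. Hence~$K$ is a $\kappa$-colored clique. Incidentally, a second edge-paper in $P_{\{i,j\}}$ would cite~$v$ or~$u$, so the added edge-paper for each pair is unique; we do not need this, but it confirms the intended structure.

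The step I expect to be the main obstacle is verifying that the emptiness of $\citeR{y_{\{i,j\}}}{D'}$ holds in the forward direction and that it causes no trouble in the backward one. Only the citer conditions were used above, so the backward direction is unaffected. In the forward direction, however, the papers that cite a selected vertex $v\in V_i$ include the other registered papers $y_{\{i,h\}}$ with $h\neq j$, as well as edge-papers from other pairs. These would land in $\citeR{y_{\{i,j\}}}{D'}$ for $\kappa\geq 3$. I would first re-check the forward argument of Theorem~\ref{thm-add-wah-j-k} on exactly this point. If it fails there, I would repair the construction, for example by giving each~$y_{\{i,j\}}$ private copies of the vertex-papers of~$V_i$ and~$V_j$ that are linked to the corresponding $x_i$-gadgets, so that no two distinguished papers share references. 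I would then rerun the argument above, which uses only the citer-side conditions, on the repaired construction.
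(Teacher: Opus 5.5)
Your backward direction is sound. It is also more direct than the paper's, which first proves $\abs{U'\cap V_i}=1$ and $\abs{U'\cap P_{\{i,j\}}}=1$ by a two-case argument and then reuses the budgeted proof. In your version a single witness $p(e)\in U'\cap P_{\{i,j\}}$ already forces $U'\cap(V_i\cup V_j)=e$.

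The genuine gap is the forward direction, and your suspicion about it is correct. Take $\kappa\ge 3$ and $D'=D[V\cup K\cup P(K)]$. Each registered $y_{\{i,h\}}$ with $h\notin\{i,j\}$ cites the selected vertex-paper in $K\cap V_i$, which is a reference of $y_{\{i,j\}}$. Yet $y_{\{i,h\}}$ is neither a citation nor a reference of $y_{\{i,j\}}$. Hence $\abs{\citeR{y_{\{i,j\}}}{D'}}=2(\kappa-2)$ and $\dindex{y_{\{i,j\}}}{D'}=1/(2\kappa-3)<1$. The edge-papers are not the problem: those in $P(K)$ cite no paper of $K$.

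The paper's own proof does not handle this either. It takes the forward direction verbatim from Theorem~\ref{thm-add-wah-j-k}, whose check of $y_{\{i,j\}}$ looks only at its citations and references. So the construction itself has to change, and your proposal leaves exactly this step open.

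The repair you sketch is missing its key ingredient. Private copies remove the sharing that forces the same vertex of $V_i$ to be used for every pair $\{i,h\}$. Linking the copies to a reference-free $x_i$ cannot restore this consistency, because such an $x_i$ only expresses that at least one of its citers is present. Your guiding principle is nevertheless right, and in fact unavoidable. Suppose $p,p'\in J$ both have index~$1$ in an acyclic digraph and share a present reference. Then $p'$ must be a neighbor of $p$, since otherwise $p'\in N_{\text{R}}(p)$. Acyclicity rules out mutual citation, so exactly one of them cites the other, and that one lies in $N_{\text{B}}$ of the other.

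A simple way to respect this principle without copies is to give each color class a single owner of its vertex-papers.
\begin{itemize}
\item Delete the arcs from $y_{\{i,j\}}$ to $V_i\cup V_j$, so that $y_{\{i,j\}}$ has no references and is cited exactly by $P_{\{i,j\}}$.
\item For every $i\in[\kappa]$, add a registered paper $q_i\in J$ citing all of $V_i$.
\item For every $i\in[\kappa]$, add a registered paper $c_i\notin J$ citing only $q_i$.
\end{itemize}
Then $\abs{J}=2\kappa+\kappa(\kappa-1)/2$. With $C=\{c_i\}_{i\in[\kappa]}$ and $Q=\{q_i\}_{i\in[\kappa]}$, the sequence $(\overrightarrow{C},\overrightarrow{Q},\overrightarrow{P(E(G))},\overrightarrow{Y},\overrightarrow{\vset{G}},\overrightarrow{X})$ is a topological ordering.

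In $D[V\cup U']$ the three kinds of distinguished papers behave as follows.
\begin{itemize}
\item $x_i$ has index~$1$ if and only if $U'$ meets $V_i$.
\item $y_{\{i,j\}}$ has index~$1$ if and only if $U'$ meets $P_{\{i,j\}}$.
\item $q_i$ has index~$1$ if and only if no added edge-paper cites an added paper of $V_i$. Equivalently, $U'\cap V_i\subseteq e$ for every $p(e)\in U'$ with $e$ joining $V_i$ to another class.
\end{itemize}
Your backward argument then goes through essentially word for word, with $q_i$ and $q_j$ supplying the conflict instead of $y_{\{i,j\}}$. The forward direction becomes immediate, because the edge-papers in $P(K)$ cite no paper of $K$. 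With $k=\kappa+\kappa(\kappa-1)/2$, the same modification also repairs Theorem~\ref{thm-add-wah-j-k}.
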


\begin{proof}
We use the same reduction as in the proof of Theorem~\ref{thm-add-wah-j-k}, omitting the parameter $k$.  
We have already shown that if $G$ contains a $\kappa$-colored clique $K$, then after adding all papers in $K \cup P(K)$, every paper in $J$ has {\cd} index $1$. For the reverse direction, assume that there exists $U' \subseteq U$ such that every paper in~$J$ has {\cd} index $1$ in $D' = D[V \cup U']$. 
As argued earlier, it holds that $\abs{U' \cap V_i} \geq 1$ for all $i \in [\kappa]$, and $\abs{U' \cap P_{\{i,j\}}} \geq 1$ for all $\{i,j\} \subseteq [\kappa]$. A crucial observation is that in fact $\abs{U' \cap V_i} = 1$ and $\abs{U' \cap P_{\{i,j\}}} = 1$ must hold. Suppose, for contradiction, that this is not the case. Then one of the following two cases occurs.
\begin{description}
    \item[Case~1:] There exists $i \in [\kappa]$ such that $\abs{U' \cap V_i} \geq 2$. \hfill 
    
    Let $v, v'$ be two distinct papers in $U' \cap V_i$, fix an arbitrary $j \in [\kappa] \setminus \{i\}$, and let $p(e)$ (for some $e \in \eset{G}$) be an arbitrary paper in $U' \cap P_{\{i,j\}}$. By construction, $p(e)$ cites $y_{\{i,j\}}$ and at least one of $v$ and $v'$, while $y_{\{i,j\}}$ cites both $v$ and $v'$. This implies that the {\cd} index of $y_{\{i,j\}}$ in $D'$ is strictly smaller than $1$, a contradiction.

   \item[Case 2:] There exists $\{i,j\} \subseteq [\kappa]$ such that $\abs{U' \cap P_{\{i,j\}}} \geq 2$. \hfill 
   
   Let $v$ be an arbitrary paper in $U' \cap V_i$, let $u$ be an arbitrary paper in $U'\cap V_j$, and let $p(e)$ and $p(e')$ be two distinct papers in $U' \cap P_{\{i,j\}}$. By construction, at least one of $p(e)$ and $p(e')$ cites at least one of $v$ and $u$. Moreover, both~$p(e)$ and~$p(e')$ cite $y_{\{i,j\}}$, and $y_{\{i,j\}}$ cites both $v$ and $u$. Hence, the {\cd} index of $y_{\{i,j\}}$ in $D'$ must be strictly smaller than $1$, again a contradiction.
\end{description}
Therefore, we conclude that $\abs{U' \cap V_i} = 1$ for all $i \in [\kappa]$ and $\abs{U' \cap P_{\{i,j\}}} = 1$ for all $\{i,j\} \subseteq [\kappa]$. It follows that $\abs{U'} = \kappa + \frac{\kappa(\kappa - 1)}{2}$. 
The remainder of the argument is identical to that in the proof of Theorem~\ref{thm-add-wah-j-k}.
\end{proof}

Theorem~\ref{thm-add-wah-j-k} dismisses the feasibility of achieving {\fpt} algorithms when parameterized by~$\abs{J}$, unless {\fpt$=$\wa}. Nevertheless, there still exists a potential for the problem to become tractable when~$\abs{J}$ is a constant. We first explore the case where~$J$ is a singleton, and show that it can be solved in polynomial time by a greedy algorithm.

\begin{theorem}
\label{them-add-p-j-1}
    {\addimprovement} is polynomial-time solvable when $\abs{J}=1$.
\end{theorem}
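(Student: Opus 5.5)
The plan is to show that, for a single distinguished paper $j$ (so $J=\{j\}$), an optimal choice of $U'$ has a very simple form, and then to pick such a set greedily. For a candidate $U'\subseteq U$, classify every $u\in U'$ according to its role with respect to $j$ in $D[V\cup U']$:
\begin{enumerate}
\item[(a)] $u$ is a reference of $j$;
\item[(b)] $u$ lands in $\citeF{j}{}$;
\item[(c)] $u$ lands in $\citeB{j}{}$ or $\citeR{j}{}$;
\item[(d)] $u$ lies in none of these sets.
\end{enumerate}
The key observation is locality. Adding a paper that is not a reference of $j$ changes none of $\outneighbor{j}{}$, and it does not change the $\text{F}/\text{B}/\text{R}$-status of any other paper. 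Only type-(a) papers alter the classification of other papers.

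The first step is to show that type-(a) papers can be dropped. Take any solution $U'$ and remove from it all unregistered references of $j$. Write $f,b,r$ for the sizes of the three sets before the removal and $f',b',r'$ for the sizes after. Shrinking the reference set can only move papers from $\citeB{j}{}$ to $\citeF{j}{}$ and delete papers from $\citeR{j}{}$. Reciprocal citers are references, hence are never counted, and they simply disappear. Therefore $f'+b'=f+b$, $f'-b'\ge f-b$, and $r'\le r$. Since $d>0$ and the index of $j$ was at least $d$, we have $f-b>0$. It follows that
\[
\frac{f'-b'}{f'+b'+r'}\ \ge\ \frac{f-b}{f+b+r}\ \ge d,
\]
and the denominator stays positive. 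The budget also only decreases. So we may assume that $U'$ contains no reference of $j$, and then $\outneighbor{j}{}=\outneighbor{j}{D[V]}$ is fixed.

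Under this assumption, each $u\in U$ that is not a reference of $j$ has a fixed type, computable in advance from $D[V]$. It is type F if it cites $j$ and no paper of $\outneighbor{j}{D[V]}$. It is type B or R if it cites such a reference. Otherwise it has no effect. Whenever the index is positive, adding a B- or R-type paper strictly decreases it, so we drop them as well; type-(d) papers are irrelevant. Adding an F-type paper turns $\frac{f-b}{f+b+r}$ into $\frac{f+1-b}{f+b+r+1}$. This is at least the old value, because the old value is at most $1$. If the index was undefined, a single F-paper makes it $1$.

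Hence the algorithm is as follows. Compute the index of $j$ in $D[V]$. Let $F_U$ be the set of F-type unregistered papers, and add $\min(k,\abs{F_U})$ of them. Answer \yes{} if and only if the resulting index is defined and at least $d$. Correctness follows from the two reductions above together with the monotonicity of adding F-type papers, and all steps run in polynomial time.

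The main obstacle is the first step, i.e., arguing that adding references of $j$ never helps. This needs care with reciprocal citations and with the undefined-denominator case. The point to emphasize is that positivity of $d$ forces $f>b$, which is what makes the shrinking of $\citeR{j}{}$ beneficial rather than harmful.
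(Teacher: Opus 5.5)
Your proposal is correct and is essentially the paper's proof. The paper also adds an arbitrary $\min\{k,|U'|\}$-subset of the unregistered non-reciprocal citations of $v^{\star}$ that cite no registered reference of $v^{\star}$, and then checks whether the resulting index is at least $d$. Your exchange arguments fill in the correctness details that the paper dismisses as straightforward: discarding unregistered references of $j$, discarding B/R-type papers, and monotonicity under adding F-type papers.
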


\begin{proof}
    Let $(D, (V, U), J, k, d)$ be an instance of {\addimprovement}, where $J=\{v^{\star}\}$. 
    Let \[U'=\{u\in U\cap \inneighbor{v^{\star}}{D}\setminus \outneighbor{v^{\star}}{D} \setmid V\cap \outneighbor{v^{\star}}{D}\cap \outneighbor{u}{D}= \emptyset\}\] 
    be the set of unregistered non-reciprocal citations of $v^{\star}$ that do not cite any registered references of~$v^{\star}$. 
    Let $k'=\min\{k,\abs{U'}\}$, and let~$U''$ be any arbitrary $k'$-subset of~$U'$. We return ``\yes'' if and only if the {\cd} index of~$v^{\star}$ in $D[V\cup U'']$ is at least~$d$. The algorithm runs in polynomial time, and its correctness is straightforward. 
\end{proof}

Since {\addimprovementinf} is polynomial-time Turing reducible to {\addimprovement}, we have the following result. 

{
\begin{corollary}
\label{cor-add-inf-p-j-1}
    {\addimprovementinf} is polynomial-time solvable when $\abs{J}=1$.
\end{corollary}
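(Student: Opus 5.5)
We obtain the corollary by reducing {\addimprovementinf} with $\abs{J}=1$ to polynomially many calls of the algorithm of Theorem~\ref{them-add-p-j-1}. Let $(D,(V,U),J,d)$ be an instance of {\addimprovementinf} with $J=\{v^{\star}\}$. Any witness $U'\subseteq U$ has $\abs{U'}\leq\abs{U}$. Hence the instance is a {\yesins} if and only if the instance $(D,(V,U),J,\abs{U},d)$ of {\addimprovement} is a {\yesins}. A single call to the polynomial-time algorithm of Theorem~\ref{them-add-p-j-1} with budget $k=\abs{U}$ therefore decides the problem. This is even a many-one reduction, not just a Turing reduction.

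For a self-contained argument, we can instead analyze the greedy rule directly with $k=\abs{U}$. The candidate set is $U'$, the unregistered non-reciprocal citations of $v^{\star}$ that cite no registered reference of $v^{\star}$. Adding all of $U'$ gives the best achievable value. Adding any $u\in U$ affects $v^{\star}$ in one of the following ways.
\begin{itemize}
\item If $u\in U'$, then $u$ joins $\citeF{v^{\star}}{}$.
\item If $u$ cites a registered reference of $v^{\star}$, then $u$ joins $\citeB{v^{\star}}{}$ or $\citeR{v^{\star}}{}$. This weakly decreases the index, because the numerator does not grow while the denominator grows.
\item If $u$ is a reference of $v^{\star}$, then the set of references grows. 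This can only move existing citations from $\citeF{v^{\star}}{}$ to $\citeB{v^{\star}}{}$, or add papers to $\citeR{v^{\star}}{}$.
\item Any other $u$ leaves the index unchanged.
\end{itemize}
Adding more elements of $\citeF{v^{\star}}{}$ raises $\frac{a+x}{b+x}$ toward $1$, since this expression is monotone nondecreasing in $x$ when $a\leq b$. So it is optimal to add exactly $U'$, provided the denominator is nonzero.

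The one point needing care is that the index may be undefined, i.e., the denominator may be zero. If $U'=\emptyset$ and $v^{\star}$ has no qualifying citations, the index stays undefined or unchanged, and the answer is ``\no''. Theorem~\ref{them-add-p-j-1} already handles this case, so the single-call reduction suffices.
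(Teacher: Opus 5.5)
Your argument is correct and is essentially the paper's. The paper derives the corollary from Theorem~\ref{them-add-p-j-1} by noting that {\addimprovementinf} is polynomial-time Turing reducible to {\addimprovement}, and your single call with budget $k=\abs{U}$ is exactly such a reduction (indeed a many-one one). One small caveat concerns your optional direct analysis: adding a paper to $\citeR{v^{\star}}{}$ raises a negative index toward $0$ rather than weakly lowering it, which is harmless only because $d>0$ forces $\nciteF{v^{\star}}{}>\nciteB{v^{\star}}{}$ in any successful outcome.
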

}

We next show that for $d=1$, tractability extends to all constant-bounded~$\abs{J}$. 

\begin{theorem}
\label{thm-d-1-add-xp-j}
    For $d=1$,  {\addimprovement} and {\addimprovementinf} are in {\xp} parameterized by~${\abs{J}}$.
\end{theorem}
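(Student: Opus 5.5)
The plan is to characterize when a distinguished paper has {\cd} index exactly~$1$, and to show that this property is monotone under adding papers except for the need for a single witness per paper. The algorithm then just guesses one witness for each paper in~$J$.

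\textbf{Characterization.} Fix $D' = D[V\cup U']$ and $v\in J$. By the definition of the three sets, $\dindex{v}{D'}=1$ holds if and only if two conditions are met.
\begin{enumerate}
\item[(a)] $\citeF{v}{D'}\neq\emptyset$. That is, some paper $w\in \inneighbor{v}{D'}\setminus\outneighbor{v}{D'}$ satisfies $\outneighbor{w}{D'}\cap\outneighbor{v}{D'}=\emptyset$. We call such a $w$ a \emph{witness} for~$v$.
\item[(b)] $\citeB{v}{D'}\cup\citeR{v}{D'}=\emptyset$. That is, there is no paper $x\in \vset{D'}\setminus(\outneighbor{v}{D'}\cup\{v\})$ that cites some paper in $\outneighbor{v}{D'}$. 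We call such an $x$ a \emph{conflict} for~$v$.
\end{enumerate}

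\textbf{Key monotonicity observation.} Take $U'_1\subseteq U'_2$ and $D_i=D[V\cup U'_i]$. The arcs among any fixed set of papers do not change, so $\outneighbor{v}{D_1}\subseteq\outneighbor{v}{D_2}$. Now let $x$ be a conflict for $v$ in $D_1$.
\begin{itemize}
\item In $D_2$, $x$ still cites the same reference of~$v$.
\item $x$ is still not a reference of~$v$, because whether $v$ cites $x$ is fixed by~$D$.
\end{itemize}
Hence $x$ remains a conflict in $D_2$, so condition~(b) can only be destroyed by adding papers. Similarly, let $w\in V\cup U'_1$ be a paper that is not a witness in $D_1$. It either fails to cite~$v$ or is cited by~$v$, which are fixed facts, or it cites a reference of~$v$ present in $D_1$, which is still present in $D_2$. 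So it is not a witness in $D_2$ either. Conversely, a witness in $D_2$ that lies in $D_1$ is also a witness in $D_1$. In short, the only benefit of adding a paper is to supply witnesses.

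\textbf{Algorithm.} Enumerate all tuples $(w_v)_{v\in J}$ with $w_v\in (V\cup U)\cap\inneighbor{v}{D}$; there are at most $n^{\abs{J}}$ of them. For each tuple, set $U^{*}=\{w_v\setmid v\in J\}\cap U$. Check whether every $v\in J$ has {\cd} index~$1$ in $D[V\cup U^{*}]$. For {\addimprovement}, additionally check that $\abs{U^{*}}\le k$. Answer ``\yes'' if some tuple passes both checks.

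\textbf{Correctness.} Soundness is immediate, since a passing tuple directly exhibits a valid $U^{*}$. For completeness, suppose $U'$ is a solution. Pick any witness $w_v\in V\cup U'$ for each $v\in J$ and let $U^{*}$ be as defined from this tuple. Then $U^{*}\subseteq U'$, so $\abs{U^{*}}\le\abs{U'}\le k$. By the monotonicity observation, condition~(b) transfers from $D[V\cup U']$ down to $D[V\cup U^{*}]$. Each $w_v$ lies in $D[V\cup U^{*}]$ and remains a witness there. Hence every $v\in J$ has {\cd} index~$1$ in $D[V\cup U^{*}]$. Each check takes polynomial time, so the running time is $n^{\abs{J}}\cdot\mathrm{poly}(n)$, which puts both problems in {\xp}.

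The main obstacle I expect is justifying the monotonicity of condition~(b) rigorously. The key point is that membership of a conflict in $\outneighbor{v}{}$ is determined solely by~$D$, so added papers can never turn a conflict into a reference of~$v$. Self-loops and reciprocal citations need to be handled carefully here, consistent with the definitions of $\citeF{v}{D}$, $\citeB{v}{D}$, and $\citeR{v}{D}$.
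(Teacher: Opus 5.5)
Your proposal is correct and follows essentially the same approach as the paper: guess one citing paper for each distinguished paper, add only the unregistered guesses, and check that every paper in $J$ has {\cd} index~$1$ (together with the size check for {\addimprovement}). The differences are cosmetic: the paper guesses only for the papers in $J$ that have no registered non-reciprocal citation, and only from~$U$, and it leaves implicit the downward-closure argument that you spell out.
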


\begin{proof}
 Let $I=(D, (V, U), J, k, d)$ be an instance of {\addimprovement}, where $d=1$. 
    Let \[J'=\{p\in J\setmid (\inneighbor{p}{D}\setminus \outneighbor{p}{D})\cap V=\emptyset\}\] be the set of papers in $J$ without any registered non-reciprocal citations. The algorithm enumerates all functions $f: J' \to U$ such that $f(p)\in \inneighbor{p}{D}$ for all $p\in J'$. 
If there exists such a function $f$ with $|\operatorname{Im}(f)|\le k$ and such that every $p\in J$ has {\cd} index $1$ in $D[V\cup \operatorname{Im}(f)]$, then return ``\yes'', where $\operatorname{Im}(f)=\{f(p)\mid p\in J'\}$ denotes the image of~$f$. 
Otherwise, return ``\no''.

The algorithm considers at most $\abs{U}^{\abs{J'}} \le \abs{U}^{\abs{J}}$ functions~$f$, and each verification can be done in polynomial time. Hence, the algorithm runs in {\xp} time with respect to~$\abs{J}$. Correctness follows directly from the construction. 
Since {\addimprovementinf} is polynomial-time Turing reducible to {\addimprovement}, it is also in {\xp} with respect to~$\abs{J}$.
 \end{proof}

\onlyfull{
The algorithm outlined above can be readily adapted to solve the variant {\addimprovementinf} for $d=1$. In particular, since we do not care about the number of papers added, in Step~2, we remove the premise condition $\abs{U'}>k$. Thus, we  discard an enumerated vector~$\abs{U'}$ only if there exists $p\in P$ such that the {\cd} index of~$p$ in $D[V\cup U']$ is not~$1$.

\begin{corollary}
\label{cor-d-1-add-inf-xp-j}
      For $d=1$,  {\addimprovementinf} can be solved in polynomial time if $\abs{J}$ is bounded from above by  a constant.
\end{corollary}
}
For other positive values of~$d$, the parameter $\abs{J}$ complicates the problem, as stated in the following  theorem.

\begin{theorem}
\label{thm-add-wah-k-j-2-d-smaller}
For any positive rational number $d<1$,  {\addimprovement} is {\wah} with respect to~$k$, even if $\abs{J}=2$ and the citation digraph is acyclic and bipartite.
\end{theorem}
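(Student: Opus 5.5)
We would reduce from \prob{Clique} on $t$-regular graphs, which is {\wah} with respect to~$\kappa$ (see the preliminaries), and reuse the edge-counting idea from the proof of Theorem~\ref{thm-merge-ell-1-hard-small-compatibility}. The key structural observation is about bipartiteness. If the citation digraph is bipartite, a paper citing~$p$ lies on the same side as the references of~$p$, so it cannot cite any of them. Hence $\citeB{p}{D}=\emptyset$ for every~$p$, and the {\cd} index of~$p$ equals $\nciteF{p}{D}/(\nciteF{p}{D}+\nciteR{p}{D})$.

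Adding papers can only enlarge the sets $\citeF{\cdot}{}$ and $\citeR{\cdot}{}$. To enforce structure we therefore let the unregistered papers be \emph{references} of a distinguished paper, because adding a reference can move registered papers into $\citeR{\cdot}{}$. Let $G=(V,E)$ be $t$-regular with $t>\kappa$. The construction is as follows.
\begin{itemize}
\item Registered papers are $J=\{a,b\}$, one edge-paper $p(e)$ for each $e\in E$, a set $X_a$ of papers citing~$a$, a reference $c$ of~$b$, a set $Z_b$ of papers citing~$c$, and a set $X_b$ of papers citing~$b$.
\item Unregistered papers are $U=\{r_v\setmid v\in V\}$.
\item The arcs are: $a$ cites every $r_v$; each $p(e)$ with $e=\edge{u}{v}$ cites $r_u$ and $r_v$; every $r_v$ cites~$b$; papers in $X_a$ cite~$a$; $b$ cites~$c$; papers in $Z_b$ cite~$c$; papers in $X_b$ cite~$b$.
\item The sides are $\{a,b,p(e)\}\cup Z_b$ and $\{r_v\}\cup X_a\cup X_b\cup\{c\}$. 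We then check that every arc crosses the two sides and that the digraph is acyclic.
\item Set $k=\kappa$.
\end{itemize}

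Suppose $K$ is the set of vertices whose papers $r_v$ are added. Then $\nciteF{b}{}=\abs{X_b}+\abs{K}$ and $\nciteR{b}{}=\abs{Z_b}$. Also $\nciteF{a}{}=\abs{X_a}$ and $\nciteR{a}{}=\abs{\spane{K}{G}}=t\abs{K}-m'$, where $m'$ is the number of edges of $G[K]$. We choose the sizes as follows; with $d=p/q$, one can take $X_b=\emptyset$.
\begin{itemize}
\item $\abs{X_b}$ and $\abs{Z_b}$ are chosen so that $b$ reaches~$d$ if and only if $\abs{K}\geq\kappa$.
\item $\abs{X_a}$ is chosen so that $a$ reaches~$d$ if and only if $\abs{\spane{K}{G}}\leq t\kappa-\kappa(\kappa-1)/2$. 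Concretely, $\abs{X_a}=\frac{p}{q-p}\bigl(t\kappa-\kappa(\kappa-1)/2\bigr)$, after scaling everything by $q-p$ to obtain integers.
\end{itemize}

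For correctness in the forward direction, adding $\{r_v\setmid v\in K\}$ for a $\kappa$-clique~$K$ meets both thresholds exactly. For the backward direction, $b$ forces $\abs{K}\geq\kappa$ and the budget forces $\abs{K}\leq\kappa$, so $\abs{K}=\kappa$. Then $a$ forces $m'\geq\kappa(\kappa-1)/2$, so $K$ is a clique. The parameter $k=\kappa$ is preserved, and $\abs{J}=2$.

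The main obstacle is the arithmetic for general $d<1$. The difficulty is choosing integer sizes for $X_a$, $X_b$ and $Z_b$ so that both thresholds are exact, and then checking that the denominators are nonzero. We would follow the scaling used at the end of the proof of Theorem~\ref{thm-merge-ell-1-hard-small-compatibility}, multiplying all filler sets by $q-p$ or $q$ as needed. A secondary check is that $\abs{\spane{K}{G}}$ is minimized, among $\kappa$-sets, exactly at cliques; this holds by regularity.
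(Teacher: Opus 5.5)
Your reduction is the paper's own construction up to renaming: $a=v^{\star}_2$, $b=v^{\star}_1$, $c=\tilde{v}$, $X_a=Y$, $X_b=Z$, $Z_b=X$, and the $r_v$ are the unregistered vertex-papers, with the same budget $k=\kappa$ and the same backward argument. You are in fact more careful than the written proof there, since you explicitly use $b$ to force $\abs{K}\geq\kappa$, a step the paper leaves implicit.

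For general $d=p/q$, the paper handles integrality essentially as you anticipate. With $m_{\text{c}}=t\kappa-\kappa(\kappa-1)/2$, it takes $q-p$ copies of each edge-paper and sets $\abs{X_a}=p\cdot m_{\text{c}}$. It also keeps $X_b$ nonempty, with $\abs{X_b}=p\cdot m_{\text{c}}-\kappa$ and $\abs{Z_b}=(q-p)\cdot m_{\text{c}}$. Your aside that one can take $X_b=\emptyset$ does not always give an exact threshold at $\kappa$; for example, it fails for $d=2/3$ with $\kappa$ odd.
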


\begin{proof}
    We prove the theorem via a reduction from {\prob{Clique}} on regular graphs. Let $I=(G,\kappa)$ be an instance of {\prob{Clique}}, where~$G$ is $t$-regular. Without loss of generality, assume $1\leq \kappa<t<\abs{\vset{G}}$. We first present the reduction for $d=1/2$ and then extend it to any positive rational $d<1$. 
Let $m_{\text{c}}=\kappa\cdot t-\frac{\kappa \cdot (\kappa-1)}{2}$. This is the number of edges covered by a $\kappa$-clique in a $t$-regular graph, and it is also the minimum number of edges covered by any~$\kappa$ vertices. 
\begin{description}
    \item[Citation digraph~$D$.] The set of papers in $D$ is $\vset{G}\cup \eset{G}\cup X\cup Y\cup Z\cup \{v^{\star}_1, v^{\star}_2, \Tilde{v}\}$, where~$\abs{X}=\abs{Y}=m_{\text{c}}$ and~$\abs{Z}=m_{\text{c}}-\kappa$. Let $J=\{v^{\star}_1, v^{\star}_2\}$, let $U=\vset{G}$ be the set of unregistered papers, and let~$V=\vset{D}\setminus U$ be the set of registered papers.
    The arcs in $D$ are defined by the following citations: 
    \begin{itemize}
        \item Every edge-paper $e=\edge{v}{u}\in \eset{G}$ cites the two papers~$v$ and~$u$. 
        \item All papers in $\vset{G}\cup Z$ cite $v^{\star}_1$.
        \item The paper~$v^{\star}_2$ cites all papers from $\vset{G}$.
        \item All papers in~$Y$ cite $v^{\star}_2$.
        \item All papers in~$X\cup \{v^{\star}_1\}$ cite~$\Tilde{v}$.
    \end{itemize}
\end{description}
    See Figure~\ref{fig-add-wah-k-j-2-d-smaller} for an illustration. It is evident that~$D$ is bipartite. Moreover, it is acyclic with the topological ordering $(\overrightarrow{E(G)}, \overrightarrow{Y}, v^{\star}_2, \overrightarrow{V(G)}, \overrightarrow{X}, \overrightarrow{Z}, v^{\star}_1, \tilde{v})$.
    \begin{figure}
        \centering
        \includegraphics[width=0.35\textwidth]{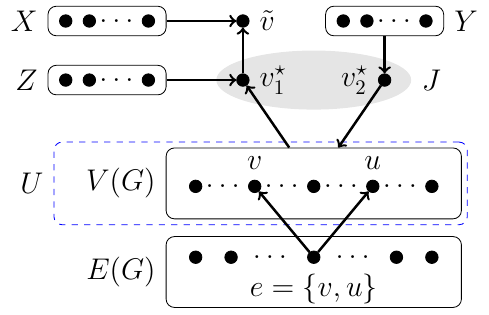}
        \caption{Illustration of the reduction in the proof of Theorem~\ref{thm-add-wah-k-j-2-d-smaller}. The set of unregistered papers is $U=\vset{G}$.}
        \label{fig-add-wah-k-j-2-d-smaller}
    \end{figure}
    To complete the reduction, we set the budget $k = \kappa$ and the target index $d = 1/2$. The resulting instance of {\addimprovement} is $g(I) = (D, (V, U), J, k, d)$. 

    $(\Rightarrow)$ Suppose that~$G$ has a $\kappa$-clique~$K$. The {\cd} index of~$v^{\star}_1$ in~$D[V\cup K]$ is
    $\frac{\abs{K}+\abs{Z}}{\abs{K}+\abs{Z}+\abs{X}}=d$. We now consider~$v^{\star}_2$. Recall that
    $\spane{K}{G}=\{e\in \eset{G} \setmid e\cap K\neq \emptyset\}$ is the set of edges in~$G$ covered by~$K$. 
    We have $\abs{\spane{K}{G}}=m_{\text{c}}$. By construction, the {\cd} index  of~$v^{\star}_2$ in~$D[V\cup K]$ is
    \begin{equation}
    \label{eq-cd-vstar-2}
        \frac{\abs{Y}}{\abs{Y}+\abs{\spane{K}{G}}}=d.
    \end{equation} Since $\abs{K}=\kappa=k$, it follows that~$g(I)$ is a {\yesins}.

    $(\Leftarrow)$ Assume that there exists $K\subseteq U$ such that $\abs{K}\leq k=\kappa$, and the {\cd} indices of both $v^{\star}_1$ and $v^{\star}_2$ in $D'=D[V\cup K]$ are at least~$1/2$. The {\cd} index of~$v^{\star}_2$ in~$D'$ is given by~\eqref{eq-cd-vstar-2}. Hence, $\abs{\spane{K}{G}}\leq m_{\text{c}}$. Since $\abs{K}\leq \kappa$ and~$G$ is $t$-regular, it follows that~$K$ is a clique in~$G$. Thus~$I$ is a {\yesins}.

   We extend the reduction to any positive rational $d<1$. Let $d=p/q$ be in canonical form. The construction is modified as follows:
\begin{itemize}
    \item Add $(q-p-1)\cdot m_{\text{c}}$ papers to~$X$, all citing $\Tilde{v}$, $(p-1)\cdot m_{\text{c}}$ papers to~$Y$, all citing $v^{\star}_2$, and $(p-1)\cdot m_{\text{c}}$ papers to~$Z$, all citing $v^{\star}_1$. Then $\abs{X}=(q-p)\cdot m_{\text{c}}$, $\abs{Y}=p\cdot m_{\text{c}}$, and $\abs{Z}=p\cdot m_{\text{c}}-\kappa$.
    \item Add $(q-p-1)$ copies of $\eset{G}$, where each copy of a paper $e=\edge{v}{u}$ cites $v$ and $u$.
\end{itemize}
    The correctness argument is analogous. 
\end{proof}

The same reduction applies to {\addimprovementinf}.

\begin{theorem}
    \label{cor-add-inf-np-hard-j-2-d-smaller}
For any positive rational number $d<1$,  {\addimprovementinf} is {\nph}, even if $\abs{J}=2$ and the citation digraph is acyclic and bipartite.
\end{theorem}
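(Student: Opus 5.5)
The plan is to reuse the reduction from the proof of Theorem~\ref{thm-add-wah-k-j-2-d-smaller} unchanged, dropping only the budget~$k$. The $(\Rightarrow)$ direction carries over verbatim: if $G$ has a $\kappa$-clique~$K$, then adding exactly the papers of~$K$ gives both $v^{\star}_1$ and $v^{\star}_2$ a {\cd} index of exactly~$d$.

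The substance lies in $(\Leftarrow)$, where we can no longer assume $\abs{K}\leq\kappa$. The two distinguished papers will jointly play the role of the budget.

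\begin{itemize}
\item \textbf{Lower bound from $v^{\star}_1$.} Its only reference is~$\Tilde{v}$, its non-reciprocal citations are $Z\cup K$ (all in~$N_{\text{F}}$), and $N_{\text{R}}=X$. So its index is
\[\frac{\abs{Z}+\abs{K}}{\abs{Z}+\abs{K}+\abs{X}},\]
which is increasing in~$\abs{K}$. For $d=1/2$ this forces $\abs{K}\geq \abs{X}-\abs{Z}=\kappa$, and for general $d=p/q$ it forces $\abs{K}\geq \frac{p}{q-p}\abs{X}-\abs{Z}=\kappa$ with the scaled sizes.
\item \textbf{Upper bound from $v^{\star}_2$.} Its references are exactly the added papers~$K$, its citations~$Y$ lie in~$N_{\text{F}}$, and $N_{\text{R}}$ consists of the edge-papers (with all $q-p$ copies in the general case) that cite some paper of~$K$. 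So its index is
\[\frac{\abs{Y}}{\abs{Y}+(q-p)\abs{\spane{K}{G}}},\]
which is decreasing in $\abs{\spane{K}{G}}$. Reaching~$d$ forces $\abs{\spane{K}{G}}\leq m_{\text{c}}$.
\item \textbf{Extracting the clique.} The key step is a counting fact. Since $\kappa<t$ and $G$ is $t$-regular, any set of $s\geq\kappa$ vertices covers at least $st-\binom{s}{2}$ edges as long as $s\leq t$, and this quantity is strictly increasing in~$s$ on that range. For $s>t$ we need a separate bound; the simplest is to note that $\abs{\spane{S}{G}}$ is monotone under taking supersets. It therefore suffices to take any $\kappa$-subset $K'\subseteq K$. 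Then $\abs{\spane{K'}{G}}\leq\abs{\spane{K}{G}}\leq m_{\text{c}}$, and since every $\kappa$-set covers at least~$m_{\text{c}}$ edges, equality holds and $K'$ is a $\kappa$-clique.
\end{itemize}

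This monotonicity observation removes any need to argue that $\abs{K}=\kappa$ exactly. The conclusion needs only $\abs{K}\geq\kappa$, which $v^{\star}_1$ provides, together with $\abs{\spane{K}{G}}\leq m_{\text{c}}$, which $v^{\star}_2$ provides.

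The main point to verify is the index computation for $v^{\star}_2$ and $v^{\star}_1$ in the scaled construction for general~$p/q$. In particular, we must confirm that no registered paper outside these sets falls into $N_{\text{B}}$ or $N_{\text{R}}$, and that the thresholds come out exactly to~$\kappa$ and~$m_{\text{c}}$. Finally, acyclicity and bipartiteness are inherited directly from the construction in Theorem~\ref{thm-add-wah-k-j-2-d-smaller}.
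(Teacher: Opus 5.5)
Your proposal is correct. It uses the same reduction as the paper and the same idea of letting $v^{\star}_2$ act as a soft budget; only the final extraction step differs.

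The paper argues that $v^{\star}_2$ forces $\abs{K}\leq\kappa$, since any set of more than $\kappa$ vertices covers more than $m_{\text{c}}$ edges. This relies on $\kappa<t$, plus a monotonicity argument when $\abs{K}>t$. The paper then reuses the bounded-budget argument of Theorem~\ref{thm-add-wah-k-j-2-d-smaller}.

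You never bound $\abs{K}$ from above. You take $\abs{K}\geq\kappa$ from $v^{\star}_1$ and $\abs{\spane{K}{G}}\leq m_{\text{c}}$ from $v^{\star}_2$. You then pass to a $\kappa$-subset $K'$ via $\spane{K'}{G}\subseteq\spane{K}{G}$, and use that $\abs{\spane{K'}{G}}=\kappa t-\abs{\eset{G[K']}}\geq m_{\text{c}}$, with equality exactly when $K'$ is a clique.

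Your route is slightly cleaner in two ways. The extraction step does not rely on $\kappa<t$. It also makes explicit the role of $v^{\star}_1$, which the paper's sketch leaves implicit. Indeed, $\abs{K}\leq\kappa$ and $\abs{\spane{K}{G}}\leq m_{\text{c}}$ alone would not force a clique (take $K=\emptyset$).

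Your index computations for both distinguished papers in the scaled construction are also correct, giving the thresholds $\kappa$ and $m_{\text{c}}$.
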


\begin{proof}
We use the reduction from the proof of Theorem~\ref{thm-add-wah-k-j-2-d-smaller}, omitting~$k$ from the instance. 
The $(\Rightarrow)$-direction remains unchanged. For the $(\Leftarrow)$-direction, observe that adding more than $\kappa$ papers from $U$ to $V$ causes the {\cd} index of $v^{\star}_2$ to drop strictly below the threshold $d$.
\end{proof}

\onlyfull{
\begin{theorem}
\label{thm-d-smaller-1-add-inf-nph}
\label{thm-adding-nph-k-infty}
For any positive constant $d<1$, {\addimprovementinf} is {\nph}. This holds even if the citation digraph is acyclic.
\end{theorem}

\begin{proof}
    We first prove the hardness for the case where $d=1/2$ by a reduction from {\prob{RBDS}}. Subsequently, we show how this reduction can be adapted to facilitate other values of $d$.

    Let $I=(G, \kappa)$ be an instance of {\prob{RBDS}} where~$G$ is a bipartite graph with the bipartition~$(R, B)$. We create an instance of {\addimprovement} as follows.
    \begin{description}
        \item[Citation digraph~$D$.] For each $v\in R\cup B$, we create a paper denoted by the same symbol in~$D$. Then, we create two vertices~$y$ and~$z$, and create a set~$C$ of~$\kappa$ vertices in~$D$. Let $V=R\cup C\cup \{y, z\}$ be the set of registered papers, and let $U=B$ be the set of unregistered papers. Besides, let $J= R\cup \{y\}$. We construct arcs in~$D$ so that exactly the following citations exist:
    \begin{itemize}
        \item All papers from~$B\cup \{y\}$ cite~$z$;
        \item All papers from~$C$ cite~$y$; and
        \item For an edge $\edge{b}{r}\in E(G)$ where $b\in B$ and $r\in R$,~$b$ cites~$r$.
    \end{itemize}
    \end{description}
    To complete the construction, we let $d=\frac{1}{2}$ and let $k=\kappa$. The instance of {\addimprovement} is $g(I)=(D, J, k, d)$, where $\vset{D}=V\uplus U$. We refer to Figure~\ref{fig-k-infty-mani-add-np-hard} for an illustration.
    \begin{figure}
        \centering
        \includegraphics[width=0.4\textwidth]{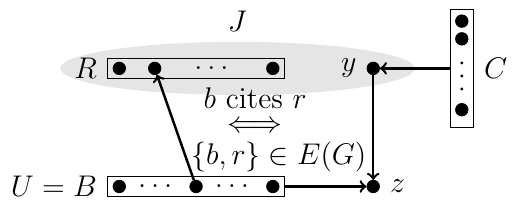}
        \caption{An illustration of the reduction in the proof of Theorem~\ref{thm-adding-nph-k-infty}.}
        \label{fig-k-infty-mani-add-np-hard}
    \end{figure}

In the following, we prove the correctness of the reduction.
For subset $B'\subseteq B$ and $r\in R$, let $B'(r)=\{b\in B' \setmid \edge{b}{r}\in \eset{G}$ be the set of blue vertices in $B'$ dominating $r$.

    $(\Rightarrow)$ Assume that there exists $B'\subseteq B$  of cardinality~$\kappa$ that dominates~$R$ in~$G$. Consider the citation digraph~$D[V\cup B']$. The paper~$y$ has a {\cd} index of~$\frac{\abs{C}}{\abs{C}+\abs{B'}}=1/2$. Let $r\in R$ be any arbitrary paper in $R$.  As $B'$ dominates $R$, it holds that $B'(r)\neq\emptyset$. The {\cd} index of $r$ in $D[V\cup B']$ is \[\frac{\abs{B'(r)}}{\abs{B'(r)}+\abs{\{x'\}}}\geq 1/2=d.\]
We know now that $g(I)$ is a {\yesins}.

    $(\Rightarrow)$ Assume that there is a subset $B'\subseteq B$ so that every paper from~$J$ has a {\cd} index of at least~$1/2$ in~$D[V\cup B']$. By the construction of $D$, the paper~$y$ has a {\cd} index of at least $1/2$ in $D[V\cup B']$ if and only if $\abs{B'}\leq \abs{C}=\kappa$. The {\cd} index of each $r\in R$ is   $\frac{\abs{B'(r)}}{\abs{B'(r)}+\abs{\{x'\}}}$. It follows that $B'(r)\neq\emptyset$, and hence $B'$ dominates $R$. We can conclude now that $I$ is a {\yesins}.

    To prove the hardness for any positive rational number~$d$, let $d=\frac{p}{q}$ be the canonical form. The new reduction only differs in the cardinality of~$C$: in the new reduction, we create $\frac{p}{q-p}\cdot \kappa$ papers in~$C$. For this purpose, we need to assume that~$k$ is divisible by $q-p$. This assumption does not change the {\wbhns} of {\prob{RBDS}} with respect to $\kappa$. The reason is as follows: If this condition is not satisfied, we have that $k=t \pmod {q-p}$. Then, we add $q-p-t$ new vertices in~$B$, add  $q-p-t$ new vertices in~$B$ in~$R$, and add $q-p-t$ edges between the newly added vertices so that they form a matching in~$G$. Then, it is easy to see that the original instance has a blue dominating set of size~$\kappa$ if and only if the new instance has a  blue dominating set of size $\kappa+q-p-t$, which is divisible by $q-p$. As $q-p$ is a constant, the {\wbhns} remains.
\end{proof}
}

\section{Manipulation by Paper Deletion}
\label{sec-deletion}
We now consider the problem of {\cd} index manipulation by deleting papers. 
The problem is in {\xp} with respect to~$k$, the number of papers allowed to be deleted. This raises the question of whether it is {\fpt} for the same parameter. In contrast to the intractability result for {\cd} index manipulation by adding papers, we show that the complexity of {\deleteimprovement} depends on~$d$: it is {\fpt} with respect to~$k$ when $d=1$, but {\wah} for every rational $d<1$. 
We use $\bigos{\cdot}$ to denote $\bigo{\cdot}$ with polynomial factors suppressed.

\begin{theorem}
\label{thm-del-d-1-fpt-k}
    For $d=1$, {\deleteimprovement}  can be solved in time $\bigos{2^k}$.
\end{theorem}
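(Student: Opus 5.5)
We will prove Theorem~\ref{thm-del-d-1-fpt-k} with a bounded search tree of depth at most~$k$ and branching factor two. First we characterize when a paper has {\cd} index~$1$ after deletions. Consider a paper $p\in J$ in $D-V'$. Its {\cd} index equals~$1$ exactly when two conditions hold. First, $\citeB{p}{D-V'}\cup\citeR{p}{D-V'}=\emptyset$. Second, $\citeF{p}{D-V'}\neq\emptyset$, which is needed for the index to be defined and positive.

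The first condition says that no paper~$u$ outside $\{p\}\cup\outneighbor{p}{D-V'}$ cites a reference of~$p$. In other words, for every arc $\arc{u}{r}$ with $r\in\outneighbor{p}{}$, $u\neq p$ and $u\notin\outneighbor{p}{}$, the paper $u$ or the paper $r$ must be deleted. We call such a pair $(u,r)$ a \emph{conflict} for~$p$. Deletions never create new arcs, so they never create new conflicts. However, a deleted reference~$r$ stops being a reference, so deletions can only remove conflicts. There is one subtlety: if~$u$ also cites~$p$ and we delete~$r$, then~$u$ may move from~$\citeB{p}{}$ to~$\citeF{p}{}$. This only helps.

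The algorithm works as follows. If some $p\in J$ has no non-reciprocal citation in $\vset{D}\setminus J$ that could survive, we reject. More precisely, deletions never add citations, so $\citeF{p}{}$ must be nonempty at the end. The surviving citers of~$p$ are exactly the non-reciprocal citers that are not deleted. Otherwise we search for a conflict $(u,r)$ for some $p\in J$ in the current digraph. If there is none, we check that every $p\in J$ still has a non-reciprocal citer, and then accept. If there is a conflict, we branch on deleting~$u$ or deleting~$r$, using only vertices outside~$J$. If both $u,r\in J$, the branch is dead. We decrease $k$ and recurse. The tree has at most $2^k$ leaves, and each node takes polynomial time, which gives $\bigos{2^k}$.

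The main obstacle is the second condition, that $\citeF{p}{}$ stays nonempty. Deleting a conflicting~$u$ might remove the last citer of~$p$. We handle this at the leaves. Deleting more papers can never make $\citeF{p}{}$ larger, except when a citer moves from~$B$ to~$F$ because a reference is deleted. That move can only happen through a branch we already explore. Any minimal solution must hit every conflict, so some leaf of the search tree corresponds to a subset of that solution which already resolves all conflicts. A subset deletes no more citers than the full solution, and every citer~$u$ it keeps is non-reciprocal with all of its references to $\outneighbor{p}{}$ resolved, so~$u$ lies in~$\citeF{p}{}$. Hence the leaf check succeeds, and this proves correctness.
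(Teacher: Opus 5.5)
Your proposal is correct and is essentially the paper's proof. The correspondence is direct:
\begin{itemize}
\item The paper's two reduction rules are your single-branch case, where a conflict with one endpoint in $J$ forces deletion of the other endpoint.
\item Its initial rejection when some $v\in J$ has $\citeB{v}{D[J]}\cup\citeR{v}{D[J]}\neq\emptyset$ is your dead branch.
\item Its enumeration of vertex covers of size at most $k$ in the conflict graph on $\vset{D}\setminus J$ is your two-way branching on $(u,r)$.
\end{itemize}
Your subset-of-a-solution argument also makes explicit the requirement $\citeF{p}{}\neq\emptyset$, which the paper's case analysis passes over.

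One slip needs fixing. Your preliminary rejection test must count every non-reciprocal citer of $p$, including those in $J$ (which can never be deleted), not only those in $\vset{D}\setminus J$. As written, it would wrongly reject an instance where $p$ is cited only by another distinguished paper. You can also simply drop this test, since the leaf check already subsumes it.
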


\begin{proof}
Let $I=(D, J, k, d)$ be an instance of {\deleteimprovement} with $d=1$. 
%
If there exists a paper $v\in J$ such that $\citeB{v}{D[J]}\cup \citeR{v}{D[J]}\neq\emptyset$, we directly return ``{\no}''. 
Henceforth, we assume that $\citeB{v}{D[J]}\cup \citeR{v}{D[J]}=\emptyset$ holds for all $v\in J$. 
We iteratively apply the following two reduction rules until neither is applicable. An illustration of the rules is shown in Figure~\ref{reduction-rule-del}.  

\begin{figure}
    \centering
    \includegraphics[width=0.75\linewidth]{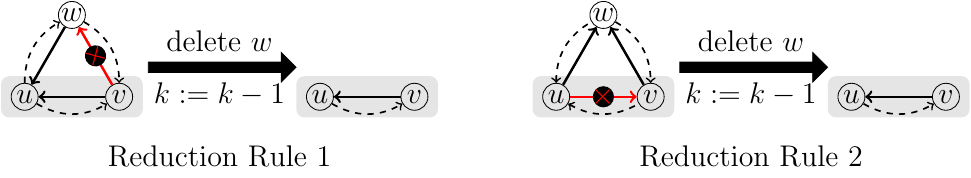}
    \caption{Illustration of Reduction Rules~\ref{reduction-rule-a} and~\ref{reduction-rule-b}. Dashed arcs indicate optional connections (they may or may not exist), while arcs marked with a red cross are absent. The two vertices within the shaded regions belong to $J$, the set of distinguished papers; the remaining vertex lies outside $J$.}
    \label{reduction-rule-del}
\end{figure}

\begin{reductionrule}
\label{reduction-rule-a}
    If there exist $u,v\in J$ and $w\in \vset{D}\setminus J$ such that~$v$ and~$w$ both cite~$u$, and~$v$ does not cite~$w$, then delete~$w$ from~$D$ and decrease~$k$ by one.
\end{reductionrule}

To see that this rule is correct, observe that whenever such vertices $u$, $v$, $w$ exist, we have $w\in \citeB{v}{D-U}\cup \citeR{v}{D-U}$ for every $U\subseteq \vset{G}\setminus (J\cup \{w\})$. Thus, to achieve a {\cd} index of~$1$ for~$v$ by deleting papers from $\vset{D}\setminus J$, 
the paper~$w$ has to be deleted. 

\begin{reductionrule}
\label{reduction-rule-b}
    If there exist $u, v\in J$ and $w\in \vset{D}\setminus J$ such that~$v$ and~$u$ both cite~$w$, and~$u$ does not cite~$v$, then delete~$w$ from~$D$ and decrease~$k$ by one.
\end{reductionrule}

The correctness of the reduction follows a similar logic. 

After the reduction rules are exhaustively applied, we construct an undirected graph~$G$ 
as follows. 
The vertex set of~$G$ is $\vset{D}\setminus J$. For $w,w' \in V(G)$, we add an edge $\{w,w'\}$ to $G$ if and only if $w$ cites $w'$ and there exists $v \in J$ such that $v$ cites $w'$ but not $w$ in $D$. In this situation, at least one of $w$ or $w'$ must be deleted to ensure that the {\cd} index of $v$ becomes $1$. The edge represents such a conflict.

We now enumerate all vertex covers~$S$ of~$G$ of size at most~$k$. Using the folklore branching algorithm for {\prob{Vertex Cover}}, this can be done in time~$\bigos{2^k}$ (see, e.g.,~\cite{DBLP:conf/cocoon/Fernau02}). If there exists at least one such vertex cover~$S$ for which every paper in~$J$ has  {\cd} index~$1$ in $D-S$, we return  ``{\yes}''. Otherwise, we return ``\no''. 

We now verify the correctness. It is clear that if the algorithm returns ``\yes'', then~$I$ is a {\yesins}. 
Suppose the algorithm returns ``\no''. Then one of the following cases occurs.

\begin{description}
    \item[Case~1:] $G$ has no vertex cover of size at most~$k$.

    In this case, no matter which at most~$k$ papers from $\vset{D} \setminus J$ are deleted, there exist $w, w' \in \vset{G}$ and $v \in J$ such that $v$ cites $w'$ but not $w$, and $w$ cites $w'$ in~$D$. As a result,~$v$ cannot attain a {\cd} index of~$1$ by deleting at most~$k$ papers from $\vset{D} \setminus J$. Thus,~$I$ is a {\noins}.

    \item[Case~2:] Every vertex cover~$S$ of size at most~$k$ in~$G$ fails to make all papers in $J$ attain {\cd} index $1$ in $D - S$.

Let $S$ be a vertex cover of $G$ of size at most $k$, and let $v$ be a paper in $J$ whose {\cd} index in $D-S$ is less than~$1$. 
    Then there exist papers $w$ and $w'$ in $D-S$ such that $w'$ is cited by $v$ and $w$, and $v$ does not cite $w$. 
Four configurations are possible: 
    \begin{enumerate}
        \item[(2.1)] $w, w' \in J$,
        \item[(2.2)] $w, w' \in \vset{D} \setminus (S \cup J)$,
        \item[(2.3)] $w \in J$, $w' \in \vset{D} \setminus (S \cup J)$,
        \item[(2.4)] $w' \in J$, $w \in \vset{D} \setminus (S \cup J)$.
    \end{enumerate}
    Cases~(2.3) and~(2.4) are impossible due to the exhaustive application of the reduction rules. Case~(2.1) contradicts our initial assumption on~$D[J]$. Finally, Case~(2.2) contradicts the assumption that $S$ is a vertex cover of~$G$. 
    Thus, this case cannot occur. 
\end{description}

Therefore, if the algorithm returns ``\no'', then~$I$ is indeed a {\noins}.
%
\end{proof}

The algorithm in the proof of Theorem~\ref{thm-del-d-1-fpt-k} suggests a kernel of size at most $\abs{J}+k^2+k$ for {\deleteimprovement} with $d=1$. A kernelization for a parameterized problem is a polynomial-time algorithm that maps an instance $(I, \kappa)$ to an equivalent instance of the same problem whose size is bounded by a computable function of~$\kappa$. 

To obtain such a kernel, we augment the two reduction rules from the proof of Theorem~\ref{thm-del-d-1-fpt-k} with the following well-known reduction rule for {\prob{Vertex Cover}}.

\begin{reductionrule}
Let $G$ be the graph constructed after exhaustively applying Reduction Rules~\ref{reduction-rule-a} and~\ref{reduction-rule-b}, as in the proof of Theorem~\ref{thm-del-d-1-fpt-k}. If $G$ contains a vertex~$v$ of degree at least $k+1$, then delete $v$ from $D$ and decrease $k$ by one.
\end{reductionrule}

We conjecture that applying the state-of-the-art kernelization techniques for {\prob{Vertex Cover}} (see, e.g.,~\cite{DBLP:conf/birthday/FellowsJKRW18,DBLP:journals/tcs/LiZ18}) may yield a kernel of size $\abs{J} + 2k$ for {\deleteimprovement} with $d = 1$, potentially when restricted to acyclic citation digraphs. Obtaining a linear kernel parameterized solely by $k$ may be considerably more challenging. Since this direction falls outside the scope of the present paper, we leave it as an open question.

\begin{openquestion}
Does {\deleteimprovement} admit a polynomial kernel with respect to~$k$ when $d=1$ and the citation digraph is acyclic?
\end{openquestion}

For $d<1$, the complexity of {\deleteimprovement} changes significantly, as shown by the following theorem.

\begin{theorem}
\label{thm-delete-improvement-wah}
For any positive rational number $d<1$, {\deleteimprovement} is {\wah} parameterized by $\abs{J}+k$, even if the citation digraph is acyclic.
\end{theorem}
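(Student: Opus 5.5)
We reduce from \prob{Multicolored Clique}, in the spirit of the proof of Theorem~\ref{thm-add-wah-j-k}. Deletion can only shrink the three sets $\citeF{v}{D}$, $\citeB{v}{D}$ and $\citeR{v}{D}$, so for $d<1$ the useful moves are to delete papers in $\citeB{v}{D}\cup\citeR{v}{D}$ while keeping enough papers in $\citeF{v}{D}$. The danger is the opposite move: deleting too much, for instance a reference of $v$, which can turn $\citeB{v}{D}$-papers into $\citeF{v}{D}$-papers. We block this with ``soft budget'' distinguished papers, as in Theorem~\ref{cor-merge-mani-inf-np-h-j-2-small}.

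Let $(G,(V_1,\dots,V_\kappa))$ be an instance of \prob{Multicolored Clique}. We may assume $G$ is regular within each pair of color classes, i.e.\ every vertex of $V_i$ has exactly $t$ neighbors in each $V_j$, and all $|V_i|=n$. This is achievable by standard padding. The construction has three layers.
\begin{itemize}
\item Vertex and edge papers. For each $v\in V(G)$ create a vertex-paper $v$, and for each edge $e$ an edge-paper $p(e)$. The intended solution deletes the $k=\kappa+\binom{\kappa}{2}$ papers corresponding to a multicolored clique.
\item Color-class papers. For each $i\in[\kappa]$ add a distinguished paper $x_i$ with a private reference $r_i$. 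Every paper of $V_i$ cites both $x_i$ and $r_i$, so each lies in $\citeB{x_i}{D}$. Add a set $F_i$ of papers citing only $x_i$, sized so that $\dindex{x_i}{D-S}\ge d$ if and only if at least one paper of $V_i$ is deleted.
\item Pair papers. For each pair $\{i,j\}$ add a distinguished paper $y_{\{i,j\}}$ with a private reference $r_{\{i,j\}}$. Every edge-paper between $V_i$ and $V_j$ cites both, so it lies in $\citeB{y_{\{i,j\}}}{D}$. Calibrate so that at least one such edge-paper must be deleted.
\end{itemize}
These give the lower bounds $|S\cap V_i|\ge1$ and $|S\cap P_{\{i,j\}}|\ge1$. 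The budget $k$ then forces exactly one deletion of each kind, just as in the proof of Theorem~\ref{thm-add-wah-j-k}. Since $|J|=\kappa+\binom{\kappa}{2}$, the parameter $|J|+k$ is bounded by a function of $\kappa$.

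To encode adjacency, make each edge-paper $p(e)$ with $e=\{u,w\}$, $u\in V_i$, $w\in V_j$, also cite $u$ and $w$. Let $y_{\{i,j\}}$ cite additionally all of $V_i\cup V_j$, with every vertex-paper also cited by an auxiliary ``witness'' set. Alternatively, add a further distinguished paper $z_{\{i,j\}}$ whose references are the papers of $V_i\cup V_j$ and whose $\citeR{z_{\{i,j\}}}{D}$ consists of the edge-papers. Either way, the remaining edge-papers and vertex-papers create $\citeR{\cdot}{D}$- or $\citeB{\cdot}{D}$-contributions that are reduced by exactly one extra unit only when the deleted edge-paper is incident to both deleted vertex-papers. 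With the thresholds tuned to $d=p/q$, this means the index of $z_{\{i,j\}}$ reaches $d$ precisely when the deleted edge joins the two chosen vertices. We take this counting from the regularity trick used in Theorem~\ref{thm-merge-ell-1-hard-small-compatibility}. Acyclicity follows from ordering auxiliary citers, then edge-papers, then vertex-papers, then distinguished papers, then private references.

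We must also prevent cheating. Deleting a private reference $r_i$ or $r_{\{i,j\}}$ would empty the $\citeB{\cdot}{D}$-set for free. To block this, give each distinguished paper $q$ parallel private references, so that a cheat costs more than $k$, or make the private references members of $J$ themselves. Deleting an $F$-paper only lowers the index, so it never helps.

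The main obstacle is the calibration for general $p/q$. For each distinguished paper, one deletion must be exactly enough while zero deletions is insufficient, and simultaneously the adjacency paper $z_{\{i,j\}}$ must distinguish incident from non-incident choices. I would first fix $d=1/2$ and verify the arithmetic, then scale the sets $F$ and the auxiliary citers by factors of $p$ and $q-p$, as at the end of the proof of Theorem~\ref{thm-merge-ell-1-hard-small-compatibility}.
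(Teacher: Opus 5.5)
\textbf{There is a genuine gap: the adjacency check.}

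Your forcing layer is sound, provided you use $k+1$ parallel private references rather than $q$. Each $x_i$ forces a deletion in $V_i$, each $y_{\{i,j\}}$ forces one in $P_{\{i,j\}}$, and the budget $\kappa+\binom{\kappa}{2}$ leaves no slack. The problem is the step that checks adjacency, and the counting claim you make for it is false.

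Take your second option. Let a solution delete $u\in V_i$, $w\in V_j$ and $p(e^*)$ with $e^*\in P_{\{i,j\}}$. A paper of $P_{\{i,j\}}$ leaves $\citeR{z_{\{i,j\}}}{D}$ only if it is itself deleted or both of its endpoints are deleted. So the departures from $P_{\{i,j\}}$ are exactly $p(e^*)$ and, if $\{u,w\}\in E(G)$, also $p(\{u,w\})$.
\begin{itemize}
\item The honest choice $e^*=\{u,w\}$ gives one departure. A non-adjacent pair $u,w$ with an arbitrary $e^*$ also gives one. Two departures arise only for an adjacent pair with $e^*\neq\{u,w\}$.
\item For $l\notin\{i,j\}$, the edge-papers of $P_{\{i,l\}}$ also cite a reference of $z_{\{i,j\}}$. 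The consistent one, $p(\{u,u_l\})$, has already left once $u$ is deleted. Deleting an edge-paper of $P_{\{i,l\}}$ not incident to $u$ removes one more paper. So inconsistent choices are strictly rewarded.
\end{itemize}

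Your first option fails as well. There the edge-papers of $P_{\{i,j\}}$ cite $r_{\{i,j\}}$, so they stay in $\citeB{y_{\{i,j\}}}{D}$ whatever happens to $u$ and $w$. Under your regularity, the witness and cross-pair contributions depend only on how many vertex-papers are deleted, not which. No adjacency information survives.

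The structural reason is that the memberships destroyed by several deletions form a union. Overlapping, i.e.\ consistent, choices therefore destroy fewer of them. Deletions can reward consistency only through a conjunction: a paper that leaves a set only once \emph{all} of its relevant references are gone.

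\textbf{How the paper handles it.} This conjunction is exactly what the paper uses, and it makes edge-paper deletions unnecessary: $k=\kappa$, and only vertex-papers are meant to be deleted.
\begin{itemize}
\item $x_i$ has a private reference $x_i'$ cited by $C_i\cup V_i$, so $\citeF{x_i}{D}=\{z\}$ and $\citeR{x_i}{D}=C_i\cup V_i$. With $d=1/(m+n)$, each $x_i$ needs one deletion inside $C_i\cup V_i\cup\{x_i'\}$, which exhausts the budget.
\item $y_{\{i,j\}}$ cites all of $V_i\cup V_j$, while $p(e)$ cites only the two endpoints of $e$. So $p(\{u,w\})$ leaves $\citeR{y_{\{i,j\}}}{D}$ precisely when both $u$ and $w$ are deleted. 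The padding $C_{\{i,j\}}$ is calibrated so that exactly this one extra departure is required, which forces every pair of chosen vertices to be adjacent.
\end{itemize}

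Your per-pair regularity assumption is actually useful here. For $\kappa\ge 3$, $\citeR{y_{\{i,j\}}}{D}$ also contains $C_{\{i,l\}}$, $y_{\{i,l\}}$ and the edge-papers of $P_{\{i,l\}}$ for $l\notin\{i,j\}$. The paper's displayed count leaves these out. Regularity makes the number of them that leave independent of which vertices are deleted, so the thresholds can be recalibrated.

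Alternatively, the reduction of Theorem~\ref{thm-delete-improvement-j-paranp-hard} uses the same conjunction (counting edges inside the deleted set) with a single distinguished paper. It is W[1]-hard in $k$ with $\abs{J}=1$, so it already yields the statement.
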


\begin{proof}
  We prove the theorem by a reduction from \prob{Multicolored Clique}. First, we present the reduction for a specific value of~$d$, and then extend it to any positive rational $d<1$.
    
    Let $I=(G, (V_1, V_2, \dots, V_{\kappa}))$ be an instance of {\prob{Multicolored Clique}}. Without loss of generality, assume that $\kappa\geq 2$. For each $i\in [\kappa]$, let $n_i=\abs{V_i}$. For each pair $\{i,j\}\subseteq [\kappa]$, let~$m_{\{i,j\}}$ denote the number of edges between~$V_i$ and~$V_j$ in~$G$. Let $n=\abs{\vset{G}}$ and let $m=\abs{\eset{G}}$.
    We create an instance of {\deleteimprovement} as follows.
\begin{description}
    \item[Citation digraph~$D$.]  For each vertex $v\in \vset{G}$, we create a paper denoted by the same symbol for simplicity. For each edge $e\in \eset{G}$, we create a paper~$p(e)$. For every pair $\{i,j\}\subseteq [\kappa]$, let
    \[P_{\{i,j\}}=\{p(e) \setmid e\in \eset{G}, e\cap V_i\neq\emptyset, e\cap V_j\neq\emptyset\}\] 
    be the set of papers corresponding to edges between~$V_i$ and~$V_j$ in~$G$. Thus, $\abs{P_{\{i,j\}}}=m_{\{i,j\}}$. Let $P(\eset{G})=\{p(e) \setmid e\in \eset{G}\}$ be the set of all edge-papers. Additionally, we create a set $X=\{x_1, x_2, \dots, x_{\kappa}\}$ of~$\kappa$ papers, a set $X'=\{x_1', x_2', \dots, x_{\kappa}'\}$ of~$\kappa$ papers, and a set $Y=\{y_{\{i,j\}}\setmid \{i,j\}\subseteq [\kappa]\}$ of ${\kappa \cdot (\kappa-1)}/{2}$ papers. Moreover, for every $i\in [\kappa]$, we create a set~$C_i$ of $m+n-n_i$ papers, and for every pair $\{i,j\}\subseteq [\kappa]$, we create a set~$C_{\{i,j\}}$ of $m+n-m_{\{i,j\}}$ papers. Finally, we create a paper~$z$. 
    Let $J=X\cup Y$.

    We create arcs in~$D$ so that exactly the following citations exist:
    \begin{itemize}
        \item For every $i\in [\kappa]$, all papers from $C_i\cup V_i \cup \{x_i\}$ cite the paper~$x_i'$.
        \item For every pair~$\{i,j\}\subseteq [\kappa]$, all papers from $C_{\{i,j\}}\cup \{y_{\{i,j\}}\}$ cite all papers from $V_i\cup V_j$.
        \item For every pair $\{i,j\}\subseteq [\kappa]$ and every edge $e=\edge{u}{v}$ between~$V_i$ and~$V_j$ in~$G$, the paper~$p(e)$ cites~$u$ and~$v$.
        \item The paper~$z$ cites all papers from $X\cup Y$.
    \end{itemize}
\end{description}
See Figure~\ref{fig-delete-improvement-wah} for an illustration of the construction. The acyclicity of the citation digraph~$D$ is witnessed by the topological ordering $(z, \overrightarrow{J}, \overrightarrow{C}, \overrightarrow{P(E(G))}, \overrightarrow{\vset{G}})$, where $C=(\bigcup_{i \in [\kappa]} C_i) \cup (\bigcup_{\{i,j\} \subseteq [\kappa]} C_{\{i,j\}})$. 
To complete the reduction, we set $d={1}/{(m+n)}$ and  $k=\kappa$.
The resulting instance of {\deleteimprovement} is $g(I)=(D, J, k, d)$ which  can be constructed in polynomial time. 
\begin{figure}
    \centering
    \includegraphics[width=0.4\textwidth]{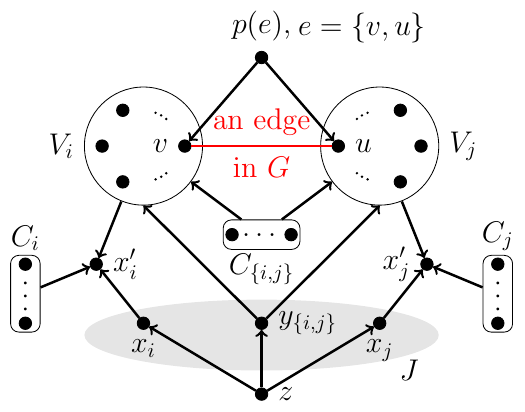}
    \caption{Illustration of the construction of the citation digraph in the proof of Theorem~\ref{thm-delete-improvement-wah}.}
    \label{fig-delete-improvement-wah}
\end{figure}
We now prove the correctness of the reduction.

    $(\Rightarrow)$ Suppose that $G$ contains a $\kappa$-colored clique $K$. After deleting all the papers in~$K$ from~$D$, the {\cd} index of every paper in~$X$ becomes 
    $\frac{\abs{\{z\}}}{\abs{\{z\}}+\abs{C_i}+\abs{V_i}-1}=d$, and that of every paper in~$Y$ becomes 
    $\frac{\abs{\{z\}}}{\abs{\{z\}}+\abs{C_{\{i,j\}}}+\abs{P_{\{i,j\}}}-1}=d$.

    $(\Leftarrow)$ Assume that there exists $V'\subseteq \vset{D}\setminus J$ with $\abs{V'}\leq k$ such that every paper in~$J$ has a {\cd} index of at least $d={1}/{(n+m)}$ in~$D-V'$. By construction,~$z$ is the only citation of papers in~$J$, and~$z$ does not cite any references of papers in~$J$. Hence $z\not\in V'$
    . 
    We show below that $V'$ does not contain any edge-paper or any paper in $\bigcup_{\{i,j\}\subseteq [\kappa]} C_{\{i,j\}}$. 
    \begin{claim}
        \label{claim-a}
        $V'\cap \left(\left(\bigcup_{\{i,j\}\subseteq [\kappa]} C_{\{i,j\}}\right)\cup P(\eset{G})\right)= \emptyset$.
    \end{claim}

    \begin{proof}
    Suppose otherwise. Since $\abs{V'}\leq k=\kappa$, there exists $i\in [\kappa]$ such that $V'\cap (V_i\cup C_i\cup \{x_i'\})=\emptyset$. By  construction, the {\cd} index of $x_i$ in $D-V'$ is
    $\frac{\abs{\{z\}}}{\abs{\{z\}}+\abs{V_i\cup C_i}}=\frac{1}{n+m+1}<d$, a contradiction.
\end{proof}

Next, we prove that $V'$ intersects every $V_i$.

    \begin{claim}
    \label{claim-b}
        $V'\cap V_i\neq \emptyset$ for all $i\in [\kappa]$.
    \end{claim}

    \begin{proof}
        Assume for contradiction that there exists $i\in [\kappa]$ such that $V'\cap V_i=\emptyset$. Since $\kappa\ge 2$, there exists $j\in[\kappa]\setminus\{i\}$. By Claim~\ref{claim-a}, the {\cd} index of~$y_{\{i,j\}}$ in $D-V'$ is
    \begin{equation}
    \label{eq-l}
        \frac{\abs{\{z\}}}{\abs{\{z\}}+\abs{C_{\{i,j\}}}+\abs{P_{\{i,j\}}}}=\frac{1}{m+n+1}<d,
    \end{equation}
    a contradiction.
    \end{proof}

    By Claim~\ref{claim-b} and the fact that $\abs{V'}\leq k=\kappa$, it follows that $V'$ contains exactly one vertex from each $V_i$ for all $i\in [\kappa]$. We show that $V'$ forms a clique in~$G$. Suppose, for contradiction, that there exist $v\in V'\cap V_i$ and $u\in V'\cap V_j$ for some $\{i,j\}\subseteq [\kappa]$ that are nonadjacent in~$G$. Consequently, every paper in $P_{\{i,j\}}$ cites at least one paper in $(V_i\cup V_j)\setminus \{v,u\}$. Since all papers in $V_i\cup V_j$ are references of~$y_{\{i,j\}}$, by the construction of~$D$ and Claim~\ref{claim-a}, the {\cd} index of $y_{\{i,j\}}$ in $D-V'$ is less than $d$ (see~Inequality~\eqref{eq-l}), a contradiction. 
    Thus, $V'$ is a clique in~$G$, and by Claim~\ref{claim-b}, $I$ is a {\yesins}. 

    To establish the hardness for any positive rational number $d<1$ with the canonical form~${p}/{q}$, we substitute~$z$ with a set~$Z$ of $\frac{p}{q-p}\cdot (m+n-1)$ papers, each citing all papers in~$J$. We assume that $m+n-1$ is divisible by $q-p$, which does not change the {\wahns} of {\prob{Multicolored Clique}}.\footnote{If this condition is not satisfied, we introduce isolated vertices (assigning them arbitrarily to $V_1$, $\dots$, $V_{\kappa}$) until it holds. The resulting instance is equivalent to the original one.} The correctness argument parallels the proof described above.
\end{proof}

Now we examine the parameter~$\abs{J}$. 
In particular, we present a complexity classification given by the following three theorems.

\begin{theorem}
\label{thm-del-d-1-p-j-1}
    For $d=1$, {\deleteimprovement} is polynomial-time solvable if $\abs{J}=1$.
\end{theorem}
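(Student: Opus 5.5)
Let $J=\{v\}$ and let $I=(D,J,k,1)$ be the instance. A {\cd} index of~$1$ for~$v$ in $D-V'$ requires exactly two things. First, $\citeB{v}{D-V'}\cup\citeR{v}{D-V'}=\emptyset$. Second, $\citeF{v}{D-V'}\neq\emptyset$, since otherwise the index is $0$ or undefined. I will turn these into a minimum vertex cover problem on a bipartite graph, which is polynomial-time solvable, as recalled in the preliminaries.

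The first condition says there is no pair $(w,w')$ in $D-V'$ with $w\neq v$, $w\notin\outneighbor{v}{D}$, $w'\in\outneighbor{v}{D}$, and $w$ citing $w'$. Deleting vertices never changes whether $v$ cites some remaining vertex, so membership in these roles is fixed. Set $A=\outneighbor{v}{D}$ (the references) and $C=\{w\in\vset{D}\setminus(A\cup\{v\}) \setmid \outneighbor{w}{D}\cap A\neq\emptyset\}$. These sets are disjoint. Let $G$ be the bipartite graph with parts $A$ and $C$ and an edge $\{w,w'\}$ whenever $w\in C$, $w'\in A$, and $w$ cites $w'$.

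The first condition holds in $D-V'$ if and only if $V'$ contains a vertex cover of $G$. Vertices of $V'$ outside $A\cup C$ are irrelevant to this condition. Their only effect is to possibly remove non-reciprocal citations of~$v$ that lie outside~$C$, which only hurts the second condition, so we may assume $V'\subseteq A\cup C$.

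For the second condition, let $F_0=\inneighbor{v}{D}\setminus(A\cup C)$. These are non-reciprocal citations citing no reference at all, and they survive automatically. If $F_0\neq\emptyset$, compute a minimum vertex cover $S$ of $G$ by Hopcroft--Karp and K\H{o}nig's theorem, and answer ``\yes'' if and only if $\abs{S}\leq k$.

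Otherwise, some surviving citation $u$ must lie in $C$, and it must not be cited by~$v$. For each candidate $u\in C\cap\inneighbor{v}{D}$, this means $u\notin V'$ while all references $u$ cites are deleted. Guess $u$ and force $\outneighbor{u}{D}\cap A$ into the cover. Concretely, delete these forced vertices from $G$, reduce the budget by their number, forbid $u$ from the cover, and compute a minimum vertex cover of the rest. Forbidding $u$ is harmless, since all of $u$'s edges are already covered by the forced vertices, so $u$ is isolated in the remaining graph. Answer ``\yes'' if some $u$ yields total size at most~$k$. There are at most $n$ guesses, each costing one bipartite vertex cover computation, so the algorithm runs in polynomial time.

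The main obstacle is the bookkeeping of the second condition. We must check that fixing the surviving $F$-paper interacts correctly with the cover. We must also handle reciprocal citations: a paper in $A$ that cites~$v$ is not counted, and it is excluded since $u\notin A$. With those points confirmed, correctness follows from the equivalence above.
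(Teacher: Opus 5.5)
Your proposal is correct and follows the paper's proof essentially step by step. Your sets $A$, $C$, and $F_0$ are exactly $\outneighbor{v}{D}$, $\citeB{v}{D}\cup\citeR{v}{D}$, and $\citeF{v}{D}$, so your first case is the paper's bipartite vertex-cover test (Case~1), and your guess of $u\in C\cap\inneighbor{v}{D}=\citeB{v}{D}$, with $\outneighbor{u}{D}\cap A$ forced into the deletion set, is precisely the paper's enumeration in Case~2. You also argue explicitly that one may assume $V'\subseteq A\cup C$ and that $u$ becomes isolated, so forbidding it costs nothing; these are details the paper dismisses as straightforward.
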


\begin{proof}
    Let $I=(D, J, k,d)$ be an instance of {\deleteimprovement}, where $d=1$ and $J=\{v^{\star}\}$ with $v^{\star}\in \vset{D}$. We compute the four sets $\outneighbor{v^{\star}}{D}$, $\citeF{v^{\star}}{D}$,~$\citeB{v^{\star}}{D}$, and~$\citeR{v^{\star}}{D}$
    , and distinguish between two cases.

    \begin{description}
       \item[Case~1:] $\citeF{v^{\star}}{D}\neq \emptyset$. \hfill

Let $V'=\citeB{v^{\star}}{D}\cup \citeR{v^{\star}}{D}$.
We construct a bipartite graph~$G$ with bipartition $(\outneighbor{v^{\star}}{D}, V')$, where an edge between $v\in V'$ and  $u\in \outneighbor{v^{\star}}{D}$ exists if and only if $v$ cites $u$ in~$D$. To ensure that~$v^{\star}$ attains a {\cd} index of~$1$ by deleting papers, for each $v\in V'$, either~$v$ must be deleted or all its neighbors in~$G$ must be deleted. This is equivalent to selecting a vertex cover of~$G$. Therefore, we return ``\yes'' if and only if $G$ admits a vertex cover of size at most $k$. 
Since a minimum vertex cover in bipartite graphs can be computed in polynomial time~\cite{Koenig1916,LPmatchingtheory1986}, this case can be solved in polynomial time.

       \item[Case~2:] $\citeF{v^{\star}}{D}=\emptyset$. \hfill

       If $\citeB{v^{\star}}{D}=\emptyset$, then $v^{\star}$ has no citations in~$D$, and we return ``\no''. 
       Otherwise, we delete papers from~$\outneighbor{v^{\star}}{D}$ so that there exists a $v\in \citeB{v^{\star}}{D}$ that cites $v^{\star}$ but none of its references. 
       We enumerate all such choices of~$v$. For each enumerated $v\in \citeB{v^{\star}}{D}$, we:
       \begin{enumerate}[label={(\arabic*)}]
           \item  decrease~$k$ by $\abs{\outneighbor{v}{D}\cap \outneighbor{v^{\star}}{D}}$, and delete all papers in $\outneighbor{v}{D}\cap \outneighbor{v^{\star}}{D}$ from~$D$. 
           \item apply the procedure from Case~1.
       \end{enumerate}
       If any branch returns ``\yes'', then $I$ is a {\yesins}; otherwise, $I$ is a {\noins}.
    \end{description}
The correctness of the algorithm and its polynomial running time are straightforward to verify. 
\end{proof}


The next result shows that this tractability is fragile: even with $d=1$, increasing the number of targeted papers from one to two already leads to computational intractability.

\begin{theorem}
\label{thm-del-nph-j-2-d-1}
    For $d=1$, {\deleteimprovement} is {\nph}, even if $\abs{J}=2$ and the citation digraph is acyclic and tripartite.
\end{theorem}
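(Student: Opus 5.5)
We reduce from \prob{Vertex Cover} restricted to tripartite graphs, which the preliminaries note is {\nph}~\cite{KartochvilN1990}. The idea comes from the proof of Theorem~\ref{thm-del-d-1-p-j-1}. For a single distinguished paper and $d=1$, the deletion requirement is a vertex cover in a \emph{bipartite} conflict graph between the references of $v^{\star}$ and the papers citing them, which is why that case is polynomial. With two distinguished papers, a paper can be a reference for one of them and a ``conflicting citer'' for the other. This lets the union of the two bipartite conflict graphs form an arbitrary tripartite graph $G$.

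\textbf{Construction.} Let $(G,\kappa)$ be the instance, with $\vset{G}=V_1\uplus V_2\uplus V_3$ and each $V_i$ independent. The papers are $\vset{G}\cup\{v_1,v_2,x_1,x_2\}$, and we set $J=\{v_1,v_2\}$, $d=1$ and $k=\kappa$. The arcs are as follows:
\begin{itemize}
\item $x_1$ cites $v_1$, and $x_2$ cites $v_2$;
\item $v_1$ cites all of $V_2$, and $v_2$ cites all of $V_3$;
\item every edge of $G$ between $V_a$ and $V_b$ with $a<b$ is oriented from the $V_a$-endpoint to the $V_b$-endpoint.
\end{itemize}
The ordering $(x_1,x_2,v_1,v_2,\overrightarrow{V_1},\overrightarrow{V_2},\overrightarrow{V_3})$ is topological, so $D$ is acyclic. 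The underlying graph is properly 3-colored by $V_1,v_2\mapsto A$, $V_2,x_2\mapsto B$, $V_3,v_1\mapsto C$ and $x_1\mapsto A$, so $D$ is tripartite.

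\textbf{Key step: the conflicts are exactly $\eset{G}$.} For $v_1$, the references are $V_2$. The papers citing some reference without being $v_1$ or a reference of $v_1$ are exactly the $V_1$-endpoints of $V_1$--$V_2$ edges. Such a citer $w$ conflicts with a reference $u$ exactly when $w$ cites $u$, which gives precisely the $V_1$--$V_2$ edges of $G$. Here $x_1$ lies in $\citeF{v_1}{D}$, and $v_2$ cites no paper of $V_2$.

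For $v_2$, the references are $V_3$. The papers citing them are the $V_1$- and $V_2$-endpoints of edges into $V_3$, none of which is cited by $v_2$. This gives the $V_1$--$V_3$ and $V_2$--$V_3$ edges, while $x_2$ lies in $\citeF{v_2}{D}$.

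Hence, for any $V'\subseteq\vset{D}\setminus J$ with $x_1,x_2\notin V'$, both papers of $J$ have {\cd} index~$1$ in $D-V'$ if and only if $V'\cap\vset{G}$ is a vertex cover of $G$. Each conflicting pair $(w,u)$ is destroyed only by deleting $w$ or $u$, since $w$ cites $u$ directly. Moreover, $\citeF{\cdot}{}$ stays nonempty because of $x_1$ and $x_2$.

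\textbf{Correctness.} If $G$ has a vertex cover of size at most $\kappa$, deleting it yields a {\yesins}. Conversely, take a solution $V'$ and remove $x_1,x_2$ from it. This keeps it a solution, since deleting $x_i$ only removes a citer from $\citeF{v_i}{}$ and never resolves a conflict. The remaining $V'\cap\vset{G}$ is then a vertex cover of size at most $k$.

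The main obstacle is the careful bookkeeping in the key step: verifying that no unintended member of $\citeB{}{}$ or $\citeR{}{}$ arises, in particular from $v_1$, $v_2$, $x_1$, $x_2$ or from reciprocal arcs. The orientation chosen above is designed so that this check is direct.
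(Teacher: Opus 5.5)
Your reduction is correct and is essentially the paper's. Both reduce from \prob{Vertex Cover} on tripartite graphs, let each distinguished paper cite one colour class so that the two papers' conflicts together are exactly the edges of $G$, and orient the edges of $G$ along a fixed order of the classes. The paper uses the order $V_3<V_1<V_2$, with $v_1^{\star}$ citing $V_1$ and $v_2^{\star}$ citing $V_2$; this is your construction up to relabelling. The only cosmetic difference is that the paper uses a single paper $x$ citing both $v_1^{\star}$ and $v_2^{\star}$ instead of your two private citers $x_1,x_2$. It then simply notes that $x$ cannot be deleted because it is the only citation, rather than arguing that $x_1,x_2$ can be dropped from a solution.
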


\begin{proof}
    We prove the theorem via a reduction from {\prob{Vertex Cover}} on tripartite graphs
    . Let $I=(G, \kappa)$ be an instance of {\prob{Vertex Cover}}, where~$G$ is tripartite with the tripartition $(V_1,V_2,V_3)$. We construct an instance $g(I)=(D, J, k, d)$ of {\deleteimprovement} with $d=1$ as follows. See Figure~\ref{fig-del-nph-j-2-d-1} for an illustration.
    \begin{description}
        \item[Citation digraph~$D$.] Let $\vset{D}=V_1\cup V_2\cup V_3\cup \{v_1^{\star},v_2^{\star}, x\}$. The arcs of~$D$ are defined as follows:
        \begin{itemize}
            \item The paper $x$ cites both $v_1^{\star}$ and $v_2^{\star}$.
            \item The paper $v_1^{\star}$ cites all papers from $V_1$.
            \item The paper $v_2^{\star}$ cites all papers from $V_2$.
            \item For every edge $\edge{v}{u}\in \eset{G}$ with $v\in V_3$ and $u\in V_1\cup V_2$,~$v$ cites~$u$.
            \item For every edge $\edge{v}{u}\in \eset{G}$ with $v\in V_1$ and $u\in V_2$,~$v$ cites~$u$.
        \end{itemize}
        \end{description}
        The citation digraph~$D$ is acyclic with the topological ordering $(x, v_1^{\star}, v_2^{\star}, \overrightarrow{V_3}, \overrightarrow{V_1}, \overrightarrow{V_2})$. 
Moreover,~$D$ admits a tripartition $\left((V_3 \cup \{v_1^{\star}, v_2^{\star}\}), (V_1 \cup \{x\}), V_2\right)$. 
Finally, we set $J=\{v_1^{\star}, v_2^{\star}\}$ and $k=\kappa$.  We now prove the correctness of the reduction. Note that $\citeB{v_1^{\star}}{D}=\citeB{v_2^{\star}}{D}=\emptyset$.

        \begin{figure}
            \centering
            \includegraphics[width=0.28\textwidth]{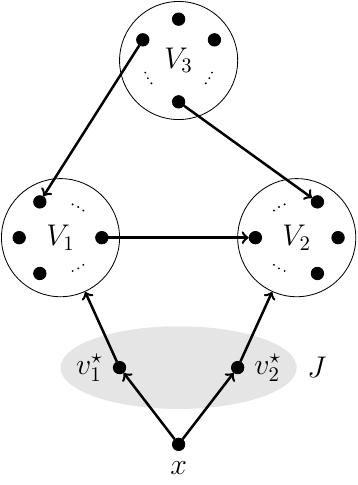}
            \caption{Illustration of the reduction used in the proof of Theorem~\ref{thm-del-nph-j-2-d-1}. 
            }
            \label{fig-del-nph-j-2-d-1}
        \end{figure}

        $(\Rightarrow)$ Assume that~$G$ has a vertex cover~$S$ of at most~$\kappa$ vertices. Then, for both $i \in [2]$, we have $\citeF{v_i^{\star}}{D-S} = \{x\}$ and $\citeR{v_i^{\star}}{D-S} = \emptyset$. Thus, both~$v_1^{\star}$ and~$v_2^{\star}$ have {\cd} index~$1$ in $D-S$, and $g(I)$ is a {\yesins}.

        $(\Leftarrow)$ Assume that there exists $S\subseteq \vset{D}\setminus J$ with~$\abs{S}\leq k$ such that both~$v_1^{\star}$ and~$v_2^{\star}$ have {\cd} index~$1$ in $D-S$. Clearly, $x \notin S$, since~$x$ is the only citation of~$v_1^{\star}$ and~$v_2^{\star}$
        . Thus, $S\subseteq \vset{G}$. Since $k=\kappa$, it suffices to show that~$S$ is a vertex cover of~$G$. Assume for contradiction that $G-S$ contains an edge~$\edge{v}{u}$. If this edge is between~$V_3$ and~$V_i$ where $i\in [2]$, by construction, the endpoint of the edge in~$V_3$ lies in $\citeR{v_i^{\star}}{D-S}$, implying that the {\cd} index of $v_i^{\star}$ is not~$1$ in $D-S$, a contradiction. If the edge is between~$V_1$ and~$V_2$, say $v\in V_1$ and $u\in V_2$, then by construction,~$v$ cites~$u$, and~$u$ is a reference of~$v_2^{\star}$. Hence, $v \in \citeR{v_2^{\star}}{D-S}$, again contradicting the assumption that the {\cd} index of~$v_2^{\star}$ in $D-S$ is~$1$.
\end{proof}

\onlyfull{
We first show the hardness of the following problem.

\EPP{One-Sided Constraint Vertex Cover}{OneSiCons-VC}
{A bipartite graph $G$ with the bipartition $(V_1, V_2)$, and two integers $\kappa$ and $\kappa'$.}
{Does $G$ admit a vertex cover $S$ of at most $\kappa$ vertices such that $\abs{S\cap V_1}\geq \kappa'$?}

We notice that {\prob{OneSiCons-VC}} is related to the {\prob{Constraint Vertex Cover}} problem and the {\prob{Constraint Minimum Vertex Cover}} problem. The former takes as input a bipartite graph with the bipartition $(V_1, V_2)$ coupled with two integers $k_1$ and $k_2$, and determines if the given graph admits a vertex cover~$S$ containing at most~$k_1$ vertices from~$V_1$ and at most~$k_2$ vertices from~$V_2$. The latter requires~$S$ to be a minimum vertex cover of~$G$. Although the former was proved to be {\nph} in the 1980s~\cite{DBLP:journals/dt/KuoF87}, the hardness of the latter was only reported in 2003~\cite{DBLP:journals/jcss/ChenK03}. Despite the extensive study of related problems, {\prob{OneSiCons-VC}}  has not yet been explored in  theoretical computer science, to the best of our knowledge.

\begin{lemma}
\label{lem-one-side-vc-np-hard}
    {\prob{OneSiCons-VC}} is {\nph}.
\end{lemma}

\begin{proof}
    We prove the {\nphns} of {\prob{OneSiCons-VC}} via a reduction from {\prob{Independent Set}} restricted to regular graphs. Let $I=(G, \kappa)$ be an instance of {\prob{Independent Set}} where~$G$ is a $t$-regular graph for some positive~$t$. Let~$G'$ be the graph with the vertex bipartition $(\eset{G}, \vset{G})$ such that there is an edge between $v\in \vset{G}$ and $e\in \eset{G}$ if and only if~$v$ is an endpoint of~$e$ in~$G$. Let $k=\kappa+m-\frac{\kappa \cdot (\kappa-1)}{2}$ and $k'=m-\kappa\cdot t$.
    Our correctness argument for the reduction is as follows.

    $(\Rightarrow)$ Assume that~$G$ has an independent set~$V'$ of at least~$\kappa$ vertices. Then, $V'\cup \{e\in \eset{G} \setmid e\cap S=\emptyset\}$ is a vertex cover of~$G'$ which is of cardinality~$k$ and contains $m-\kappa\cdot t$ vertices from~$\eset{G}$.

    $(\Leftarrow)$ Assume that~$G'$ admits a an vertex cover~$S$ of cardinality at most~$k$ such that $\abs{S\cap \eset{G}}\geq m-\kappa\cdot t$. Without loss of generality, assume that $\abs{S\cap \eset{G}}=m-\kappa\cdot t+x$ where~$x$ is a nonnegative integer. It follows that $\abs{S\cap \vset{G}}\leq \kappa-x$. As~$S$ is a vertex cover of~$G'$, vertices in~$G$ that incident to at least one edge of $\eset{G}\setminus S$ must be contained in $S$. As~$G$ is $t$-regular, the vertices in $S\cap \vset{G}$ span at most $\abs{S\cap \vset{G}}\cdot t\leq (\kappa-x)\cdot t$ edges of~$G$. It follows that $\abs{\eset{G}\setminus S}=m-\abs{S\cap \vset{G}}=\kappa\cdot t+x\leq (\kappa-x)\cdot t$. This is possible only when $x=0$ and the vertices in $S\cap \vset{G}$ span exactly $\kappa\cdot t$ edges in $G$. This implies that $S\cap \vset{G}$ is an independent set of~$\kappa$ vertices.
\end{proof}
}

While Theorems~\ref{thm-del-d-1-p-j-1} and \ref{thm-del-nph-j-2-d-1} concern the case $d=1$, the situation changes dramatically for $d<1$.The following theorem shows that, in this regime, the problem becomes {\wah} with respect to~$k$, even when $\abs{J}=1$ and the citation digraph is acyclic and bipartite. 

\begin{theorem}
\label{thm-delete-improvement-j-paranp-hard}
For any positive rational number $d<1$,  {\deleteimprovement} is {\wah} with respect to~$k$, even if $\abs{J}=1$ and the citation digraph is acyclic and bipartite.
\end{theorem}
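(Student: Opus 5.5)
We reduce from \prob{Clique} on $t$-regular graphs, reusing the edge-counting gadget from the proofs of Theorems~\ref{thm-merge-ell-1-hard-small-compatibility} and~\ref{thm-add-wah-k-j-2-d-smaller}. Let $(G,\kappa)$ be an instance with $G=(V,E)$ $t$-regular, $1\le\kappa<t<\abs{V}$, and set $m_{\text{c}}=t\kappa-\kappa(\kappa-1)/2$. As in those proofs, $m_{\text{c}}$ is the minimum number of edges covered by any $\kappa$ vertices, and this minimum is attained exactly by $\kappa$-cliques. We take $J=\{v^{\star}\}$ and $k=\kappa$.

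\textbf{Construction.} The focal paper $v^{\star}$ cites all vertex-papers $v\in V$. Each edge-paper $p(e)$, for $e=\edge{u}{v}$, cites $u$ and $v$. A set $X$ of papers cites $v^{\star}$ only, so $\citeF{v^{\star}}{D}=X$ and $\citeB{v^{\star}}{D}=\emptyset$. Initially $\citeR{v^{\star}}{D}=P(E)$. Deleting a vertex-paper $v$ removes $v$ from the references of $v^{\star}$. It also moves out of $N_{\text{R}}$ exactly those edge-papers that no longer cite any remaining reference, namely the edges with both endpoints deleted. This is the wrong monotonicity, so we change the gadget: make vertex-papers cite something that edge-papers also cite. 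Let $v^{\star}$ cite a single paper $w$. Let each vertex-paper $v$ cite $w$, so every $v$ lies in $N_{\text{R}}$. Deleting $v$ then removes one paper from $N_{\text{R}}$. To penalize uncovered edges, we want $p(e)$ to cite $w$ unless an endpoint is deleted, which is not directly expressible. The fix is to let $v^{\star}$ cite the vertex-papers, and to let each edge-paper be replaced by $t$-independent padding so that deleting $v$ removes $v$ as a reference and puts its incident edge-papers into play. Concretely, I would aim for the CD index of $v^{\star}$ after deleting $S\subseteq V$ with $\abs{S}\le\kappa$ to have the form $\abs{X}/(\abs{X}+A-\abs{S}+B\cdot\abs{E(G[S])})$, or some equally monotone expression, where $A$ and $B$ come from padding sets $Z$. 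The threshold $d=p/q$ is then met iff $\abs{S}=\kappa$ and $\abs{E(G[S])}=\kappa(\kappa-1)/2$.

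\textbf{Arguing the converse.} We need to show that deleting papers other than vertex-papers is never useful. The standard tool is to replicate each non-vertex paper into $k+1$ twin copies. Every non-vertex gadget paper in $N_{\text{R}}$ or $X$ gets $k+1$ identical copies, so deleting any of them changes the count by at most $k$ out of blocks of size $k+1$. Scaling the target gap by $k+1$ makes such deletions insufficient. We choose $\abs{X}$ and the padding sizes as linear combinations of $n$, $\kappa$, $t\kappa$, and $\kappa(\kappa-1)/2$, as done for general $p/q$ in Theorem~\ref{thm-merge-ell-1-hard-small-compatibility}. Bipartiteness holds by placing $v^{\star}$, the edge-papers, and the padding on one side and the references on the other. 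Acyclicity comes from a topological order such as $(\overrightarrow{X},\overrightarrow{P(E)},v^{\star},\overrightarrow{V})$.

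\textbf{Main obstacle.} The hard step is designing arcs so that deleting the $\kappa$ vertex-papers \emph{rewards} covering few edges: each deletion must turn the edge-papers incident to the deleted vertex from ``harmless'' into ``harmful''. Clique structure then shows up as the quadratic correction $\kappa(\kappa-1)/2$. My first attempt would make the edge-papers cite $v^{\star}$ and both endpoints, placing them in $N_{\text{B}}$. Deleting all endpoints of $e$ moves $p(e)$ to $N_{\text{F}}$, which helps. Then the index increases with the number of fully deleted edges $\abs{E(G[S])}$, while $\abs{S}\le k$ and regularity cap it. The threshold is reached exactly when $E(G[S])$ has $\kappa(\kappa-1)/2$ edges, i.e.\ $S$ is a $\kappa$-clique. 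This version is cleaner, and I would adopt it, tuning $\abs{X}$ and the padding so the target value is exactly $p/q$.
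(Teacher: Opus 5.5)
There is a genuine gap. It starts with a misjudgment in your construction paragraph. The design you discard as having ``the wrong monotonicity'' ($X\to v^{\star}$, $v^{\star}\to V$, $p(e)\to u,v$) is exactly the one that works, and it is the paper's construction.

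After deleting $S\subseteq V$, we have $\citeF{v^{\star}}{D-S}=X$ and $\citeB{v^{\star}}{D-S}=\emptyset$. Moreover, $\citeR{v^{\star}}{D-S}$ loses precisely the edge-papers of edges inside $S$. Shrinking $N_{\text{R}}$ \emph{raises} the index $\abs{X}/(\abs{X}+\abs{P(E)}-\abs{P(E(G[S]))})$. For $\abs{S}\le\kappa$, the quantity $\abs{E(G[S])}$ is at most $\kappa(\kappa-1)/2$, with equality iff $S$ is a $\kappa$-clique. Rewarding many internal edges is exactly what \prob{Clique} needs, and no regularity is required.

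The paper takes $q-p$ copies of each edge-paper and $\abs{X}=p\,(m-\kappa(\kappa-1)/2)$, so deleting a $\kappa$-clique gives exactly $p/q$. The converse is a counting argument. Let $K$ be the deleted vertex-papers, $U'$ the deleted edge-papers, and $E'=E(G[K])$. Then $\abs{U'}+(q-p)\abs{E'}\ge (q-p)\kappa(\kappa-1)/2$ and $\abs{K}+\abs{U'}\le\kappa$. Once $\kappa\ge 4$, this forces $\abs{K}=\kappa$ and $\abs{E'}=\kappa(\kappa-1)/2$; your $(k+1)$-fold replication idea would make this step uniform. The underlying graph is bipartite with sides $\{v^{\star}\}\cup P(E)$ and $X\cup V$.

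The construction you actually adopt does not prove the stated theorem. Each $p(e)$ cites $v^{\star}$ and the endpoints of $e$, and $v^{\star}$ also cites those endpoints. So $p(e)$, $v^{\star}$, $u$ form a triangle in the underlying graph, and $D$ is not bipartite. Your claimed bipartition, with $v^{\star}$ and the edge-papers on the same side, is inconsistent with these arcs. This is unavoidable for any gadget based on $N_{\text{B}}$: a paper in $\citeB{v^{\star}}{D}$ by definition cites both $v^{\star}$ and a reference of $v^{\star}$. That version could be completed into a proof for acyclic digraphs with $\abs{J}=1$, since the index is then $(\abs{X}-\abs{P(E)}+2\abs{P(E(G[S]))})/(\abs{X}+\abs{P(E)})$, but it cannot give the bipartite claim.

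Beyond this, the proposal never fixes the sizes of $X$ and the padding, and never carries out the converse. Your intermediate target $\abs{X}/(\abs{X}+A-\abs{S}+B\cdot\abs{E(G[S])})$ also has the wrong sign. It is maximized by sets with \emph{few} internal edges, which contradicts your ``iff clique'' conclusion.
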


\begin{proof}
    We prove the theorem via a reduction from {\prob{Clique}}. Let $d<1$ be a fixed positive rational number, and write $d=p/q$ in canonical form. Let $I=(G,\kappa)$ be an instance of {\prob{Clique}} with $m=\abs{\eset{G}}$. Without loss of generality, assume $m>{\kappa\cdot (\kappa-1)}/{2}$. We construct an instance of {\deleteimprovement} as follows.
\begin{description}
    \item[Citation digraph~$D$.] For every vertex $v\in \vset{G}$, we create a paper denoted by the same symbol. For each edge $e\in \eset{G}$, we create a set~$P(e)$ of $q-p$ papers. Let $P(\eset{G})=\bigcup_{e\in \eset{G}}P(e)$
    . We have $\abs{P(\eset{G})}=(q-p)\cdot m$. In addition, we create a paper $v^{\star}$, and a set $X$ of $p \cdot \left(m-\frac{\kappa\cdot (\kappa-1)}{2}\right)$ papers. 
    The arcs of~$D$ are defined as follows:
    \begin{itemize}
        \item All papers in $X$ cite the paper $v^{\star}$.
        \item The paper $v^{\star}$ cites all papers in~$\vset{G}$.
        \item For every edge $e=\edge{v}{u}$ in~$G$, all papers in~$P(e)$ cite both~$v$ and~$u$.
    \end{itemize}
\end{description}
It is easy to see that~$D$ is acyclic and bipartite. Setting $J=\{v^{\star}\}$ and $k=\kappa$, the resulting instance of {\deleteimprovement} is $g(I)=(D, J, k, d)$. We now prove the correctness of the reduction. 

$(\Rightarrow)$ Assume that~$G$ contains a $\kappa$-clique~$K$. Let $D'=D-K$. Let $P'=\bigcup_{e\in \eset{G[K]}}P(e)$ be the set of papers created for edges whose endpoints both lie in~$K$. Then $\abs{P'}=(q-p)\cdot {\kappa\cdot (\kappa-1)}/{2}$. 
The {\cd} index of~$v^{\star}$ in~$D'$ is
$\frac{\abs{X}}{\abs{X}+\abs{P(\eset{G})}-\abs{P'}}$. Substituting the values of~$X$, $\abs{P'}$, and $\abs{P(\eset{G})}$ yields ${p}/{q}$.
Hence,~$g(I)$ is a {\yesins}.

$(\Leftarrow)$ Assume that there is a subset $U\subseteq \vset{D}\setminus J$ of at most~$k$ papers such that the {\cd} index of~$v^{\star}$ in $D'=D-U$ is at least~${p}/{q}$. Let $K=U\cap \vset{G}$ and let $U'=U\cap P(\eset{G})$. Clearly,
\begin{equation}
    \label{eq-uu}
    \abs{K}+\abs{U'}\leq \abs{U}\leq k=\kappa.
\end{equation}
Let~$E'$ denote the set of edges in~$G$ whose endpoints both lie in~$K$. Then
\begin{equation}
    \label{eq-ee}
    \abs{E'}\leq\frac{\abs{K}\cdot (\abs{K}-1)}{2}.
\end{equation}
Let $P(E')=\bigcup_{e\in E'}P(e)$ denote the set of papers corresponding to edges in $E'$.
By the construction of $D$, 
\begin{equation}
    \label{eq-pe}
    \abs{P(E')}=(q-p)\cdot \abs{E'}.
\end{equation}
Moreover, the {\cd} index of $v^{\star}$ in $D'$ is
\begin{equation}
\begin{split}
\dindex{v^{\star}}{D'}= \frac{\abs{X\setminus U}}{\abs{X\setminus U}+\abs{P(\eset{G})\setminus (U'\cup P(E'))}} & 
\leq   \frac{\abs{X\setminus U}}{\abs{X\setminus U}+\abs{P(\eset{G})}- \abs{U'}-\abs{P(E')}}\\
&             \leq \frac{\abs{X}}{\abs{X}+m\cdot (q-p)- \abs{U'}-\abs{P(E')}}.
\end{split}
\end{equation}
By assumption, 
\begin{equation}
    \label{eq-cdindex}
    \dindex{v^{\star}}{D'}\geq {p}/{q}.
\end{equation}
A straightforward calculation shows that the inequalities in~\eqref{eq-uu}--\eqref{eq-cdindex} can hold only if 
$\abs{K}=\kappa$, $\abs{E'}={\kappa \cdot (\kappa-1)}/{2}$, and $K=U\cap \vset{D}$ (i.e., $U'=\emptyset$ and $X\setminus U=X$). 
Thus $K$ is a $\kappa$-clique in~$G$, and~$I$ is a {\yesins}.
\end{proof}

We now consider the variant {\deleteimprovementinf}, where the number of paper deletions is unrestricted. Our first result shows that the problem remains computationally hard even under seemingly restrictive conditions.

\begin{theorem}
\label{del-inf-np-hard-j-3}
    {\deleteimprovementinf} is {\nph} even if $\abs{J}=3$ and every paper not in $J$ is either a reference or a citation of some paper in $J$.
\end{theorem}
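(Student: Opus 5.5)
We reduce from \prob{Clique} on $t$-regular graphs, following the template of Theorem~\ref{thm-delete-improvement-j-paranp-hard} and adding a ``soft budget'' paper in the spirit of Theorem~\ref{cor-merge-mani-inf-np-h-j-2-small}. We fix a rational $d<1$, first for a convenient value such as $d=1/2$, and later rescale the auxiliary sets as in the earlier proofs. Let $(G,\kappa)$ be given, with $n=\abs{\vset{G}}$ and $m=\abs{\eset{G}}$. The three distinguished papers are $J=\{v^{\star},u^{\star},w^{\star}\}$.

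\begin{itemize}
\item Every vertex-paper $v\in\vset{G}$ is cited by $v^{\star}$ and itself cites $u^{\star}$, so vertex-papers are references of $v^{\star}$ and citations of $u^{\star}$.
\item For each edge $e=\edge{v}{u}$ we create $q-p$ edge-papers citing $v$ and $u$. Each edge-paper also cites $w^{\star}$, so it is a citation of a distinguished paper.
\item A set $X$ of papers cites only $v^{\star}$.
\item The paper $u^{\star}$ cites a reference $y$, and a set $S$ of papers cites both $y$ and $u^{\star}$. Then the papers of $S$ lie in $\citeB{u^{\star}}{}$ and are citations of $u^{\star}$.
\item The paper $w^{\star}$ cites a reference $z$ that nothing else cites, with sizes chosen so that $w^{\star}$ comfortably exceeds $d$.
\end{itemize}

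With this design every non-$J$ paper is a reference or a citation of some paper in $J$. The citation digraph need not be acyclic, consistent with Table~\ref{tab-summary-add-deletion}.

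The intended behaviour is as follows. For $v^{\star}$, deleting a set $K$ of vertex-papers removes from $\citeR{v^{\star}}{}$ exactly the edge-papers both of whose endpoints lie in $K$. For $u^{\star}$, $\citeF{u^{\star}}{}$ consists of the vertex-papers, and $\citeB{u^{\star}}{}$ consists of $S$, which can only shrink by deletions. Hence deleting more than $\kappa$ vertex-papers pushes $u^{\star}$ below $d$ once $\abs{S}$ is tuned. Concretely, we choose $\abs{S}$ so that $\dindex{u^{\star}}{}\ge d$ if and only if at most $\kappa$ vertex-papers are deleted. We choose $\abs{X}$ so that $\dindex{v^{\star}}{}\ge d$ if and only if at least $(q-p)\binom{\kappa}{2}$ edge-papers leave $\citeR{v^{\star}}{}$. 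With these choices, the forward direction, deleting a $\kappa$-clique, is a direct calculation as in Theorem~\ref{thm-delete-improvement-j-paranp-hard}.

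The backward direction requires showing that no other deletion helps.

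\begin{itemize}
\item Deleting papers of $X$, or deleting $z$, only lowers some index.
\item Deleting $y$ removes $S$ from $\citeB{u^{\star}}{}$ and would make $u^{\star}$ trivially satisfied. This defeats the soft budget, so the construction must be adjusted. I would try giving $u^{\star}$ many references $y_1,\dots,y_r$ and letting each paper of $S$ cite all of them. Then destroying $\citeB{u^{\star}}{}$ requires deleting all $y_i$; these deletions are unlimited, though, so this adjustment alone does not close the gap.
\item A more robust alternative avoids references of $u^{\star}$ altogether. We make $u^{\star}$'s denominator come from $\citeR{u^{\star}}{}$ papers that must be kept alive by $w^{\star}$. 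That is, papers counted negatively for $u^{\star}$ are counted positively for $w^{\star}$, so deleting them hurts $w^{\star}$. The third distinguished paper exists precisely to couple the sets in this way.
\item Deleting edge-papers directly also helps $v^{\star}$. To block this, we make the edge-papers the only members of $\citeF{w^{\star}}{}$ and give $w^{\star}$ a fixed $\citeR{w^{\star}}{}$ so that it tolerates deletion of fewer than $q-p$ of them. More cleanly, the $w^{\star}$ threshold forbids deleting any edge-paper at all.
\end{itemize}

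Once only vertex-papers are deleted, the counting is exactly that of Theorem~\ref{thm-delete-improvement-j-paranp-hard}. At most $\kappa$ vertices are deleted, because of $u^{\star}$. At least $\binom{\kappa}{2}$ edges must be induced, because of $v^{\star}$. Regularity of $G$ then forces a $\kappa$-clique.

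The main obstacle is this interlocking of the three thresholds. Each auxiliary set must be protected from deletion by some other distinguished paper, without introducing papers that are neither references nor citations of $J$. The hardest step is the exhaustive case check that deleting any auxiliary paper strictly hurts one of $v^{\star}$, $u^{\star}$, $w^{\star}$. Here I would first try the ``mutual protection'' design, in which each auxiliary set is simultaneously in $\citeB{}{}\cup\citeR{}{}$ of one distinguished paper and in $\citeF{}{}$ of another. The final step is the routine rescaling to general $p/q$ with divisibility padding.
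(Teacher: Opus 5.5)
\textbf{The gap is in the budget mechanism, and it is structural, not a missing case check.} In {\deleteimprovementinf} deletions cost nothing. So any paper outside $J$ that lies in $\citeB{j}{}$ or $\citeR{j}{}$ of a distinguished paper $j$ will simply be deleted, unless that hurts another distinguished paper.

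Your soft budget fails even before $y$ is considered. Deleting all of $S$ leaves $\dindex{u^{\star}}{}=1$ as long as one vertex-paper survives, so $u^{\star}$ bounds nothing.

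The ``mutual protection'' idea does not repair this. Making a negative set of one distinguished paper positive for another only moves the need for an undeletable negative mass to that other paper. The only undeletable papers are the three in $J$, so each distinguished paper has at most two undeletable negative contributors. With a constant $d$, the condition $(1-d)\abs{\citeF{j}{}}\ge(1+d)\abs{\citeB{j}{}}+d\abs{\citeR{j}{}}$ then forces only a constant number of positive contributors to survive. Chains or cycles of protection among deletable sets yield only homogeneous ratio constraints. These cannot impose an absolute bound such as ``at least $n-\kappa$ vertex-papers survive''. A ``fixed'' $\citeR{w^{\star}}{}$ does not exist either, since its members are deletable.

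The same problem hits $v^{\star}$. One can delete all but a constant number of edge-papers, together with all deletable negatives of $w^{\star}$, without violating $w^{\star}$. This empties $\citeR{v^{\star}}{}$ up to a constant, so the clique counting collapses. With a fixed $d$, the interlocking you call the main obstacle cannot be achieved this way. You are also targeting a stronger statement than the theorem, which does not fix $d$.

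\textbf{The missing idea is to let the threshold depend on the instance and to take the undeletable negative contributor from $J$ itself.} The paper does not reduce from Clique. It modifies the tripartite Vertex Cover reduction of Theorem~\ref{thm-del-nph-j-2-d-1} as follows:
\begin{itemize}
\item add $v^{\star}_3$, which cites $v^{\star}_1$ and is cited by every vertex-paper;
\item let $v^{\star}_2$ cite $v^{\star}_1$;
\item replace $x$ by $n-\kappa$ papers citing only $v^{\star}_2$;
\item set $d=\frac{n-\kappa}{n-\kappa+1}$.
\end{itemize}

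Now $v^{\star}_2\in\citeR{v^{\star}_3}{}$ cannot be deleted. Writing $S$ for the deleted vertex-papers, the index of $v^{\star}_3$ is $\frac{\abs{\vset{G}\setminus S}}{\abs{\vset{G}\setminus S}+1}$, which forces $\abs{S}\le\kappa$.

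Since $d>2/3$, one surviving vertex-paper in $\citeR{v^{\star}_1}{}$ already violates the threshold for $v^{\star}_1$, because $v^{\star}_1$ is cited only by the undeletable $v^{\star}_2$ and $v^{\star}_3$. The same holds for $v^{\star}_2$, because $v^{\star}_3\in\citeR{v^{\star}_2}{}$ already. Together these two conditions are exactly the statement that $S$ is a vertex cover.

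The only deletable auxiliary papers lie in $\citeF{v^{\star}_2}{}$ and in no negative set, so no protection scheme is needed. Using Vertex Cover also avoids the large undeletable set of edge-papers that your Clique counting requires.
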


\begin{proof}
    We prove the theorem by modifying the reduction in the proof of Theorem~\ref{thm-del-nph-j-2-d-1}:
    \begin{itemize}
        \item We add a paper $v^{\star}_3$ to $J$, so that $J=\{v^{\star}_1, v^{\star}_2, v^{\star}_3\}$.
        \item We let $v^{\star}_3$ cite $v^{\star}_1$, and let it be cited by all papers in~$\vset{G}$.
        \item We let $v_2^{\star}$ cite $v_1^{\star}$.
        \item Let~$n=\abs{\vset{G}}$. 
        We replace~$x$ with a set $X$ of $n-\kappa$ papers, each citing only $v^{\star}_2$.
        \item We set $d=\frac{n-\kappa}{n-\kappa+1}$.
    \end{itemize}
    Without loss of generality, assume that $n-\kappa\geq 3$; then $d>2/3$. See Figure~\ref{fig-del-inf-np-hard-j-3} for an illustration of the reduction.
    \begin{figure}
        \centering
        \includegraphics[width=0.3\textwidth]{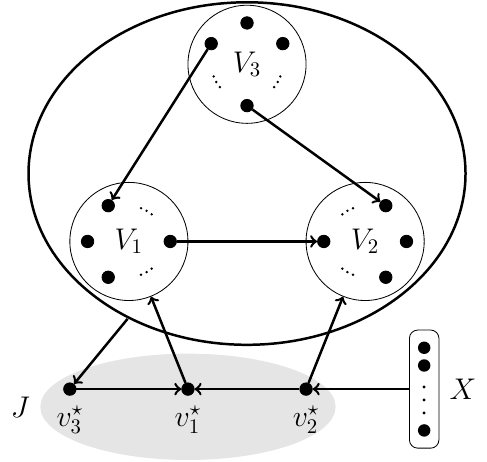}
        \caption{Illustration of the reduction in the proof of Theorem~\ref{del-inf-np-hard-j-3}.}
        \label{fig-del-inf-np-hard-j-3}
    \end{figure}
    Let~$g(I)$ denote the constructed instance of {\deleteimprovementinf}. 
     We prove its correctness below.

     $(\Rightarrow)$ Assume that~$G$ has a vertex cover $S$ of at most $\kappa$ vertices. Consider the citation digraph $D'=D-S$. Since~$S$ is a vertex cover, no paper cites any reference of~$v^{\star}_1$ in~$D'$, i.e., $\citeB{v^{\star}_1}{D'}\cup \citeR{v^{\star}_1}{D'}=\emptyset$. As~$v^{\star}_2$ and~$v^{\star}_3$ both cite~$v^{\star}_1$, we know that the {\cd} index of $v^{\star}_1$ in~$D'$ is~$1$. Furthermore, no paper in $\vset{G}-S$ cites any paper in $V_2\setminus S$, the set of references of~$v^{\star}_2$ in~$D'$ that lie in~$\vset{G}$. Hence, the {\cd} index of~$v^{\star}_2$ in~$D'$ is
     $\frac{\abs{X}}{\abs{X}+\abs{\{v^{\star}_3\}}}=\frac{n-\kappa}{n-\kappa+1}=d$. Finally, the {\cd} index of~$v^{\star}_3$ in~$D'$ is
     \[\frac{\abs{\vset{G}\setminus S}}{\abs{\vset{G}\setminus S}+\abs{\{v^{\star}_2\}}}\geq \frac{n-\kappa}{n-\kappa+1}=d.\] 
     Therefore,~$g(I)$ is a {\yesins}.

     $(\Leftarrow)$ Assume that there exists $V'\subseteq \vset{D}\setminus J$ such that the {\cd} index of each paper in~$J$ in the citation digraph $D'=D-V'$ is at least $d$. Let $S=V'\cap \vset{G}$.
     
     \begin{claim}
     \label{claim-aaa}
         $S$ is a vertex cover of $G$.
     \end{claim}
     
     \begin{proof}
         Assume for contradiction that $G-S$ contains an edge~$\edge{v}{u}$. Two cases are possible.
         \begin{description}
             \item[Case~1:] $\edge{v}{u}$ is between $V_3$ and $V_1$. \hfill

             Let $v\in V_3$ and $u\in V_1$. By construction,~$u$ is a reference of~$v^{\star}_1$,~$v$ cites~$u$, and~$v$ and~$v^{\star}_1$ do not cite each other. That is, $v\in \citeR{v^{\star}_1}{D'}$. Then, the {\cd} index of $v^{\star}_1$ in $D'$ is
             $\frac{\abs{\{v^{\star}_3,v^{\star}_2\}}}{\abs{\{v^{\star}_3,v^{\star}_2\}}+\abs{\citeR{v^{\star}_1}{D'}}}\leq 
             \frac{2}{3}<d$, a contradiction.

             \item[Case~2:] $\edge{v}{u}$ is between $V_1\cup V_3$ and $V_2$. \hfill

             Let $v\in V_3\cup V_1$ and $u\in V_2$. By construction,~$u$ is a reference of~$v^{\star}_2$,~$v$ cites~$u$ but does not cite~$v^{\star}_2$. Thus the {\cd} index of~$v^{\star}_2$ in~$D'$ is at most
             $\frac{\abs{X\setminus V'}}{\abs{X\setminus V'}+\abs{\{v^{\star}_3, v\}}}\leq
              \frac{\abs{X}}{\abs{X}+\abs{\{v^{\star}_3, v\}}}=
              \frac{n-\kappa}{n-\kappa+2}<d$, a contradiction.
         \end{description}
         Thus~$S$ is a vertex cover of~$G$.
     \end{proof}
The {\cd} index of $v^{\star}_3$ in~$D'$ is
    $\frac{\abs{\vset{G}\setminus S}}{\abs{\vset{G}\setminus S}+\abs{\{v^{\star}_2\}}}$. 
     Since this value is at least $d=\frac{n-\kappa}{n-\kappa+1}$, it follows that $\abs{S}\leq \kappa$. 
   By Claim~\ref{claim-aaa}, $I$ is a {\yesins}.
     \end{proof}

\onlyfull{\begin{theorem}
\label{thm-del-inf-np-hard-j-3}
    {\deleteimprovementinf} is {\nph}, even if $d=3/2$, $\abs{J}=3$, and the citation digraph is acyclic.
\end{theorem}

\begin{proof}
    To prove the theorem, we adapt the reduction presented in the proof of Theorem~\ref{thm-add-wah-k-j-2-d-smaller}. In particular, we change the sizes of the sets $X$, $Y$, and $Z$, and introduce one more paper $v_3^{\star}$ in $J$ so that all papers from which do not want to be deleted cite the newly introduced paper, and arrange the arcs among $Y$, $Z$ and $v_3^{\star}$. Precisely, 
    we redefine $X$ as a set of $\frac{m}{3}+\frac{2\kappa}{3}+1-\frac{4m_c}{3}$ papers, redefine $Y$ as a set of $3m_c$ papers, and redefine $Z$ as a set of $\frac{2m}{3}+\frac{\kappa}{3}+2-\frac{8m_c}{3}$ papers. 
    In addition, we let all papers from $X\cup \{\Tilde{v}\}\cup \eset{G}$ cite $v_3^{\star}$. Moreover, we arbitrarily select one paper $y\in Y$, let $v_3^{\star}$ cites $y$, and let all papers from $Z\cup Y\setminus \{y\}$ cite $y$. It is not difficult to see that $D$ remains acyclic. Finally, we redefine $d=2/3$. Other parts of the reduction remain the same. 
    It should be mentioned that to ensure that $X$, $Y$, and $Z$ is well defined, we need to assume that $\kappa$ is divisible by $6$, which does not change the {\nphns} of {\prob{Clique}} restricted to regular graphs\footnote{Given an instance $(G, \kappa)$ of \prob{Clique}, we construct a graph $G'$ whose vertex set are six copies of $\vset{G}$. For each $v\in \vset{G}$, let $\{v(i)\}_{i\in [6]}$ denotes the set of the six copies of $v$. We add edges so that for every $v\in \vset{G}$, all its copies form a clique. Then, for every edge $\edge{u}{v}\in \eset{G}$, we create $36$ edges among all copies of $u$ and all copies of $v$. It is easy to see that $G'$ is $6t$-regular if $G$ is $t$-regular. Moreover, $G$ has a clique of $\kappa$ vertices if and only if $G'$ has a clique of $6\kappa$ vertices}. We prove the correctness of the reduction as follows. 

    $(\Rightarrow)$ If $G$ has a clique of $\kappa$ vertices, after removing the corresponding papers, by elementary calculation one see that the {\cd} indices of each of $v_1^{\star}$, $v_^{\star}$, and $v_3^{\star}$ is $2/3$.

    $(\Leftarrow)$ Assume that there is a subset $V'\vset{D}\setminus J$ such that the {\cd} indices of each of $v_1^{\star}$, $v_^{\star}$, and $v_3^{\star}$ in $D-V'$ is at least $2/3$. Notice that $V'$ contains at most $n-\kappa$ papers from $\vset{G}$, since otherwise the {\cd} index of $v_1^{\star}$ in $D-V'$ can be at most $\frac{\abs{\vset{G}\setminus U'}}{ss}$
\end{proof}
}

The above hardness result relies on the presence of directed cycles in the citation digraph. The complexity of {\deleteimprovementinf} on acyclic citation digraphs remains open. 

\begin{openquestion}
    Is {\deleteimprovementinf} {\nph} when the citation digraph is acyclic?
\end{openquestion}

Despite this, we obtain positive results when the number of distinguished papers is small and the citation digraph is acyclic. 
In particular, when there is only one distinguished paper~$v^{\star}$, the problem can be solved trivially. If~$v^{\star}$ has at least one citation, we can assign it the maximal {\cd} index of $1$ by deleting all papers except~$v^{\star}$ and an arbitrary paper citing it. 
We extend the polynomial-time solvability to the case where $\abs{J} = 2$. The following lemma is useful. 

\begin{lemma}
    \label{lem-remove-unrelated}
Let $D$ be a citation digraph and let $u,v \in \vset{D}$ such that $u$ does not cite $v$, and $v$ has at least one non-reciprocal citation 
(i.e., $\citeF{v}{D}\cup \citeB{v}{D}\neq \emptyset$). 
Then, 
$\abs{\dindex{v}{D-\{u\}}}\geq \abs{\dindex{v}{D}}$.
\end{lemma}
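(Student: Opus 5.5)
The plan is a case analysis on the role of $u$ relative to $v$ in $D$. We compare the three sets $\citeF{v}{D}$, $\citeB{v}{D}$, $\citeR{v}{D}$ with the corresponding sets in $D'=D-\{u\}$. Write $f,b,r$ for their sizes in $D$ and $f',b',r'$ for their sizes in $D'$. Since $u$ does not cite $v$, $u\notin\citeF{v}{D}\cup\citeB{v}{D}$. So the nonempty set of non-reciprocal citations of $v$ survives in $D'$, which gives $f'+b'=f+b>0$. Hence both indices are defined and their denominators are positive. It then suffices to check, in each case, how the numerator $f-b$ and the term $r$ change.

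\textbf{Case 1: $u\notin\outneighbor{v}{D}$.} Then $\outneighbor{v}{D'}=\outneighbor{v}{D}$, and for every $w\neq u$ we have $\outneighbor{w}{D'}\cap\outneighbor{v}{D'}=\outneighbor{w}{D}\cap\outneighbor{v}{D}$. So no paper other than $u$ changes category, and $f'=f$ and $b'=b$. If $u\in\citeR{v}{D}$, then $r'=r-1$. Otherwise $u$ lies in no category and $r'=r$. In both subcases the numerator is unchanged and the positive denominator does not increase, so $\abs{\dindex{v}{D'}}\geq\abs{\dindex{v}{D}}$. This is the ``unrelated paper'' situation the lemma is named after, and it is the situation in which I expect the lemma to be applied to the $\abs{J}=2$ algorithm.

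\textbf{Case 2: $u\in\outneighbor{v}{D}$, so $v$ cites $u$ but not conversely.} Deleting $u$ removes one reference of $v$. Let $x$ be the number of papers in $\citeB{v}{D}$ whose only common reference with $v$ is $u$, and let $y$ be the number of such papers in $\citeR{v}{D}$. The former move to $\citeF{v}{D'}$ and the latter drop out of all categories. This gives $f'-b'=f-b+2x$ with $f'+b'=f+b$, and $r'=r-y$. If $f-b\geq 0$, the numerator grows in absolute value and the denominator shrinks, so the inequality follows as in Case 1.

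\textbf{Main obstacle.} The hard part is Case 2 with $f-b<0$. There a move from B to F can shrink $\abs{f-b}$. For example, $f=0$, $b=2$, $x=1$ takes the index from $-1$ to $0$, so the inequality fails. My first step would be to check whether this case is excluded wherever the lemma is invoked: there $u$ should be a non-neighbor of $v$, or the index should be nonnegative because the target threshold $d$ is positive. If so, I would state and prove the lemma under that hypothesis, via Case 1 together with the $f\geq b$ part of Case 2. As worded, the statement does not follow from this case analysis when $v$ cites $u$ and $f<b$.
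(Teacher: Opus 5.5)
Your case analysis is correct, and the obstacle you flag is real. It is a gap in the paper's own proof, not in yours. The paper's argument is exactly your Case~1. It notes that $u\notin\citeF{v}{D}\cup\citeB{v}{D}$ and then writes $\dindex{v}{D-\{u\}}$ with the same $f$ and $b$ and with $\citeR{v}{D}\setminus\{u\}$ in the denominator. This tacitly assumes that deleting $u$ changes no paper's category except possibly $u$'s own. That holds only when $v$ does not cite $u$. When $u\in\outneighbor{v}{D}$, citations of $v$ whose only shared reference with $v$ is $u$ move from $\citeB{v}{D}$ to $\citeF{v}{D-\{u\}}$, as in your Case~2.

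Your counterexample is valid, and an acyclic digraph realizes it. Take the arcs $(v,u)$, $(v,u')$, $(w_1,v)$, $(w_1,u)$, $(w_2,v)$, $(w_2,u')$. All hypotheses hold: $u$ does not cite $v$, and $v$ has non-reciprocal citations. In $D$ we get $\dindex{v}{D}=(0-2)/2=-1$. In $D-\{u\}$ we get $w_1\in\citeF{v}{D-\{u\}}$ and $w_2\in\citeB{v}{D-\{u\}}$, so $\dindex{v}{D-\{u\}}=0$. So the lemma as worded is false.

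Your proposed repair is the right one, with one correction about where it is needed. In the proof of Theorem~\ref{thm-inf-poly-j-2}, the lemma is used to delete every paper of $\vset{D}\setminus J$ that cites neither $v^{\star}_1$ nor $v^{\star}_2$. Such a paper may well be a reference of $v^{\star}_1$ or $v^{\star}_2$, so your Case~1 alone (the ``non-neighbor'' version) does not cover this use.

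What does cover it is the signed form of your $f\ge b$ argument. Suppose $u$ does not cite $v$, $v$ has a non-reciprocal citation, and $\dindex{v}{D}\ge 0$. In both of your cases the numerator does not decrease and stays nonnegative. The denominator does not increase and stays at least $f+b>0$. Hence $\dindex{v}{D-\{u\}}\ge\dindex{v}{D}$.

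In any solution both distinguished papers have index at least $d>0$. So deleting the extra papers one at a time from a solution never pushes an index below $d$, and the algorithm remains correct with this corrected lemma. The absolute-value formulation is not what the algorithm uses anyway: in Case~1 with $f<b$ the absolute value grows while the signed index drops, which does not matter there.
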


\begin{proof}
    Let $D'=D-\{u\}$. By definition, $\dindex{v}{D}=\frac{\abs{\citeF{v}{D}}-\abs{\citeB{v}{D}}}{\abs{\citeF{v}{D}}+\abs{\citeB{v}{D}}+\abs{\citeR{v}{D}}}$. Since~$u$ does not cite~$v$, it holds $u\not \in ({\citeB{v}{D}\cup \citeF{v}{D}})$. Consequently,
    \[\dindex{v}{D'}=\frac{\abs{\citeF{v}{D}}-\abs{\citeB{v}{D}}}{\abs{\citeF{v}{D}}+\abs{\citeB{v}{D}}+\abs{\citeR{v}{D}\setminus \{u\}}}.\] Since $v$ has at least one non-reciprocal citation in~$D$, both $\dindex{v}{D}$ and $\dindex{v}{D'}$ are well-defined. 
    If $\abs{\citeF{v}{D}}\geq \abs{\citeB{v}{D}}$, $\dindex{v}{D'}\geq \dindex{v}{D}$; otherwise, $\dindex{v}{D'}\leq \dindex{v}{D}$, with equality when $u\not \in \citeR{v}{D}$.
\end{proof}

Using the above lemma, we obtain the following result.

\begin{theorem}
\label{thm-inf-poly-j-2}
For any positive rational number $d$,  {\deleteimprovementinf} is polynomial-time solvable when $\abs{J}\leq 2$ and the citation digraph is acyclic.
\end{theorem}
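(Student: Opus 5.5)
The plan is to show that, once a constant amount of information is guessed, there is a \emph{canonical} deletion set. It then suffices to enumerate polynomially many candidates and check each one directly. The case $\abs{J}=1$ was handled before Lemma~\ref{lem-remove-unrelated}. So let $J=\{v_1,v_2\}$. By acyclicity at most one of the two papers cites the other; by symmetry assume $v_2$ does not cite $v_1$. Since $d>0$, in any solution $D'=D-V'$ each $v_i$ satisfies $\nciteF{v_i}{D'}>\nciteB{v_i}{D'}\ge 0$. Hence each $v_i$ retains at least one citation in $\citeF{v_i}{D'}$.

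\textbf{Step 1 (pruning papers unrelated to $J$).} I would first delete every paper that cites neither $v_1$ nor $v_2$. Such a paper can only contribute to $\citeR{v_i}{}$. Its deletion keeps the numerator and shrinks the denominator, so by the argument of Lemma~\ref{lem-remove-unrelated} a positive index does not decrease. Deleting references of $v_i$ outside $J$ also never turns a member of $\citeF{v_i}{}$ into something worse: it only shrinks $\outneighbor{v_i}{}$, which moves papers from $\citeB{v_i}{}$ to $\citeF{v_i}{}$ and removes papers from $\citeR{v_i}{}$. One has to check that this does not harm the other distinguished paper. I would do this by a short case analysis on whether the deleted paper is a reference of $v_j$, using that $v_2$ is not a reference-citer of $v_1$.

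\textbf{Step 2 (guessing and canonical set).} I would guess which of the $O(1)$ relations among $v_1,v_2$ survive: whether $v_1$ cites $v_2$, and whether $v_j$ lies in $\citeF{v_i}{}$, $\citeB{v_i}{}$ or $\citeR{v_i}{}$. I would also guess one witness $a_i\in\citeF{v_i}{D'}$ for each $i$, giving $O(n^2)$ choices. Given the guesses, the natural canonical solution deletes all references of $v_1,v_2$ outside $J$, as far as these deletions are consistent with the guessed relations. It keeps every citer of $v_i$ that lands in $\citeF{v_i}{}$ for both $i$ simultaneously, and deletes every citer that would land in some $\citeB{}$ or $\citeR{}$. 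Deleting a member of $\citeB{v_i}{}$ raises $\frac{F-B}{F+B+R}$, and deleting a member of $\citeR{v_i}{}$ raises any positive index. Keeping extra members of $\citeF{v_i}{}$ is monotone in the good direction once the index is positive. After these moves the kept set is computable by a simple fixpoint, and I then just evaluate both indices.

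\textbf{Step 3 (correctness).} Take any solution $D'$ and transform it step by step into the canonical solution for the matching guess, never letting either index drop below $d$. Acyclicity is used to ensure there is no cyclic dependency. A paper that is a citation of $v_2$ and simultaneously a reference of $v_1$ would otherwise prevent us from treating ``keep'' and ``delete'' independently. With a topological order, only the single possible arc between $v_1$ and $v_2$ couples the two indices, and that arc is fixed by the guess.

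The main obstacle I expect is a paper $w\notin J$ that cites both $v_1$ and $v_2$. Such a $w$ may lie in $\citeF{v_1}{}$ but in $\citeB{v_2}{}$, so keeping it helps one index and hurts the other. For this step I would add to the guess, for each $i$, whether such conflicting papers are kept. I would then argue that deleting all of them or keeping all of them is optimal, via a ratio comparison: the index $\frac{F-B}{F+B+R}$ is a mediant-type expression, and deleting a conflicting paper changes $(F,B)$ of $v_1$ by $(-1,0)$ and of $v_2$ by $(0,-1)$. If this dichotomy fails, I would instead show that only a bounded number of conflicting papers ever need to be kept, and enumerate those.
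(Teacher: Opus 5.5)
Your proposal has a genuine gap. Step~3 carries the whole correctness argument, yet it is only asserted. The ``main obstacle'' is also left open, with two hedges: a dichotomy you do not prove and a fallback bound you do not derive.

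The reason you give for why Step~3 should work is wrong. Acyclicity does not exclude a paper $c$ with $v_1\to c\to v_2$, i.e., a citation of $v_2$ that is a reference of $v_1$. More generally, a topological order does not make the arc between $v_1$ and $v_2$ the only coupling between the two indices, because papers outside $J$ couple them too. Examples:
\begin{itemize}
\item a citer of $v_1$ that is a reference of $v_2$;
\item a citer of $v_2$ that cites such a paper;
\item a member of $\citeF{v_1}{D'}$ that is not adjacent to $v_2$ but cites one of its references, and so lies in $\citeR{v_2}{D'}$.
\end{itemize}
This coupling is exactly what produces {\no}-instances. Suppose $v_1,v_2$ are non-adjacent, no paper cites both, every citer of $v_1$ is a reference of $v_2$, and every citer of $v_2$ cites every citer of $v_1$. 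Then each $v_i$ can reach index~$1$ on its own, but not both at once. Your canonical solution is itself only loosely specified (``as far as consistent with the guessed relations'', ``simple fixpoint''). So nothing in the proposal shows it is a solution whenever one exists.

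The missing idea is that no ratio trade-off is ever needed. Since $d\le 1$ and no index exceeds~$1$, any solution can be shrunk to one where both indices equal~$1$. Let $D'$ be a solution with witnesses $a_i\in\citeF{v_i}{D'}$ (your guess).
\begin{itemize}
\item \textbf{$v_1,v_2$ non-adjacent.} Keep only $J\cup\{a_1,a_2\}$ and let $D''=D[J\cup\{a_1,a_2\}]$. Here $a_2$ is the only possible reference of $v_1$: acyclicity forbids $v_1\to a_1$, and non-adjacency forbids $v_1\to v_2$. If $v_1$ cites $a_2$, then $a_2$ is a reference of $v_1$ in $D'$, so $a_1$ does not cite it. Also $v_2$ does not cite $a_2$, since $a_2$ cites $v_2$. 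Hence $\citeB{v_1}{D''}=\citeR{v_1}{D''}=\emptyset$ while $a_1\in\citeF{v_1}{D''}$, and symmetrically for $v_2$.
\item \textbf{$v_1$ cites $v_2$.} Keep only $J\cup\{a_1\}$. Then $a_1$ does not cite $v_2$, a reference of $v_1$ in $D'$. Acyclicity gives $v_2$ no references, so $v_1$ itself is the $\mathrm{F}$-witness of $v_2$. Both indices are~$1$.
\end{itemize}
Consequently the answer does not depend on~$d$. Checking $D[J\cup S]$ for every set $S$ of at most two papers outside $J$ decides the instance.

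The paper reaches the same conclusion by an explicit case analysis over $A_1,A_2,B,C_1,C_2$. Every {\yes}-certificate there keeps at most two papers outside $J$, and every {\no}-case shows some $v_i^\star$ can never obtain a positive index.
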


\begin{proof}
    Let $I=(D, J, d)$ be an instance of {\deleteimprovementinf} with  $\abs{J}\leq 2$ and $d$  a positive rational number.

    We first consider the case $\abs{J}=1$, where $J=\{v^{\star}\}$. The algorithm proceeds as follows: delete all references of~$v^{\star}$ in $\vset{D}\setminus J$, and return ``\yes'' if the {\cd} index of~$v^{\star}$ in the resulting citation digraph is at least~$d$, and ``\no'' otherwise.

    We now consider the case $\abs{J}=2$, where~$J=\{v^{\star}_1, v^{\star}_2\}$. If $\citeF{v^{\star}_1}{D}\cup \citeB{v^{\star}_1}{D}= \emptyset$ or $\citeF{v^{\star}_2}{D} \cup \citeB{v^{\star}_2}{D}= \emptyset$, we return ``{\no}''. Otherwise, by Lemma~\ref{lem-remove-unrelated}, we remove from~$D$ all papers in $\vset{D}\setminus J$ citing neither~$v^{\star}_1$ nor~$v^{\star}_2$. Hence, we may assume the following.
    \medskip
    
\noindent{\bf{Assumption}.} {\it{Every paper in $\vset{D}\setminus J$ cites at least one paper in~$J$, and either paper in~$J$ admits at least one citation.}}
\medskip

We consider first the case where no arc exists between~$v^{\star}_1$ and~$v^{\star}_2$ by partitioning $\vset{D}\setminus J$ into three sets:
\begin{itemize}
    \item $B$ is the set of papers citing both $v^{\star}_1$ and $v^{\star}_2$,
    \item $A$ is the set of papers citing~$v^{\star}_1$ but not~$v^{\star}_2$, and
    \item $C$ is the set of papers citing~$v^{\star}_2$ but not~$v^{\star}_1$.
\end{itemize}
If $B\neq\emptyset$, return ``\yes'', since after deleting all papers except those in $J\cup B$, both papers in~$J$ have a {\cd} index of~$1$. 
Assume now $B=\emptyset$. Then, $A\neq \emptyset$ and $C\neq\emptyset$. We further divide~$A$ into two sets~$A_1$ and~$A_2$, where~$A_2$ consists of papers in~$A$ cited by~$v^{\star}_2$, and~$A_1=A\setminus A_2$. Analogously, let $C_2$ be the set of papers in $C$ cited by $v^{\star}_1$, and let $C_1=C\setminus C_2$. Since~$D$ is acyclic, at least one of~$A_2$ and~$C_2$ is empty. By symmetry, assume $C_2=\emptyset$. It follows that $C_1\neq\emptyset$. If $A_1\neq\emptyset$, we return ``\yes''. In this case, after deleting all papers except those in $J\cup A_1\cup C_1$, the {\cd} index of both papers in~$J$ is~$1$. Otherwise, $A_2\neq \emptyset$. If there exist $v\in C_1$ and $u\in A_2$ with~$v$ not citing~$u$, we return ``\yes''. In this case, after deleting all papers except those in $J\cup \{v,u\}$, both papers from~$J$ have a {\cd} index of~$1$. Otherwise, every paper in~$C_1$ cites every paper in~$A_2$. Then  $C_1=\citeB{v^{\star}_2}{D}$ and $\citeF{v^{\star}_2}{D}=\emptyset$, and thus $v^{\star}_2$ cannot attain a positive {\cd} index by deleting papers from $\vset{D}\setminus J$. Therefore, we return ``\no''. 

We now consider the case where there is an arc between $v^{\star}_1$ and~$v^{\star}_2$. By symmetry, assume that $v^{\star}_1$ cites~$v^{\star}_2$. Let $A_1$, $A_2$, $B$, $C_1$, and~$C_2$ be as defined previously
. Since~$D$ is acyclic, $A_2=\emptyset$. An illustration is given in Figure~\ref{fig-citation-j-2}.
\begin{figure}
    \centering
    \includegraphics[width=0.2\textwidth]{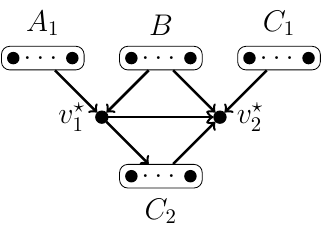}
     \caption{Illustration of the citation digraph when~$v^{\star}_1$ cites~$v^{\star}_2$ in the proof of Theorem~\ref{thm-inf-poly-j-2}.}
     \label{fig-citation-j-2}
\end{figure}
The algorithm proceeds as follows. 
\begin{itemize}
    \item If $A_1\neq \emptyset$, return ``\yes'', since in $D[J\cup A_1]$ both $v^{\star}_1$ and $v^{\star}_2$ have a {\cd} index of $1$. 
    \item Otherwise, return ``\no''. In this case, only papers in~$B$ cite $v^{\star}_1$, and each of them also cites $v^{\star}_2$, a reference of~$v^{\star}_1$. Hence, $\citeF{v^{\star}_1}{D'}=\emptyset$ for all subgraphs~$D'$ of~$D$, and for every $u\in B$ it holds that $u\in \citeB{v^{\star}_1}{D'}$ for all subgraphs~$D'$ of~$D$ containing~$J\cup \{u\}$. Thus, the {\cd} index of~$v^{\star}_1$ cannot be positive in any subgraph of~$D$ containing~$J$. 
\end{itemize}
\onlyfull{
Now we present an algorithm for general citation digraphs.
\begin{description}
\item[General citation digraphs.] \hfill
\end{description}

 If the citation digraph is acyclic, we run the above algorithm. Therefore, we assume now that~$D$ contains at least one cycle.
 Recall that every paper from $\vset{D}\setminus J$ cites at least one paper from~$J$.

\begin{enumerate}[label={(\arabic*)}]
      \item Run the first three steps described above for acyclic citation digraphs.

 \item If there exist $v\in \vset{D}\setminus J$ and $v^{\star}\in J$ such that~$v$ and~$v^{\star}$ reciprocally
  cite each other, but there is no arc between~$v$ and the other paper from~$J$, we delete~$v$ from~$D$.

 \item Delete from~$D$ all papers in $\vset{D}\setminus J$ that are references of both~$v^{\star}_1$ and~$v^{\star}_2$.
    \end{enumerate}

The polynomial running time of the algorithm is easy to verify.
}
The running time of the above algorithms is polynomial, which completes the proof.
\end{proof}




Finally, we consider the special threshold case $d = 1$. Here, the problem exhibits more nuanced complexity behavior. On the one hand, we prove a {\wahns} lower bound with respect to the parameter $\abs{J}$, even for acyclic citation digraphs (Theorem~\ref{thm-d-1-delete-inf-wah-ell}). On the other hand, we provide an {\xp} algorithm with respect to $\abs{J}$ (Theorem~\ref{thm-del-d-1-xp-j}), demonstrating that the problem is solvable in polynomial time for fixed $\abs{J}$.

\begin{theorem}
\label{thm-d-1-delete-inf-wah-ell}
For $d=1$, {\deleteimprovementinf} is {\wah} with respect to~$\abs{J}$, even when the citation digraph is acyclic, and every paper outside~$J$ either cites or is cited by some paper in~$J$.
\end{theorem}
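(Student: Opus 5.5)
The plan is to adapt the \prob{Multicolored Clique} construction from the proof of Theorem~\ref{thm-add-wah-j-k}, replacing ``add a paper'' by ``keep a paper''. With $d=1$ and no deletion budget, a solution is just the set of papers outside $J$ that survive. The reduction has to force exactly one survivor per color class and exactly one survivor per pair of classes, with the survivors forming a clique.

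Let $I=(G,(V_1,\dots,V_\kappa))$ with $\kappa\ge 2$. I build $D$ exactly as in Theorem~\ref{thm-add-wah-j-k}:
\begin{itemize}
\item a paper for each $v\in\vset{G}$;
\item an edge-paper $p(e)$ for each $e\in\eset{G}$, grouped into the sets $P_{\{i,j\}}$;
\item papers $x_i$ for $i\in[\kappa]$ and $y_{\{i,j\}}$ for $\{i,j\}\subseteq[\kappa]$.
\end{itemize}
The citations are also the same:
\begin{itemize}
\item every $v\in V_i$ cites $x_i$;
\item every $p(e)\in P_{\{i,j\}}$ cites $y_{\{i,j\}}$ and all of $(V_i\cup V_j)\setminus e$;
\item $y_{\{i,j\}}$ cites all of $V_i\cup V_j$.
\end{itemize}
Set $J=X\cup Y$, so $\abs{J}=\kappa+\kappa(\kappa-1)/2$, and $d=1$. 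The ordering $(\overrightarrow{P(\eset{G})},\overrightarrow{Y},\overrightarrow{\vset{G}},\overrightarrow{X})$ is topological, so $D$ is acyclic. Every vertex-paper cites some $x_i$ (and is cited by some $y$), and every edge-paper cites some $y$, so the side condition holds.

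$(\Rightarrow)$ Given a multicolored clique $K$, I delete every vertex-paper outside $K$ and every edge-paper not in $P(K)$. Each $x_i$ has no references and keeps its citation from $K\cap V_i$, so its {\cd} index is $1$. For $y_{\{i,j\}}$, the surviving references are the two clique vertices $v\in V_i$ and $u\in V_j$. Its only surviving citation is $p(\{v,u\})$, which cites only deleted vertices. The only other papers that could cite a reference of $y_{\{i,j\}}$ are edge-papers of other pairs, and the surviving ones are again clique edges that cite only deleted vertices. So $\citeB{y_{\{i,j\}}}{}$ and $\citeR{y_{\{i,j\}}}{}$ are both empty and the index is $1$.

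$(\Leftarrow)$ Let $D'$ be the digraph after deletion.
\begin{itemize}
\item Each $x_i$ needs a surviving citation, so at least one vertex of $V_i$ survives.
\item Each $y_{\{i,j\}}$ needs a surviving citation, so some $p(e)$ with $e=\{v,u\}$, $v\in V_i$, $u\in V_j$, survives.
\item Index $1$ forbids any citation of $y_{\{i,j\}}$ from citing a surviving reference of it. Since $p(e)$ cites every vertex of $V_i\cup V_j$ other than $v,u$, the surviving part of $V_i\cup V_j$ is contained in $\{v,u\}$.
\end{itemize}
Combined with the first point, this means exactly $v$ survives in $V_i$ and exactly $u$ survives in $V_j$, and they are adjacent. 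Applying this to every pair, the unique survivors of the classes are pairwise adjacent, so they form a multicolored clique. This is the same counting as Cases~1 and~2 of Theorem~\ref{thm-add-inf-wah-j-d-1}.

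The step I expect to be the main obstacle is the $(\Rightarrow)$ check that no surviving paper falls into $\citeR{y_{\{i,j\}}}{D'}$. The only candidates are edge-papers of pairs sharing one class with $\{i,j\}$, and the argument relies on the fact that we are free to delete all non-clique edge-papers. Since the parameter $\abs{J}$ depends only on $\kappa$, the reduction is a parameterized reduction, which gives \wahns.
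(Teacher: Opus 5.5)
Your $(\Leftarrow)$ direction is sound, but $(\Rightarrow)$ fails at exactly the step you flagged as the main obstacle. You claim that only edge-papers can cite a reference of $y_{\{i,j\}}$, and this overlooks the other distinguished papers. For each $l\notin\{i,j\}$, the paper $y_{\{i,l\}}$ cites all of $V_i\cup V_l$, so it cites the surviving clique vertex $v\in V_i$. Since there are no arcs inside $Y$, $y_{\{i,l\}}$ is not a neighbor of $y_{\{i,j\}}$. Hence $y_{\{i,l\}}\in\citeR{y_{\{i,j\}}}{D'}$, and likewise $y_{\{l,j\}}\in\citeR{y_{\{i,j\}}}{D'}$. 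For $\kappa\ge 3$ the index of $y_{\{i,j\}}$ is therefore $1/(2\kappa-3)<1$. These papers lie in $J$ and cannot be deleted, and every solution must keep some vertex of each $V_i$ (for $x_i$). So your instance is a no-instance for every $G$ once $\kappa\ge 3$.

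This is structural, not a bookkeeping slip. Suppose two papers both have {\cd} index $1$ and share a reference $w$. A non-neighbor citing $w$ lands in $N_{\text{R}}$, and a non-reciprocal citation citing $w$ lands in $N_{\text{B}}$. So each of the two papers must cite the other, which an acyclic digraph forbids. Consequently, a surviving vertex-paper can be a reference of at most one distinguished paper.

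The missing idea is to give each pair private copies of the classes and to enforce consistency separately. One way to do this:
\begin{itemize}
\item For each ordered pair $(i,j)$ with $i\neq j$, take a copy $V_i^j=\{v^j\setmid v\in V_i\}$ and a distinguished paper $x_i^j$ cited by all of $V_i^j$.
\item Let $y_{\{i,j\}}$ cite $V_i^j\cup V_j^i$.
\item Let $p(e)$, for $e=\edge{v}{u}$, cite $y_{\{i,j\}}$ and $(V_i^j\cup V_j^i)\setminus\{v^j,u^i\}$.
\item For $j<l$ and distinct $v,w\in V_i$, let $v^j$ cite $w^l$.
\end{itemize}
If inconsistent copies $v^j,w^l$ with $j<l$ survive, then $v^j\in\citeR{y_{\{i,l\}}}{D'}$. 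In the forward direction, no two distinguished papers share a surviving reference. The digraph remains acyclic if the copies are ordered by superscript, the side condition holds, and $\abs{J}=O(\kappa^2)$.

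The paper's own proof uses a different construction from yours: no edge-papers, a single paper $z$ citing all of $Y$, and arcs $u\to v$ between nonadjacent vertex-papers of different classes. But there too $y_{\{i,j\}}$ cites all of $V_i\cup V_j$, so its forward direction hits the same obstruction. Its backward step is also off: $u$ is itself a reference of $y_{\{i,j\}}$, so the arc $u\to v$ cannot spoil that paper's index. So the paper's argument does not supply the missing ingredient either.
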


\begin{proof}
 We prove the theorem via a reduction from {\prob{Multicolored Clique}}. Given an instance $I=(G, (V_1, V_2, \dots, V_{\kappa}))$ of {\prob{Multicolored Clique}}, we construct an instance $g(I)=(D, J, d)$ of {\deleteimprovementinf} with $d=1$. An illustration is provided in Figure~\ref{fig-d-cd-del-inf-wah-j}. 

\begin{description}
    \item[Citation digraph~$D$.] For each vertex $v\in \vset{G}$, we introduce a paper denoted by the same symbol. In addition,  we create a paper~$z$, a set $X=\{x_1,\dots, x_{\kappa}\}$ of $\kappa$ papers, and a set
$Y=\{y_{\{i,j\}}\}_{\{i,j\}\subseteq [\kappa]}$ of ${\kappa\cdot (\kappa-1)}/{2}$ papers. 
The arcs of~$D$ are defined by the following citations:
     \begin{itemize}
         \item Paper $z$ cites all papers in $Y$.
         \item For all $i\in [\kappa]$, all papers in $V_i$ cite the paper $x_i$.
         \item For all pairs $\{i,j\}\subseteq [\kappa]$, the paper $y_{\{i,j\}}$ cites all papers in $V_i\cup V_j$.
         \item For all $i,j\in[\kappa]$ with $i<j$, and all $u\in V_i$, $v\in V_j$ that are nonadjacent in~$G$, the paper~$u$ cites~$v$.
     \end{itemize}
The citation digraph~$D$ is acyclic with topological ordering $(z, \overrightarrow{Y}, \overrightarrow{V_1}, \overrightarrow{V_2},\dots, \overrightarrow{V_{\kappa}}, \overrightarrow{X})$. 
\end{description}
Let $J=X\cup Y$. Clearly, $\abs{J}=\kappa+\frac{\kappa \cdot (\kappa-1)}{2}$. 
\begin{figure}
    \centering
    \includegraphics[width=0.4\textwidth]{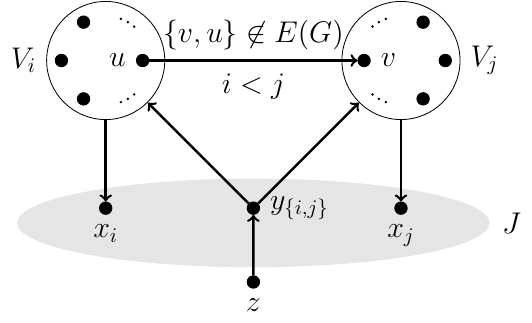}
    \caption{Illustration of the reduction in the proof of Theorem~\ref{thm-d-1-delete-inf-wah-ell}.}
    \label{fig-d-cd-del-inf-wah-j}
\end{figure}

$(\Rightarrow)$ If~$G$ has a $\kappa$-colored clique~$K$, then deleting all papers in $\vset{G}\setminus K$ from $D$ yields that every paper in $J$ has {\cd} index~$1$.

$(\Leftarrow)$ Assume that there exists $V'\subseteq \vset{D}\setminus J$ such that every paper in~$J$ has {\cd} index~$1$ in $D-V'$. Note that $z\not\in V'$, since~$z$ is the only citation of papers in~$Y$
. Moreover, each paper~$x_i$, where $i\in [\kappa]$, may have a {\cd} index of~$1$ in $D-V'$ only if $V_i\setminus V'\neq\emptyset$. We arbitrarily select one paper from each $V_i\setminus V'$, and let~$K$ be the set of the $\kappa$ selected papers. We claim that~$K$ is a clique in~$G$. Assume for contradiction that two papers $u, v \in K$ are nonadjacent in $G$. Let $u \in V_i$ and $v \in V_j$ with $i < j$. By construction, $u$ cites $v$. Since $v$ is a reference of $y_{\{i,j\}}$, the citation from $u$ to $v$ means that~$y_{\{i,j\}}$ cannot have {\cd} index $1$ in $D - V'$, a contradiction. 
\end{proof}


\begin{theorem}
\label{thm-del-d-1-xp-j}
  For $d=1$, {\deleteimprovementinf}  is in {\xp} with respect to~$\abs{J}$. 
\end{theorem}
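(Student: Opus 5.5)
The plan is to show that, whenever a solution exists, there is one of a very restricted form: keep only $J$ together with one "witness" citation for each distinguished paper, and delete everything else. Such solutions can be enumerated in $\abs{\vset{D}}^{\abs{J}}$ time, which gives membership in {\xp}.

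\textbf{Step 1 (structural claim).} Let $D'=D-V'$ be any citation digraph in which every $v\in J$ has {\cd} index $1$, and let $D''$ be an induced subgraph of $D'$ with $J\subseteq \vset{D''}$. Suppose that for each $v\in J$ the graph $D''$ contains some paper $f(v)\in \citeF{v}{D'}$. I claim every $v\in J$ then still has {\cd} index $1$ in $D''$. The argument uses that $D''$ is induced, so arcs and adjacency among its vertices are exactly as in $D'$.
\begin{itemize}
\item The references of $v$ can only shrink: $\outneighbor{v}{D''}\subseteq\outneighbor{v}{D'}$.
\item $f(v)$ still cites $v$ and is not cited by $v$. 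It cites no reference of $v$ in $D''$, so $f(v)\in\citeF{v}{D''}$ and the denominator is nonzero.
\item Suppose some $u\in\citeB{v}{D''}\cup\citeR{v}{D''}$. Then $u$ cites a reference $w$ of $v$, and $u$ is either a non-reciprocal citation of $v$ or not adjacent to $v$. Both facts hold identically in $D'$, so $u\in\citeB{v}{D'}\cup\citeR{v}{D'}$. This contradicts {\cd} index $1$ in $D'$.
\end{itemize}
Hence $\citeB{v}{D''}=\citeR{v}{D''}=\emptyset$, and $v$ has {\cd} index $1$ in $D''$.

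\textbf{Step 2 (algorithm).} Enumerate all functions $f\colon J\to\vset{D}$ with $f(v)\in\inneighbor{v}{D}\setminus\outneighbor{v}{D}$ for every $v\in J$; the value $f(v)$ may itself lie in $J$. For each $f$, let $\mathrm{Im}(f)$ be its image, delete $V'=\vset{D}\setminus(J\cup\mathrm{Im}(f))$, and test in polynomial time whether every paper of $J$ has {\cd} index $1$ in $D[J\cup \mathrm{Im}(f)]$. Return ``\yes'' if some $f$ passes and ``\no'' otherwise. Note that $V'\cap J=\emptyset$, so each candidate deletion set is admissible for {\deleteimprovementinf}.

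\textbf{Step 3 (correctness and running time).} Soundness is immediate, because each tested graph is of the form $D-V'$ with $V'\subseteq\vset{D}\setminus J$. For completeness, take any solution $V'$ and, for each $v\in J$, pick $f(v)\in\citeF{v}{D-V'}$; this set is nonempty because the index is $1$. This $f$ is one of the enumerated functions, since a non-reciprocal citation in $D-V'$ is also one in $D$. Applying Step 1 with $D''=D[J\cup\mathrm{Im}(f)]$, which is an induced subgraph of $D-V'$, shows that $f$ passes the test. There are at most $\abs{\vset{D}}^{\abs{J}}$ functions, so the running time is $\abs{\vset{D}}^{O(\abs{J})}$.

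The only delicate point is the heredity argument in Step 1: checking that passing to an induced subgraph can never create new members of $\citeB{v}{\cdot}$ or $\citeR{v}{\cdot}$. It can, by contrast, destroy $F$-witnesses, which is exactly why one witness per paper of $J$ must be guessed explicitly.
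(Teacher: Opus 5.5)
Your proof is correct and is essentially the paper's argument. The paper treats $J$ as the registered papers and $\vset{D}\setminus J$ as the unregistered ones, then invokes the algorithm of Theorem~\ref{thm-d-1-add-xp-j}. That algorithm likewise guesses one citing paper for each distinguished paper lacking a registered non-reciprocal citation, and checks the induced subgraph on $J$ plus the guessed papers. You inline this directly, and you usefully spell out the heredity argument that the paper dismisses as following ``directly from the construction'': passing to an induced subgraph that contains the witnesses cannot create members of $N_{\text{B}}$ or $N_{\text{R}}$.
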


\begin{proof}
    Let $I=(D, J, d)$ be an instance of {\deleteimprovementinf} with $d=1$. We solve $I$ using the polynomial-time algorithm for {\addimprovement} when $\abs{J}$ is constant (Theorem~\ref{thm-d-1-add-xp-j}). 
Observe that deleting papers from $\vset{D}\setminus J$ is equivalent to 
selecting a subset $U \subseteq \vset{D}\setminus J$ of papers to retain. 
Hence, $I$ is equivalent to determining whether there exists a subset 
$U \subseteq \vset{D}\setminus J$ such that every paper in~$J$ has 
{\cd} index~$1$ in $D[J \cup U]$. Therefore, we return ``\yes'' if and 
only if there exists $k \in \{0, \dots, \abs{V(D) \setminus J}\}$ such that $(D, (J,\vset{D}\setminus J), J, k, d)$ is a {\yesins} of {\addimprovement}. 
\end{proof}

\section{Concluding Remarks}
\label{sec-conclusion}
In an era of paper inflation, the use of qualitative indices to valuate the performance of researchers is becoming increasingly popular. As a relatively new metric, the {\cd} index has gained widespread attention in related studies over the past few years. Prior research has primarily focused on experimental analyses—identifying influential papers with higher {\cd} indices in specific research subfields, tracking trends in {\cd} indices, and statistically examining factors (e.g., the number of authors, agenda balance, geometric distribution of authors, etc.) that influence the trends in {\cd} indices or a paper’s {\cd} index, among others.  
These studies have provided valuable insights into the role of the {\cd} index in assessing academic impact. However, its reliability has also been questioned due to the lack of rigorous theoretical analysis. Without a solid theoretical foundation, concerns remain about its susceptibility to manipulation and its overall validity as a bibliometric measure. 

In this work, we initiated the study of the {\cd} index manipulation problems, offering a comprehensive understanding of their parameterized complexity for several significant parameters. See Tables~\ref{tab-summary-merge} and~\ref{tab-summary-add-deletion} for a summary of our concrete results. In general, we showed that these problems are hard to solve, with only a few exceptions where certain parameters are small numbers. This conveys that, at least from a theoretical point of view, the {\cd} index exhibits a high degree of reliability in the aspect of preventing manipulation. 

Rather than concluding at this juncture, we aspire for our initiative to ignite further comprehensive investigations aimed at fully addressing questions surrounding the applicability of the CD index. Specifically, we suggest the following topics for future research:

\begin{itemize}
    \item A traditional methodology for evaluating a measurement is the study of its axiomatic properties. Therefore, the first direction for future research is to explore which desired properties related to the measurement of the novelty of a paper are satisfied by the {\cd} index. Is there a complete characterization of the {\cd} index in this regard?
    
    \item To provide a more practical direction, our study encourages meticulous experimental exploration to verify whether the {\cd} manipulation problems are difficult to solve in practice.

    \item Returning to our theoretical focus, it is intriguing to study other meaningful parameterizations. Notably, our results for {\mergemanipulation} remain applicable even when the compatibility graph comprises a matching, thereby ruling out fixed-parameter tractability of the problem concerning the treewidth of the compatibility graph. Future investigations could explore the potential influence of the treewidth (or other parameters) of the citation digraph on the parameterized complexity of this problem.

    \item Solving the open questions listed in this paper also presents a promising avenue for further research.
\end{itemize}

\section*{Acknowledgments}
The author thanks the anonymous reviewers of IJCAI~2024 for their valuable feedback and insightful comments on a preliminary version of this work.

%
\end{document}